\def\lt{\biggl}                  \def\rt{\biggr}
\newcommand{\C}{{\mathbb C}}
\newcommand{\Z}{{\mathbb Z}}\newcommand{\E}{{\mathbb E}}
\newcommand{\1}{{ \mathbf  1}}
\newcommand{\R}{{\mathbb R}}
\newcommand{\const}{\mathrm{const}\, }
\renewcommand{\Pr}{\mathbf{Pr}}
\newcommand{\mesh}{\delta}
\newcommand{\Od}{\Omega^\mesh}
\newcommand{\dOd}{\widetilde{\Omega}^\mesh}
\newcommand{\dOm}{\widetilde{\Omega}}
\newcommand{\Cvr}{\varpi}
\newcommand{\Corn}{\Upsilon}
\newcommand\Conf{\mathrm{Conf}}
\renewcommand\mod{\mathrm{mod}}
\newcommand\wind{\mathrm{w}}
\newcommand\pa{\partial}
\newcommand\wt{\widetilde}
\newcommand\vare{\varepsilon}
\renewcommand\Re{\mathrm{Re}}
\renewcommand\Im{\mathrm{Im}}
\def\cF{{\mathcal F}}
\def\cH{{\mathcal H}}
\def\rZ{{\mathrm Z}}
\def\res{\mathop{\mathrm{res}}\limits}
\def\cB{{\mathcal B}}
\newcommand\dist{\mathrm{dist}}
\theoremstyle{remark}
\theoremstyle{plain}
\newtheorem{theorem}{Theorem}[section]
\newtheorem*{theorem*}{Theorem}
\newtheorem*{corollary*}{Corollary}
\newtheorem{definition}[theorem]{Definition}
\newtheorem{lemma}[theorem]{Lemma} 
\newtheorem{corollary}[theorem]{Corollary} 
\newtheorem{proposition}[theorem]{Proposition} 
\theoremstyle{remark}
\newtheorem{rem}[theorem]{Remark}
\newcommand\reflist[1]{#1$\vphantom{a}^\circ$}
\date{}
\title[Holomorphic spinor observables in the Ising model]
{Holomorphic spinor observables\\ in the critical Ising model}
\author[D. Chelkak]{Dmitry Chelkak$^\mathrm{a,c}$}
\thanks{}
\email{}
\author[K. Izyurov]{Konstantin Izyurov$^\mathrm{b,c}$}
\thanks{\textsc{${}^\mathrm{A}$ St.Petersburg Department of Steklov Mathematical Institute (PDMI RAS).
Fontanka~27, 191023 St.Petersburg, Russia.}}
\thanks{\textsc{${}^\mathrm{B}$ Section de Math\'ematiques, Universit\'e de Gen\`eve.
2-4 rue du Li\`evre, Case postale~64, 1211 Gen\`eve 4, Suisse. {\it Current address:}
Department of Mathematics and Statistics, University of Helsinki, P.O. box 68  00014 Helsinki, Finland }}
\thanks{\textsc{${}^\mathrm{C}$ Chebyshev Laboratory, Department of Mathematics and
Mechanics, Saint-Petersburg State University. 14th Line, 29b, 199178 Saint-Petersburg,
Russia.}}
\thanks{{\it E-mail addresses:} \texttt{dchelkak@pdmi.ras.ru, konstantin.izyurov@helsinki.fi}}
\begin{document}

\begin{abstract}
We introduce a new version of discrete holomorphic observables for the critical planar Ising
model. These observables are holomorphic spinors defined on double covers of the original
multiply connected domain. We compute their scaling limits, and show their relation to the
ratios of spin correlations, thus providing a rigorous proof to a number of formulae for those ratios predicted by CFT arguments.
\end{abstract}

\maketitle

\newcounter{Listcounter}

\section{Introduction}

\subsection{The critical Ising model and Smirnov's holomorphic observables.}
The two-dimensional Ising model is one of the most well-studied models in statistical mechanics. Given a discrete planar domain $\Od$ (a bounded subset of the square grid), the Ising model in $\Od$ can be viewed either as a random assignment of spins to the \emph{faces} of $\Od$, or a random collection of edges of $\Od$, with an edge drawn between each pair of faces having different spins. The partition function of the model is given by
\[
\sum\limits_{\sigma:\mathcal{F}(\Od)\to\{-1;1\}} \exp\lt[\frac{1}{T}\sum_{f\sim f'}\sigma(f)\sigma(f')\rt]
\quad
\text{ or }
\quad
\sum\limits_{S\in\Conf(\Od)} x^{|S|},
\]
respectively, where $\mathcal{F}(\Od)$ denotes the set of faces of $\Od$ and $\Conf(\Od)$ is the set of subgraphs $S$ of $\Od$ such that all vertices of $\Od$ have even degrees in $S$. We refer the reader to Section~\ref{Sec: notations} for a more detailed discussion and notation. We will be interested in the properties of the model at the critical temperature $T=2\log^{-1}(\sqrt{2}+1)$, which corresponds to $x=\sqrt{2}-1$. This value of $x$ will be fixed throughout the paper.

Discrete holomorphic observables, also called holomorphic fermions or fermionic observables, were proposed by Smirnov in~\cite{Smirnov06} as a tool to study the critical Ising model, although similar objects appeared earlier in~\cite{KadanoffCeva} and~\cite{Mercat01} without discussing
corresponding boundary value problems. Since then, these observables proved to be very useful for a
rigorous analysis of the planar Ising model at criticality in the scaling limit
when $\Od$ approximates some continuous domain $\Omega$ as the lattice mesh $\mesh$ tends to
zero.

Recall that Smirnov's fermionic observable is defined as 
\begin{equation}
\label{SmirnovF}
F(a,z):= (-in_a)^{-1/2}\!\! \sum_{S\in\Conf_{a,z}}x^{|S|}e^{-i\wind(\gamma)/2},
 \end{equation}
where $\Conf_{a,z}$ is the set of edge subsets $S$, such that $S$ can be decomposed into a disjoint collection of loops and a simple lattice path $\gamma$ connecting a boundary edge $a$ to the midpoint $z=z_e$ of an inner edge $e$; $\wind(\gamma)$ is the winding number of~$\gamma$; and 
$n_a\in\{\pm 1,\pm i\}$ denotes the orientation of the outgoing boundary edge $a$. With this definition, the observable has been shown to be discrete holomorphic and satisfy Riemann-type boundary conditions
\begin{equation}
\label{eq: rim_bc}
F(a,z)\sqrt{in_z}\in \R, \quad z\in\partial\Od\backslash \{a\}.
\end{equation}
This led to a proof of its convergence to a conformally covariant scaling limit \cite{CHS2}.

This result has been the main ingredient of the recent progress in rigorous understanding of conformal invariance in the critical two-dimensional Ising model. The martingale property of $F(a,z)$ (see further details in~\cite{CHS2}) allows one to prove convergence of the Ising interfaces to the chordal
Schramm's $\mathrm{SLE}_3$ curves. Using a slightly different version of this observable, Hongler and Smirnov~\cite{HonSmi} were
able to compute the scaling limit of the energy density in the critical Ising model on the
square grid, including the lattice dependent constant before the conformally covariant
factor. This result was later extended to all correlations of the energy density field and
certain boundary spin correlations \cite{HThesis}.

At the same time, similar observables
proved to be very useful in the analysis of the random cluster (Fortuin-Kasteleyn)
representation of the critical Ising model ~\cite{Smirnov06, RivaCardy06, Smirnov10,CHS2,DHNolin}.
In particular, it was shown by Beffara and Duminil-Copin~\cite{BeffaraDuminil10} that they
can be used to give a short proof of criticality of the Ising model at the self-dual point.

Many of these results generalize beyond the case of square grid approximations. Thus, convergence of fermionic observables has been proven for isoradial lattices \cite{CHS2}, which reappeared in the connection with the critical Ising model in the paper of
Mercat~\cite{Mercat01}. This proved the universality phenomenon, i.e., the fact that a
microscopic structure of the lattice does not affect macroscopic properties of the scaling
limit. Moreover, discrete complex analysis technique developed in \cite{CHS} and~\cite{CHS2} provides a general framework for such universal proofs.

On the other hand, one of the most natural questions about the Ising model -- the rigorous proof of conformal covariance of spin correlations in the scaling limit -- remained out of reach until recently. The goal of the present work is to introduce a new tool -- spinor holomorphic observables -- that allows to attack this problem. In particular, we prove convergence of ratios of spin correlations corresponding to different boundary conditions to conformally invariant limits. In a subsequent joint paper with Cl\'ement Hongler \cite{CHHI}, using a more elaborate version of the spinor observables, we prove conformal covariance of spin correlations themselves.

\subsection{Spinor holomorphic observables and ratios of spin correlations}

In this paper we extend the study of fermionic observables to the case  of multiply connected
domains. Given a double cover $\Cvr:\dOd\to\Od$ of such a domain, we define the observable $F_\Cvr(a,\cdot):\dOd\to\C$ by
\begin{equation}
F_\Cvr(a,z):=(-in_a)^{-1/2}\!\! \sum_{S\in\Conf_{\Cvr(a),\Cvr(z)}}x^{|S|}e^{-i\wind(\gamma)/2}(-1)^{l(S)+\1_{\gamma:a\to z}},
\end{equation}
where $a,z\in \dOd$, but the sum is taken over \emph{the same set of configurations} as before; $l(s)$ is the number of loops in $S$ that do not lift as closed loops to $\dOd$, and $\1_{\gamma:a\to z}$ is the indicator of the event that $\gamma$ lifts to $\dOd$ as a path running from $a$ to $z$ (and not to the other sheet), see Section~\ref{Sec: observables_limits} for detailed discussion. In other words, we plug into (\ref{SmirnovF}) an additional sign that depends on homology class of $S$ modulo two. It is worth to mention that our observables should be closely related to the vector bundle Laplacian technique applied to uniform spanning trees and double dimers by Kenyon~\cite{Ken10, Ken11}, although at the moment we do not know any exact correspondence of that sort.

Our main observation is that $F_\Cvr(a,z)$ are discrete holomorphic and satisfy the boundary conditions (\ref{eq: rim_bc}), just like Smirnov's observable $F(a,z)$. The definition implies that $F_\Cvr(a,z)=-F_\Cvr(a,z^{\ast})$, if $z\ne z^{\ast}$ belong to a fiber of the same point; hence, we call $F_\Cvr$ \emph{holomorphic spinors}.

To describe the scaling limits of $F_\Cvr(a,\cdot)$, we will introduce the continuous holomorphic spinors $f_\Cvr(a,\cdot)$. Roughly speaking, these are fundamental solutions to the continuous Riemann boundary value problem (\ref{eq: rim_bc}) on the double-cover $\wt{\Omega}$, with a singularity at $a$ and the property $f_\Cvr(z)\equiv -f_\Cvr(z^{\ast})$.  Postponing precise definitions until Section~\ref{Sec: observables_limits}, we now state our first main result (see Theorem \ref{thm: mainconv_int_1}):

{\renewcommand{\thetheorem}{\Alph{theorem}}

\begin{theorem}
Suppose that $\Od$ is a sequence of discrete domains of mesh size $\delta$ approximating (in the sense of Carath\'eodory) a continuous finitely connected domain $\Omega$, and that $a^\mesh\in \pa\Od$ tends to $a\in\pa\Omega$ as $\mesh\to 0$. Then there is a sequence of normalizing factors $\beta(\delta)=\beta(\delta;\Od,a^\delta,\Cvr)$ such that
\[
\beta(\delta) F_\Cvr(a^\delta,\cdot)\to f_\Cvr(a,\cdot),\quad \delta\to 0
\]
uniformly on compact subsets of $\Omega$.
\end{theorem}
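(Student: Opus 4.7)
The strategy is the standard Chelkak--Smirnov machinery for convergence of discrete holomorphic observables from \cite{CHS2}, adapted to the double-cover spinor setting. The argument has four steps: fixing the normalization near the singularity, extracting subsequential limits via a priori bounds, identifying the limit as a continuous fundamental spinor, and a uniqueness statement for $f_\Cvr(a,\cdot)$.

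First, in a small disk around $a$ that does not encircle any hole of $\Omega$ the double cover is trivializable, so $F_\Cvr(a^\delta,\cdot)$ agrees (up to a fixed choice of sheet) with Smirnov's classical observable $F(a^\delta,\cdot)$, whose square-root-type singular behavior near $a$ is pinned down in \cite{CHS2}. I would choose $\beta(\delta)$ so that the coefficient of this singular part matches that of $f_\Cvr(a,\cdot)$ at $a$; with this choice the claimed convergence already holds on a neighborhood of $a$ by the known convergence of $F$ itself.

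Next, decompose
\[
\beta(\delta) F_\Cvr(a^\delta,\cdot) = S^\delta + R^\delta,
\]
where $S^\delta$ is a cutoff of an explicit full-plane singular spinor at $a$ and $R^\delta$ is a discrete holomorphic spinor on $\dOd$ with no singularity. Then $R^\delta$ satisfies inhomogeneous Riemann boundary conditions with data $-S^\delta|_{\pa\Od}$ that is uniformly controlled and vanishes on any fixed compact subdomain away from $a$. Viewing $\dOd$ as a single (larger) discrete surface on which $R^\delta$ is single-valued, the a priori estimates of \cite{CHS,CHS2} -- maximum-type principles and Harnack bounds for $|R^\delta|^2$, together with boundary regularity for discrete Riemann boundary value problems -- yield uniform boundedness and equicontinuity on compact subsets of $\dOm\setminus\{a\}$. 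A diagonal extraction then produces a subsequential limit $f$. Passing to the limit, $f$ is a holomorphic spinor on $\dOm\setminus\{a\}$ satisfying the continuous Riemann condition (\ref{eq: rim_bc}), the antisymmetry $f(z^\ast)=-f(z)$, and having the same singular part at $a$ as $f_\Cvr(a,\cdot)$, with Carath\'eodory convergence of domains controlling the boundary behavior. Hence $g:=f-f_\Cvr(a,\cdot)$ is a globally holomorphic spinor on $\dOm$ with homogeneous Riemann boundary data, and a standard contour integration identity $\Re\int_{\pa\dOm} g^2\,dz=0$ combined with (\ref{eq: rim_bc}) should force $g\equiv 0$, upgrading subsequential to full convergence.

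The principal obstacle, I expect, is the interplay between the multiple connectivity of $\Omega$ and the spinor structure. On $\dOm$ the space of holomorphic spinors with homogeneous Riemann boundary data is genuinely nontrivial, its dimension being dictated by the topology of the double cover, so the uniqueness step above is not automatic: one must rule out a nonzero ``zero mode'' in the limit. The homology-tracking sign $(-1)^{l(S)+\1_{\gamma:a\to z}}$ in the definition of $F_\Cvr$ is precisely what selects, out of this affine family of candidates, the correct continuous fundamental spinor $f_\Cvr(a,\cdot)$, and the delicate step is verifying that this discrete selection mechanism is consistent with the continuous characterization of $f_\Cvr(a,\cdot)$ -- in particular that the bilinear pairing forced by the definition of $F_\Cvr$ on the double cover matches the contour-integration identity used in the continuous uniqueness argument. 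This is exactly the place where the new spinor framework of the paper is needed, rather than a direct appeal to the single-valued theory of \cite{CHS2}.
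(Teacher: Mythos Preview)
Your overall architecture (precompactness, identification, uniqueness) is right, but several of the load-bearing steps are either false or misdirected.

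First, the claim that near $a$ the spinor $F_\Cvr(a^\delta,\cdot)$ ``agrees (up to a fixed choice of sheet) with Smirnov's classical observable $F(a^\delta,\cdot)$'' is wrong. Local trivializability of the cover only means $F_\Cvr$ is single-valued there; it does \emph{not} make it equal to $F_0$. The weights $(-1)^{l(S)}s(z,\gamma)$ depend on the global topology of each configuration, so $F_\Cvr$ and $F_0$ differ already in any neighborhood of $a$. Consequently your normalization scheme---matching the singular coefficient at $a$ and invoking the known convergence of $F_0$---does not get off the ground. Even if it did, the paper works with arbitrary Carath\'eodory approximations, for which the boundary near $a$ can be rough and no ``coefficient of the singular part'' is well-defined; this is exactly why the paper normalizes instead by $\bigl(\beta(\delta)\bigr)^{-2}=\max_{\Omega^\delta(r)}|H^\delta|$ on a fixed compact set away from $a$ and the microscopic holes.

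Second, you have inverted the difficulty. There are \emph{no} zero modes: a holomorphic spinor with homogeneous Riemann data and no singularity is identically zero, both discretely (Remark~\ref{remark: homogeneous}) and continuously (Lemma~\ref{lemma: uniqueness_bis}). The mechanism is not a contour identity on $\partial\dOm$ but the transform $h=\Im\int f^2\,d\zeta$, which is single-valued on $\Omega$ (not $\dOm$), harmonic, constant on each boundary component with $\partial_n h\le 0$, and hence forced to be constant by the maximum principle. This same transform is what actually drives the proof: one works with the discrete pair $H^\delta_\bullet$ (subharmonic) and $H^\delta_\circ$ (superharmonic), and the genuine technical content---which your proposal does not touch---is (i) showing the normalization is scale-consistent (Lemma~\ref{lemma: compareNorm}: $\beta^\delta_{2r}/\beta^\delta_r$ bounded), via a path-to-the-hole argument exploiting sub/superharmonicity, and (ii) showing any subsequential limit $H$ satisfies the Dirichlet-type conditions (\reflistBh)--(\reflistDh) including the one-sided bounds near $a$ and the $w_j$ (Lemma~\ref{lemma: identlimits}). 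Your singular-part subtraction $S^\delta+R^\delta$ bypasses this machinery but replaces it with nothing: without the $H$-transform you have no sub/superharmonic comparison functions, hence no route to uniform bounds for $R^\delta$ in a multiply connected domain with rough boundary.
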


This convergence also holds true up to the ``nice'' parts of the boundary; moreover, considering ratios of observables corresponding to different $\varpi$'s, one can get rid of normalization issues. We work this out in Theorem \ref{thm: mainconv}.

A striking feature of our new observables is their direct relation to spin correlations. Let $\Od$ be a \emph{simply connected} domain with $m$ punctures, that is, $m$ single faces $f_1,\dots,f_m$ removed, and let $\Cvr$ be the cover that branches around each of these punctures. Then, it turns out that $F_\Cvr(a,b)$, $b\in\partial\dOd$, is (up to a fixed complex factor, see Proposition \ref{prop: obs_corr}) equal to
\[
\rZ_{ab}\E_{ab}[\sigma(f_1)\dots\sigma(f_m)],
\]
where $\rZ_{ab}$ and $\E_{ab}$ stand for the partition function and the expectation for the Ising model with Dobrushin boundary conditions: ``$-$'' on the
$(a b)$ boundary arc and ``$+$'' on $(b a)$. This, together with convergence results for the observables, gives the following corollary (with the notation ``$\E_+$'' referring to ``$+$'' boundary conditions everywhere on $\pa\Od$):

\begin{corollary}
\label{cor: intRatio2}
Let $(\Od,a^\mesh,b^\mesh)$ approximate $(\Omega,a,b)$ as $\delta\to0$. Then
 \begin{equation}
\label{intRatio2} \frac{\E_{a^\mesh b^\mesh} [\,\sigma(z_1^\mesh)\dots\sigma(z_m^\mesh)\,]}
{\E_+[\,\sigma(z_1^\mesh)\dots\sigma(z_m^\mesh)\,]}\ \mathop{\rightarrow}\limits_{\mesh\to 0}\
\vartheta(\phi(z_1),\dots,\phi(z_m)),
\end{equation}
where $\vartheta=\vartheta^{\C_+}_{\infty,0}$ are explicit functions and $\phi$ is a conformal map
from $\Omega$ onto the upper half-plane $\C_+$ sending $a$ to $\infty$ and $b$ to $0$.
\end{corollary}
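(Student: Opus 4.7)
The strategy is to express the ratio in~(\ref{intRatio2}) as a ratio of two discrete spinor observables, to which Theorem~\ref{thm: mainconv} then applies with cancelling normalizing factors. By Proposition~\ref{prop: obs_corr}, applied to the double cover $\Cvr$ of $\Od$ branching around each of the punctures $z_1^\delta,\dots,z_m^\delta$, the numerator of~(\ref{intRatio2}) is (a lattice-dependent constant times) $F_{\Cvr}(a^\delta,b^\delta)/\rZ_{a^\delta b^\delta}$, while an analogous classical identity for Smirnov's original observable on the unpunctured domain reads $F(a^\delta,b^\delta) = \kappa'\cdot\rZ_{a^\delta b^\delta}$. Dividing, the $\rZ_{a^\delta b^\delta}$ factor cancels and one is led to an identity
\[
\frac{\E_{a^\delta b^\delta}[\sigma(z_1^\delta)\cdots\sigma(z_m^\delta)]}{\E_+[\sigma(z_1^\delta)\cdots\sigma(z_m^\delta)]} \;=\; C\cdot\frac{F_{\Cvr}(a^\delta,b^\delta)}{F(a^\delta,b^\delta)},
\]
with a universal constant $C$; here the $\E_+$ normalization appears in the denominator on the left because the lattice-dependent proportionality factors in the two observable representations differ by exactly the spin normalization at the punctures (this is why the ratio admits a finite limit even though $\E_{a^\delta b^\delta}[\sigma_1\cdots\sigma_m]$ and $\E_+[\sigma_1\cdots\sigma_m]$ themselves vanish as $\delta\to 0$).

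The second step is to apply Theorem~\ref{thm: mainconv} to each of the two observables separately. The normalizing sequence $\beta(\delta)=\beta(\delta;\Od,a^\delta,\cdot)$ depends only on the source $a^\delta$ and the local geometry of the boundary near it, not on the cover -- indeed, the cover is trivial in a neighbourhood of $a^\delta$, and the $\beta(\delta)$ is determined by a local boundary analysis at the source. Thus the normalizing factors cancel in the ratio, and
\[
\frac{F_{\Cvr}(a^\delta,b^\delta)}{F(a^\delta,b^\delta)} \;\longrightarrow\; \frac{f_{\Cvr}(a,b)}{f(a,b)}, \qquad \delta\to 0,
\]
the boundary values at $b$ being interpreted via the boundary convergence statement of Theorem~\ref{thm: mainconv} (which is asserted there to hold up to the ``nice'' parts of the boundary).

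Finally, the limit $C\cdot f_\Cvr(a,b)/f(a,b)$ is identified with $\vartheta^{\C_+}_{\infty,0}(\phi(z_1),\dots,\phi(z_m))$. Both continuous spinors $f_{\Cvr}$ and $f$ transform under conformal maps with the same covariance factors at the endpoints $a,b$ (since the cover $\Cvr$ is unramified at these points), so the covariance factors cancel and the ratio is a \emph{conformal invariant} of $(\Omega,a,b,z_1,\dots,z_m)$. Pulling back to the upper half plane via $\phi$, with $a\mapsto\infty$ and $b\mapsto 0$, reduces the computation to an explicit Riemann boundary value problem for a holomorphic spinor on $\C_+$ with prescribed branch points at $\phi(z_1),\dots,\phi(z_m)$ and a prescribed singularity at $\infty$; the ratio of its solution to the corresponding unbranched one is by definition $\vartheta^{\C_+}_{\infty,0}$.

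The main obstacle is the first step: the precise identification of the constant $C$ and the verification that the lattice-dependent proportionality factors in the two observable representations combine to produce exactly the $\E_+[\sigma(z_1^\delta)\cdots\sigma(z_m^\delta)]$ in the denominator. This rests on a careful combinatorial analysis of the spinor signs in Proposition~\ref{prop: obs_corr} together with the boundary behaviour of Smirnov's observable, so that the discrepancy between the two lattice normalizations is precisely the $+$-boundary spin correlation. Once this identity is in place, the remaining ingredients -- convergence (Theorem~\ref{thm: mainconv}) and the conformal invariance of the limiting ratio -- are essentially formal.
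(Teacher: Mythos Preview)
Your overall strategy is right in spirit, but there are two genuine gaps that the paper's proof handles differently and that your argument does not.

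First, the combinatorial identity. From Proposition~\ref{prop: obs_corr} you correctly get $F_\Cvr(a^\delta,b^\delta)=\pm\eta_b\,\rZ_{a^\delta b^\delta}\,\E_{a^\delta b^\delta}[\sigma(z_1^\delta)\cdots\sigma(z_m^\delta)]$ and $F_0(a^\delta,b^\delta)=\pm\eta_b\,\rZ_{a^\delta b^\delta}$. Dividing gives
\[
\frac{F_\Cvr(a^\delta,b^\delta)}{F_0(a^\delta,b^\delta)}=\pm\,\E_{a^\delta b^\delta}[\sigma(z_1^\delta)\cdots\sigma(z_m^\delta)],
\]
with \emph{no} $\E_+[\sigma]$ in the denominator and no extra ``lattice-dependent proportionality factor'' to account for it. Your claim that such a factor is ``exactly the spin normalization at the punctures'' is not correct; $\E_+[\sigma(z_1^\delta)\cdots\sigma(z_m^\delta)]$ depends on the domain and on the positions $z_j^\delta$, so it cannot be a universal constant $C$. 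In the paper the $\E_+$ denominator arises from the \emph{second} identity in Proposition~\ref{prop: obs_corr}, namely $F_\Cvr(a^\delta,a^\delta)=i\eta_a\,\rZ_+\,\E_+[\sigma(\Gamma)]$ (and $F_0(a^\delta,a^\delta)=i\eta_a\,\rZ_+$), which you have not used. This is why the paper works with the four-fold ratio
\[
\frac{F_\Cvr(a^\delta,b^\delta)}{F_\Cvr(a^\delta,a^\delta)}\cdot\frac{F_0(a^\delta,a^\delta)}{F_0(a^\delta,b^\delta)}
=\frac{\E_{a^\delta b^\delta}[\sigma(\Gamma)]}{\E_+[\sigma(\Gamma)]},
\]
which is exactly the left-hand side of Theorem~\ref{thm: mainconv}.

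Second, the normalizing factors $\beta(\delta)$ \emph{do} depend on the cover: in Theorem~A the paper writes $\beta(\delta)=\beta(\delta;\Od,a^\delta,\Cvr)$, and in the proof these are defined by~(\ref{NormDef}) as $(\beta^\delta_r)^{-2}=\max_{\Od(r)}|H^\delta|$ with $H^\delta=\Im\int(F_\Cvr^\delta)^2$, which is manifestly $\Cvr$-dependent and not determined by local boundary data at $a^\delta$. So you cannot cancel them by asserting they coincide for $\Cvr$ and for the trivial cover. The paper disposes of this via Lemma~\ref{lemma: diffCovers}, which compares $\beta_1(\delta)F_{\Cvr_1}^\delta(a^\delta)$ and $\beta_2(\delta)F_{\Cvr_2}^\delta(a^\delta)$ through the coefficients $c^a_j$ in the expansion~(\reflistD); equivalently, dividing by $F_\Cvr(a^\delta,a^\delta)$ is precisely what absorbs the cover-dependent normalization. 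Once you restore the values at $a^\delta$, your argument becomes the paper's proof of Corollary~\ref{cor: ratio}, and the conformal invariance and the identification with $\vartheta^{\C_+}_{\infty,0}$ go through as you describe.
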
}

In Section~\ref{Sec: explicit_half_plane} we give explicit formulae for $\vartheta$ in $\C_+$, and hence, by conformal invariance, for all simply connected domains. For example,
\[
\vartheta^{\Omega}_{ab}(z)=\cos\,[\pi\text{hm}_{\Omega}(z,(ab))]\,,
\]
where $\text{hm}_{\Omega}(z,(ab))$ stands for the harmonic measure of the arc $(ab)$ in $\Omega$ as viewed from $z$. These formulae for $m=1,2$ were previously conjectured by means of Conformal Field Theory, see \cite{BurkhardtGuim} and earlier papers. To the best of our knowledge, the explicit formulae for $m\geq 3$ are new.

Corollary \ref{cor: intRatio2} admits a number of generalizations. Let $\Od$ approximate a finitely connected domain $\Omega$ with $k$ inner boundary components $\gamma_1,\dots,\gamma_k$ (possibly macroscopic). Then, for any $m\leq k$, one has
\begin{equation}
\label{intRatio1} \frac{\E_{a^\mesh b^\mesh}
[\,\sigma(\gamma_1^\mesh)\dots\sigma(\gamma_m^\mesh)\,]}
{\E_+[\,\sigma(\gamma_1^\mesh)\dots\sigma(\gamma_m^\mesh)\,]}\
\mathop{\rightarrow}\limits_{\mesh\to 0}\
\vartheta_{ab}^{\Omega}(\gamma_1,\dots,\gamma_m)\,,
\end{equation}
where the functions $\vartheta_{ab}^{\Omega}(\gamma_1,\dots,\gamma_m)$ are conformally invariant, the expectations $\E_{ab}$, $\E_+$ are taken for the Ising model with Dobrushin (respectively, ``+'') boundary conditions on the outer boundary component and monochromatic on inner components $\gamma_j$, meaning that we constrain the spins to be the same along each component, but do not specify a priory whether it is plus of minus. In this case, $\sigma(\gamma_j^{\delta})$ denotes the (random) spin of the component $\gamma_j$.

Further, closely following the route proposed by Hongler in~\cite{HThesis}, we prove a Pfaffian formula which generalizes (\ref{intRatio1}) to the case of $2n$ boundary change operators (in other words, ``$+/-/\dots/+/-$'' boundary conditions with $2n$ marked boundary points,
see Section~\ref{Sec: pfaffian}). For $m=1,2$ this Pfaffian formula (along with the expressions for $\vartheta$) was previously derived by means of
Conformal Field Theory \cite{BurkhardtGuim}, whereas we give it a rigorous proof for general $m$ both in discrete, and, thanks
to the convergence theorem, in continuous settings.

Another application of our new observables~\cite{IzInt} is the proof of convergence of (multiple) Ising interfaces to SLE curves in multiply connected domains. In that context, a proper choice of the observable $F_\varpi$ (i.e., the corresponding double cover $\varpi$) guarantees its martingale property with respect to the growing interface. To prove that property, it is important to relate the values of $F_\varpi$ to the partitions function of the model with relevant boundary conditions. In Section \ref{Sec: pfaffian}, we show how to do it in the most general case, see Proposition~\ref{prop: ObsZ}. The simplest example of an SLE process treated in this way (for which the use of a non-trivial double cover is essential) is a radial Ising interface converging to radial SLE${}_3$.

For simplicity, in the present paper we work on the square grid, but all our proofs remain valid for the self-dual Ising model defined on
isoradial graphs (e.g., see \cite{CHS2}).  We refer the reader interested in a detailed presentation
of the basic notions of discrete complex analysis on those graphs to the paper \cite{CHS} and
the reader interested in the history of the Ising model to the paper \cite{CHS2} and references
therein.

\subsection{Organization of the paper}
In Section \ref{Sec: notations}, we fix the notations and conventions regarding discrete domains and the Ising model. In
Section~\ref{Sec: observables_limits}, we give the definition of the spinor observable and discuss its properties (in particular, discrete holomorphicity and boundary conditions), as well as the connections to spin correlations. We then define the continuous counterparts of the observables and briefly discuss their properties. Section \ref{Sec: proof} is devoted to the proof of main convergence results for spinor observables: Theorem~\ref{thm: mainconv_int_1} (convergence in the bulk) and Theorem~\ref{thm: mainconv} (convergence on the boundary). We generalize our results to the case of multiple boundary change operators in Section~\ref{Sec: pfaffian}. Finally, in Section~\ref{Sec: explicit_half_plane} we give explicit formulae for the continuous observables $f_\Cvr$ in the punctured half-plane and for the scaling limits $\vartheta^{\C_+}_{\infty,0}$ appearing in Corollary \ref{cor: intRatio2}.

\subsection*{Acknowledgments}
We would like to thank Stanislav Smirnov who involved us into the subject of the critical
planar Ising model for many fruitful discussions. We are also grateful to Cl\'ement Hongler for many
helpful comments and remarks. Some parts of this paper were written at the IH\'ES,
Bures-sur-Yvette, and the CRM, Bellaterra. The authors are grateful to these research centers
for hospitality.

This research was supported by the Chebyshev Laboratory (Department of Mathematics and
Mechanics, Saint-Petersburg State University) under the Russian Federation Government grant
11.G34.31.0026. The first author was partly funded by P.Deligne's 2004 Balzan prize in
Mathematics (research scholarship in 2009--2011) and by the grant \mbox{MK-7656.2010.1}. The second author was supported by the
European Research Council AG CONFRA and the Swiss National Science Foundation.

\section{Notation and conventions}
\setcounter{equation}{0}

\label{Sec: notations}

\subsection{Graph notation.} By \emph{(bounded) discrete domain} (of mesh $\delta$) $\Od$ we mean a (bounded) \emph{connected}
subset of the square lattice $\delta \Z^2$ (an example of a discrete domain is given on
Fig.~\ref{Fig: grid}). More precisely, a discrete domain is specified by three sets
$\mathcal{V}(\Omega^\delta)$ (vertices), $\mathcal{F}(\Od)$ (faces) and $\mathcal{E}(\Od)=\mathcal{E}_{\mathrm{in}}(\Od)\cup
\mathcal{E}_{\mathrm{bd}}(\Od)$ (interior edges and boundary half-edges, respectively), with the
following requirements:
\begin{itemize}
\item all four edges and vertices incident to any face $f\in \mathcal{F}(\Od)$ belong to $\mathcal{E}(\Od)$;
\item every vertex in $\mathcal{V}(\Od)$ is incident to four edges or half-edges in $\mathcal{E}(\Od)$;
\item every vertex that is incident to at least one edge or half-edge $e\in \mathcal{E}(\Od)$ belongs to
$\mathcal{V}(\Od)$;
\item at least one of two faces incident to any edge $e\in \mathcal{E}_{\mathrm{in}}(\Od)$ belongs to $\mathcal{F}(\Od)$.
\end{itemize}
For an interior edge $e\in \mathcal{E}_{\mathrm{in}}({\Od})$ we denote by $z_e$ its midpoint. For a
boundary half-edge $e\in \mathcal{E}_{\mathrm{bd}}(\Od)$ we denote by $z_e$ its endpoint which is not a
vertex of $\Od$. When no confusion arises we will identify an edge (or half-edge) $e$
with a point $z_e$.

By the \emph{boundary} $\partial \Od$ of $\Od$ we will mean the set of all its boundary
half-edges $\mathcal{E}_{\mathrm{bd}}(\Od)$ or, if no confusion arises, the set of corresponding
endpoints $z_e$.

\smallskip

A \emph{double cover} of a discrete domain $\Od$ is a
graph $\dOd$ with a two-to-one local graph isomorphism $\Cvr:\dOd\rightarrow \Od$. Given a
marked boundary half-edge $a\in \partial \Od$, one can describe points $z$ on a double
cover by lattice paths $\gamma$ running from $a$ to $z$ in $\Od$, modulo homotopy and modulo
an appropriate subgroup of the fundamental group.
If $\Od$ is $(k\!+\!1)$-connected, that is, has $k$ holes, then there are $2^k$ double covers, including the trivial one. Namely, to define a cover, for each hole one has to specify whether a loop surrounding this hole lifts to a loop in the double-cover, or to a path connecting points on different sheets. In the latter case we will say that $\Cvr$ \emph{branches} around that hole. If $z$ is a point on a double cover $\dOd$, we let $z^{\ast}\in\dOd$ be defined by
$\Cvr(z^{\ast})=\Cvr(z)$ and $z^{\ast}\neq z$. We will also use the obvious notation $\mathcal{V}(\dOd)$,
$\mathcal{E}(\dOd)$ etc.

\smallskip

\subsection{Ising model notation.} We will work with the low-temperature contour representation of the critical Ising model in
$\Od$ (see~\cite{Palmer07}). We call a subset $S$ of edges and half-edges in $\Od$ (see
Fig.~\ref{Fig: grid}, note that we admit inner half-edges in $S$) a \emph{generalized
configuration} or a \emph{generalized interfaces picture} for this model, if
\begin{itemize}
\item each vertex in $\Od$ is incident to $0,2$ or $4$ edges and half-edges in $S$;
\item if an edge $e=e'\cup e''$ consists of two halves $e',e''$, then at most one of those three
belongs to $S$.
\end{itemize}
We will denote the set of all generalized configurations in $\Od$ by $\Conf_\mathrm{gen}(\Od)$. By the
\emph{boundary} $\partial S$ of $S\in\Conf_\mathrm{gen}(\Od)$ we will mean the set of all half-edges $e\in
S$ or corresponding points $z_e$, if no confusion arises. The \emph{partition function}
of the critical Ising model is given by
\begin{equation}
\label{Zfree} \rZ(\Od)\ =\sum\limits_{S\in \Conf(\Od)} x^{|S|}, \qquad
x=x_{\mathrm{crit}}=\sqrt{2}-1
\end{equation}
(the value $x=x_{\mathrm{crit}}$ will be fixed throughout the paper). Here and below $|S|$ is the total number of edges and half-edges in $S$, and
\[
\Conf(\Od):=\{S\in \Conf_\mathrm{gen}(\Od):\partial S\subset\partial \Od\}.
\]
The formula (\ref{Zfree})
endows the set $\Conf(\Od)$ of configurations, which corresponds to free boundary conditions in the spin representation, with a probability measure, the
probability of a particular configuration $S$ being ${x^{|S|}}/{\rZ(\Od)}$.

We will mostly work with subsets of $\Conf(\Od)$, and restrictions of the probability measure to those subsets. Thus, we denote
\begin{equation}
\label{defconf}
\begin{aligned}
&\Conf_{+}(\Od) := \{S\in \Conf(\Od):\partial S=\emptyset\},\\
&\Conf_{e_1,\dots,e_{n}}(\Od):= \{S\in \Conf(\Od):\partial S=\{z_{e_1},\dots z_{e_n}\} \;\mod\; 2\},
\end{aligned}
\end{equation}
where ``mod 2'' means that if some of $e_1,\dots,e_n$ appear several
times in the subscript (it will be useful for us to allow this), we keep in $\partial S$ only those appearing an odd number of times. In the spin
representation, the subset $\Conf_+(\Od)$ corresponds to \emph{locally monochromatic} boundary
conditions, that is, along each boundary component the spins are required to be the same
(although they may be different on different components). If $a_1,\dots,a_{2n}\in \partial
\Od$, then $\Conf_{a_1,\dots,a_{2n}}(\Od)$ corresponds to the configurations where the spins
change from ``$+$'' to ``$-$'' at the boundary points (half-edges) $a_1,\dots,a_{2n}$.

\begin{rem}
To simplify the notation, we will write $\Conf_{e_1,\dots,e_{n}}(\Od)$ instead of $\Conf_{\Cvr(e_1),\dots,\Cvr(e_{n})}(\Od)$ when $e_1,\dots,e_n\in\dOd$. One should remember that we always consider Ising configurations or generalized interfaces pictures in the planar domain $\Od$ itself, even though we will define observables on double covers $\dOd$.
\end{rem}

\section{Spinor holomorphic observables and their limits}
\setcounter{equation}{0}

\label{Sec: observables_limits}

\begin{figure}[t]
\includegraphics[width=\textwidth]{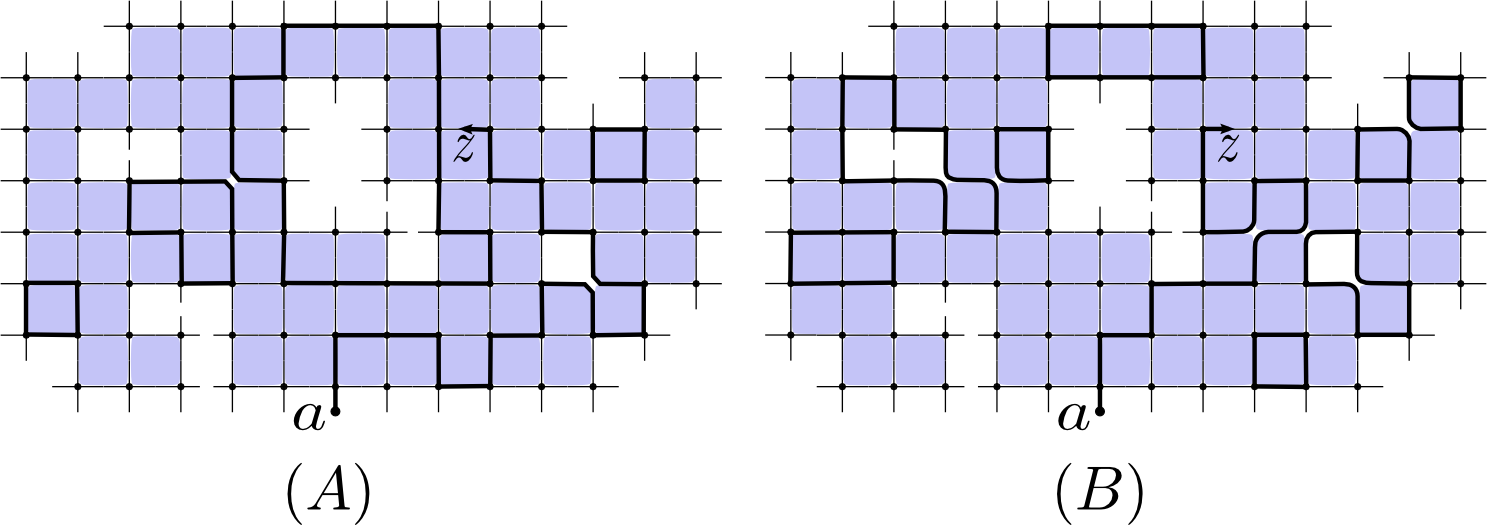}
\caption{\label{Fig: grid} An example of four-connected discrete domain $\Od$ and two generalized interfaces pictures
$S_{A,B}\in\Conf_{a,z}(\Od)$, each decomposed into a collection of loops and a simple lattice path
$\gamma_{A,B}:a\rightsquigarrow z$, as required in Definition~\ref{MainDef}. For a double cover $\dOd$, branching, say, around two small holes but not around the big central one, one has $l(S_A)=0$, $l(S_B)=2$, and $s(z,\gamma_A)=-s(z,\gamma_B)$ as, being lifted on $\dOd$, $\gamma_A$ and $\gamma_B$ end on different sheets.}

\end{figure}

\subsection{Discrete holomophic spinor observables.} \label{subsec: discr_hol_spinors}
In this subsection we will construct spinor observables and prove their discrete holomorphicity.
These observables should be considered as natural generalizations of fermionic observables
introduced by Smirnov~\cite{Smirnov06,CHS2} to the multiply connected setup. A
discrete domain $\Od$, its double cover $\Cvr: \dOd\rightarrow \Od$, and a boundary half-edge $a\in\partial\dOd$ will be fixed throughout this
subsection. In order to give a consistent definition for all discrete domains, we need the
following (technical) notation. The half-edge $a$, oriented from an inner vertex to $z_a$, can be thought of as a complex number. Then we set
\begin{equation}
\label{DefEtaA} \eta_a:= e^{-\frac{i}{2}(\arg(a)+\frac{1}{2}\pi)}=(ia/|a|)^{-\frac{1}{2}}
\end{equation}
for some fixed choice of the sign. Note that, if $a$ is south-directed, then $\eta_a=\pm 1$.

\smallskip

Given a point $z=z_e$ (i.e., the midpoint of an edge or the endpoint of a boundary half-edge $e\in E(\dOd)$) and a configuration
$S\in \Conf_{a,e}(\Od)$, we introduce the \emph{complex phase} of $S$ with
respect to $z$. First, we decompose $S$ into a collection of loops and a path $\gamma$ running
from $\Cvr(a)$ to $\Cvr(e)$ so that there are no transversal intersections or self-intersections (see
Fig.~\ref{Fig: grid}). A loop in such a decomposition will be called \emph{non-trivial} if it does not lift to a closed loop on the double cover (that is, if it lifts to a path between points on different sheets). We denote by $l(S)$ the number of non-trivial loops in $S$. We also
introduce a sign $s(z,\gamma):=+1$, if $\gamma$ lifts to a path from $a$ to $z$ on $\dOd$, and $s(z,\gamma):=-1$, if
it lifts to a path from $a$ to $z^\ast$.

\begin{definition}
\label{MainDef} The complex phase of a configuration $S\in\Conf_{a,z}(\Od)$ with respect
to a point $z$ lying on a double cover $\Cvr:\dOd\to\Od$ is defined as
\begin{equation}
\label{CW} W_\Cvr(z,S):= e^{-\frac{i}{2}\wind(\gamma)}\cdot (-1)^{l(S)}\cdot s(z,\gamma),
\end{equation}
where $\wind(\gamma)$ denotes the winding (i.e., the increment of the argument of the tangent vector) along
$\gamma$. Then, we define a {\bf spinor observable} on the double cover $\dOd$ as
\begin{equation}
\label{Observable} F_\Cvr(a,z):= i\eta_a \cdot \!\!\sum_{S\in \Conf_{a,z}(\Od)} W_\Cvr(z,S) x^{|S|}.
\end{equation}
\end{definition}

\begin{rem}
(i) $W_\Cvr(z,S)$ does not depend on the way one chooses the decomposition of a given
configuration $S$ into loops and the path $\gamma$. The proof is elementary, and we leave it to the reader. Note that it is
sufficient to check that the second factor $(-1)^{l(S)}s(z,\gamma)$ is independent of a decomposition, as the rest is
well known (e.g., see \cite[Lemma~7]{HonSmi}).

\smallskip

\noindent (ii) By definition, $F_\Cvr(a,z^\ast)\equiv -F_\Cvr(a,z)$, thus we call
$F_\Cvr(a,z)$ a \emph{spinor}.

\smallskip

\noindent (iii) If $\Cvr$ is the trivial cover, then Definition~\ref{MainDef} reproduces the
original construction due to Smirnov (e.g., see \cite[eq.~(2.10)]{CHS2}). We denote this observable by $F_0(a,z)$ and the corresponding complex phase by
$W_0(z,S)$.
\end{rem}

The most important ``discrete'' properties of the observable (\ref{Observable}) are revealed in Theorem \ref{Thm: shol} below, which states its \emph{s-holomorphicity} (see \cite[Section~3]{Smirnov10} or
\cite[Definition~3.1]{CHS2}) and describes the boundary conditions.
We introduce the
following notation: given a vertex ${v}\in \mathcal{V}(\Od)$, we consider four nearby corners of faces
incident to $v$, and identify them with the points ${v}_k:={v}+{e^{i\pi
(2k+1)/4}}\cdot\delta/{2\sqrt{2}}$, $k=0,1,2,3$. We denote sets of all corners of $\Od$ and
its double cover $\dOd$ by $\Corn(\Od)$ and $\Corn(\dOd)$, respectively. Similarly to
(\ref{DefEtaA}), for a corner $c=v_k\in\Corn(\dOd)$ we set
\[
\eta_c:=(i(c\!-\!v)/|c\!-\!v|)^{-\frac{1}{2}}:=e^{-i\pi (2k+1)/8}
\]
(again, the particular choice of square root signs is unimportant, so we fix it once forever
for each of four possible orientations of $c\!-\!v$). We denote by
\[
\Pr_{\eta_c}(F):=\Re
(\overline{\eta_c}\,F)\,\eta_c={\textstyle\frac{1}{2}}(F+\eta_c^2\,\overline{F})
\]
the orthogonal projection of a complex number $F\in\C$ onto the line $\eta_c\R$.

\smallskip

\begin{theorem}
\label{Thm: shol} For any corner $c\in \Corn(\dOd)$ formed by edges or half-edges $z', z''\in \mathcal{E}(\dOd)$, one
has
\begin{equation}
\label{shol} \Pr_{\eta_c}(F_\Cvr(a,z'))=\Pr_{\eta_c}(F_\Cvr(a,z'')).
\end{equation}
Moreover, if $b\in\partial \dOd\setminus\{a,a^\ast\}$ is a boundary half-edge, then
$F_\Cvr(a,b)\parallel \eta_{b}$, i.e.,
\begin{equation}
\label{bc_0} \Pr_{i\eta_b}(F_\Cvr(a,b))=0.
\end{equation}
\end{theorem}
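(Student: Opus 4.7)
Approach. The spinor observable $F_\Cvr$ differs from Smirnov's $F_0$ only by the real multiplicative sign $(-1)^{l(S)}s(z,\gamma)\in\{\pm 1\}$ attached to each summand, which by Remark~(i) is a well-defined function of $S$ (and of $z$). Since $\Pr_{\eta_c}$ is $\R$-linear and $x^{|S|}>0$, both (\ref{shol}) and (\ref{bc_0}) will follow once I (a) re-run the winding analysis from the simply connected case at each boundary half-edge, and (b) verify that the local pairing underlying s-holomorphicity preserves this extra sign pointwise.

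Boundary condition. Fix $b\in\pa\dOd\setminus\{a,a^\ast\}$. For any $S\in\Conf_{\Cvr(a),\Cvr(b)}(\Od)$ decompose it into loops and a simple lattice path $\gamma$ from $z_a$ to $z_b$; the first step of $\gamma$ points along $-a$ and the last step along $-b$, and the tangent rotates only by multiples of $\pi/2$ along $\gamma$, so $\wind(\gamma)\equiv\arg(b)-\arg(a)\pmod{2\pi}$. A short computation using (\ref{DefEtaA}) then gives $i\eta_a\,e^{-\frac{i}{2}\wind(\gamma)}\in\eta_b\R$, and the extra real factor $(-1)^{l(S)}s(b,\gamma)$ together with the positive weight $x^{|S|}$ keeps the whole summand a real multiple of $\eta_b$. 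Summing over $S$ yields (\ref{bc_0}).

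s-holomorphicity. Fix a corner $c$ at vertex $v$ between edges $z',z''\in\mathcal{E}(\dOd)$. The proof in the trivial-cover case (cf.~\cite{Smirnov10, CHS2}) constructs, for each local occupancy pattern of the four half-edges at $v$, an explicit pairing between contributions to $F_0(a,z')$ and $F_0(a,z'')$ supported in an arbitrarily small neighborhood $U$ of $v$: one flips the two halves of $z',z''$ incident to $v$. A direct case check then gives $\Pr_{\eta_c}[i\eta_a\,W_0(z',S')\,x^{|S'|}]=\Pr_{\eta_c}[i\eta_a\,W_0(z'',S'')\,x^{|S''|}]$ for each paired $(S',S'')$. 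To upgrade this identity to $F_\Cvr$ it remains to check $(-1)^{l(S')}s(z',\gamma')=(-1)^{l(S'')}s(z'',\gamma'')$. Because $\Cvr$ is trivially two-sheeted over $U$, fix any trivialization of $\Cvr$ on $U$; then the lift of $S\cap(\Od\setminus U)$ to $\dOd$ is the same for $S'$ and $S''$, while the local modification inside $U$ only reshuffles the decomposition near $v$ on the trivial part of the cover. Hence the mod-$2$ homology class of each loop and the sheet on which $\gamma$ terminates are both preserved, which gives $l(S')=l(S'')$ and $s(z',\gamma')=s(z'',\gamma'')$ after transporting the decomposition through the local move.

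Main obstacle. The bookkeeping in the last step is the technical heart of the proof: the local flip may split or merge pieces of the decomposition (two loops becoming one, or a loop merging with $\gamma$, etc.), and one must verify, over each of the finitely many local patterns at $v$, that these rearrangements preserve the mod-$2$ invariant $l(S)+\1_{s(z,\gamma)=-1}$, appealing to Remark~(i) so that the freedom in choosing a decomposition is harmless. Once this finite case check is carried out, (\ref{shol}) follows termwise from the simply connected calculation, and (\ref{bc_0}) has already been established.
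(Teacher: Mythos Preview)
Your overall strategy matches the paper's: reduce to the known simply connected identity via the local ``xor'' bijection $\Pi$, and check that the extra sign $(-1)^{l(S)}s(z,\gamma)$ survives. The boundary condition argument is also the paper's, modulo a slip: the last step of $\gamma$ points along $+b$ (from the inner vertex out to $z_b$), not $-b$, so $\wind(\gamma)\equiv\arg b-(\arg a+\pi)\pmod{2\pi}$; with your formula you would land in $i\eta_b\R$ instead of $\eta_b\R$.

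The substantive gap is in the s-holomorphicity paragraph. You assert that the local flip preserves $l(S)$ and $s(z,\gamma)$ \emph{separately}, arguing that since $\Cvr$ is trivial over a neighborhood of $v$ the lift of $S$ outside that neighborhood is unchanged. That premise is correct but the conclusion does not follow: the local move can merge a loop with $\gamma$ (or split one off), and when that loop is non-trivial on $\dOd$, $l$ drops by one while $\gamma$ acquires the loop's homology, flipping $s$. So $l(S')=l(S'')$ and $s'=s''$ are individually false in general. What is true --- and this is the paper's one-line argument --- is that such a merge changes $l$ and $s$ \emph{simultaneously}, hence the product $(-1)^{l}s$ is preserved by $\Pi$. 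Your ``Main obstacle'' paragraph does name the correct invariant $l+\1_{s=-1}\bmod 2$, but then defers to an unperformed finite case check; the actual mechanism (non-trivial loop absorbed into $\gamma$ $\Leftrightarrow$ sheet flip) is both simpler and what makes the proof work.
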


\begin{rem}
Since our observables satisfy $F_\Cvr(a,z^\ast)\equiv -F_\Cvr(a,z)$, the
identities (\ref{shol}) at two corners $c$, $c^\ast$ such that $\Cvr(c)=\Cvr(c^\ast)$ are
equivalent. The same is fulfilled for the boundary condition (\ref{bc_0}).
\end{rem}

\begin{proof}
Let $v$ denotes the vertex incident to both $z'$ and $z''$. There exists a natural bijection
$\Pi:\Conf_{a,z'}(\Od)\to\Conf_{a,z''}(\Od)$, provided by taking ``xor'' of a
generalized configuration $S$ with two half-edges $\Cvr(vz')$ and $\Cvr(vz'')$. The well known proof
of the theorem for the trivial cover (e.g., see \cite[Proposition~2.5]{CHS2} or
\cite[Lemma~45]{HonSmi}) assures that, for any $S\in \Conf_{a,z'}(\Od)$,
$$
\Pr_{\eta_c}(W_{0}(z',S)x^{|S|})=\Pr_{\eta_c}(W_{0}(z'',\Pi(S))x^{|\Pi(S)|}).
$$
Clearly, the same holds true with $W_{0}$ replaced by $W_\Cvr$, unless $\Pi$ changes the
number of non-trivial loops $l(S)$ or $s(z',\gamma_S)\ne s(z'',\gamma_{\Pi(S)})$. However, it is easy to see
that $\Pi$ always preserves the factor $(-1)^{l}\cdot s$: for instance, if there was a
non-trivial loop in $S$ that disappeared in $\Pi(S)$, then this loop has become a part of the
path $\gamma_{\Pi(S)}$, leading to the simultaneous change of the sign $s$.

To derive the boundary condition (\ref{bc_0}), it is sufficient to note that the winding of
any curve $\gamma$ running from $a$ to $b$ is equal to $(\arg b- (\arg a \!+\!\pi))$ modulo
$2\pi$.
\end{proof}

The next proposition relates the boundary values of $F_{\Cvr}(a,\cdot)$ to spin
correlations in the Ising model. For a given double cover $\Cvr:\dOd\to\Od$ of a $(k\!+\!1)$-connected domain $\Od$, let us fix the enumeration of inner boundary components $\gamma_1,\dots,\gamma_k$ so that
\begin{quotation}
\noindent \emph{$\Cvr$ branches around each of $\gamma_1,\dots,\gamma_m$ but not around $\gamma_{m+1},\dots,\gamma_k$.}
\end{quotation}
For simplicity, below we assume that two marked boundary points $a,b$ belong to the
\emph{outer} boundary of $\dOd$. We denote by $\rZ_{ab}$ and $\E_{ab}$ the partition function and the expectation in the Ising
model with ``$-$'' boundary conditions on the counterclockwise boundary arc
$(\Cvr(a)\Cvr(b))\subset\pa\Od$, ``$+$'' on the complementary arc $(\Cvr(b)\Cvr(a))$, and
\emph{monochromatic} on all inner boundary components $\gamma_1,\dots,\gamma_k$. Recall that these boundary conditions require the spin to be constant along each $\gamma_j$. We denote this (random) spin by $\sigma(\gamma_j)$.
\begin{rem}
If some hole $\gamma_j$ is a single face, then we do not impose any boundary condition
there and $\sigma(\gamma_j)$ is just a spin assigned to this face.
\end{rem}

\begin{proposition}
\label{prop: obs_corr} If $a,b\in\pa\dOd$ belong to the outer boundary and $\Cvr(a)\ne\Cvr(b)$, then
\begin{equation}
\label{obs_corr} F_\Cvr(a,b)= \pm\eta_b\cdot
\rZ_{ab}\E_{ab}[\,\sigma(\gamma_1)\sigma(\gamma_2)\dots\sigma(\gamma_m)\,]
\end{equation}
(the choice of sign is explained in Remark~\ref{rem: bcinversion}). In particular,  $F_0(a,b)=\pm \eta_b\cdot \rZ_{ab}$. Moreover,
\begin{equation}
\label{obs_corr_a} F_\Cvr(a,a)= i{\eta_a} \cdot
\rZ_{+}\E_{+}[\,\sigma(\gamma_1)\sigma(\gamma_2)\dots\sigma(\gamma_m)\,].
\end{equation}
\end{proposition}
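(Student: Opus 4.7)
The plan is to match $i\eta_a\cdot W_\Cvr(b,S)$ configuration by configuration with $\pm\eta_b\cdot\prod_{j=1}^{m}\sigma(\gamma_j)$, where the sign is independent of $S\in\Conf_{a,b}(\Od)$ and the spins refer to the Ising configuration corresponding to $S$ under the Kramers--Wannier low-temperature bijection with Dobrushin outer and monochromatic inner boundary conditions. Summing over $S$ with weight $x^{|S|}$ then yields (\ref{obs_corr}).

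The factor $i\eta_a\cdot e^{-i\wind(\gamma_S)/2}=i\eta_a\cdot W_0(b,S)$ is the Smirnov phase for the trivial cover: the standard argument (which underlies the $m=0$ claim $F_0(a,b)=\pm\eta_b\cdot\rZ_{ab}$) shows that this phase equals $\pm\eta_b$ with a sign depending only on $a,b$, since any two simple lattice arcs from $a$ to $b$ with prescribed tangent directions at the endpoints have tangent windings congruent modulo~$4\pi$. Hence the proposition reduces to verifying, for every $S\in\Conf_{a,b}(\Od)$, the combinatorial identity
\[
(-1)^{l(S)}\cdot s(b,\gamma_S) \; =\; \prod_{j=1}^{m}\sigma(\gamma_j) \qquad (\star).
\]

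I would establish $(\star)$ by a reference-plus-moves argument. Take as reference the interface $S_0$ corresponding to the Ising configuration with all inner spins equal to $+1$ and with the Dobrushin path $\gamma_{S_0}$ chosen inside the ``$+$'' region so as to avoid encircling any branched hole; then $l(S_0)=0$, $s(b,\gamma_{S_0})=+1$, every $\sigma(\gamma_j)=+1$, and both sides of $(\star)$ equal $+1$. It then remains to check that both sides of $(\star)$ transform consistently under the elementary moves generating $\Conf_{a,b}(\Od)$ from $S_0$: adding a single loop around a branched hole $\gamma_j$ ($j\le m$) flips both $(-1)^{l(S)}$ and $\sigma(\gamma_j)$; rerouting $\gamma_S$ once around $\gamma_j$ flips both $s(b,\gamma_S)$ and $\sigma(\gamma_j)$; and any move involving only non-branched holes ($j>m$) fixes both sides. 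Combining with the first step gives $i\eta_a\cdot W_\Cvr(b,S)=\pm\eta_b\cdot\prod_j\sigma(\gamma_j)$ and hence (\ref{obs_corr}).

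The diagonal formula (\ref{obs_corr_a}) follows from the same analysis with $b=a$: configurations in $\Conf_{a,a}$ satisfy $\partial S=\emptyset$, so $\Conf_{a,a}=\Conf_+$, and the ``path'' $\gamma_S$ is empty, giving $e^{-i\wind/2}=s=+1$ and $W_\Cvr(a,S)=(-1)^{l(S)}$. The same moves argument yields $(-1)^{l(S)}=\prod_{j\le m}\sigma(\gamma_j)$ in the corresponding monochromatic configuration, and multiplying by $i\eta_a$ and summing produces $i\eta_a\cdot\rZ_+\cdot\E_+[\prod_j\sigma(\gamma_j)]$. The main obstacle will be the move-by-move verification of $(\star)$: identifying a convenient family of elementary transformations that connect every $S\in\Conf_{a,b}(\Od)$ to $S_0$ and tracking the parity behaviour on both sides consistently with the branching of~$\Cvr$.
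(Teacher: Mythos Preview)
Your proposal is correct and follows essentially the same route as the paper. The paper also matches signs configuration by configuration, reducing each $S$ to the base case (the boundary-arc configuration, your $S_0$) by removing loops and pushing $\gamma$ across the holes while tracking $(-1)^{l(S)}$, $s(b,\gamma)$ and the $\sigma(\gamma_j)$'s---precisely your reference-plus-moves argument run in the reverse direction; the only cosmetic difference is that the paper does not separate off the winding factor in advance but computes it for the base configuration at the end.
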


\begin{rem}
\label{rem: bcinversion}
The sign $\pm$ in (\ref{obs_corr}) depends on particular choices of $\eta_a$, $\eta_b$ and the sheets of $a,b$ on $\dOd$. One way to fix it is as follows: let $b$ be such that the {counterclockwise} boundary arc $(\Cvr(a)\Cvr(b))\subset\partial\Od$ lifts to $(ab)\subset\partial\dOd$ (otherwise, consider $F_\Cvr(a,b^\ast)=-F_\Cvr(a,b)$). Then, one can replace $\pm\eta_b$ in (\ref{obs_corr}) by $-\eta_a e^{-\frac{i}{2}\wind_{ab}}$, where $\wind_{ab}$ denotes the winding of the arc $(\Cvr(a)\Cvr(b))$.
\end{rem}

\begin{proof}
The second identity is clear from the definition (\ref{Observable}), since each configuration
$S\in\Conf_{a,a}(\Od)=\Conf_+(\Od)$ contributes the same amount
$i\eta_a(-1)^{l(S)}x^{|S|}$ to both sides of (\ref{obs_corr_a}). To prove (\ref{obs_corr}),
note that each $S\in\Conf_{a,b}(\Od)$ contributes $\pm \eta_b x^{|S|}$ to
both sides, thus we only need to check that all the signs are the same. Given a configuration $S$,
decompose it into a path $\gamma:\Cvr(a)\rightsquigarrow \Cvr(b)$ and a collection of loops. A
loop contributes to $l(S)$ (i.e., changes the sheet in $\dOd$) if and only if it has an odd
number of components $\gamma_1,\dots,\gamma_m$ inside. Hence, removing all those loops from the configuration
results in the same sign change $(-1)^{l(S)}$ for both sides. Removing the other loops (having an even
number of those $\gamma_j$'s inside) does not change the signs. After removing all loops, moving
$\gamma$ across any of $\gamma_j$ will change $\sigma(\gamma_j)$ and the sheet on which the lifting of $\gamma$ ends (i.e., the sign
$s(b,\gamma)$), again resulting in $(-1)$ factor at both sides. If finally $\gamma$ goes along the
counterclockwise arc $(\Cvr(a)\Cvr(b))$, then $S$ contributes $x^{|S|}$ to $\rZ_{ab}\E_{ab}[\sigma(\gamma_1)\sigma(\gamma_2)\dots\sigma(\gamma_m)]$, while its contribution to the left-hand side is equal to $i\eta_a e^{-\frac{i}{2}(\wind_{ab}-\pi)} x^{|S|}=-\eta_ae^{-\frac{i}{2}\wind_{ab}}x^{|S|}=\pm\eta_b x^{|S|}$, if $\gamma$ lifted to the double cover ends at $b$, and $\eta_ae^{-\frac{i}{2}\wind_{ab}}x^{|S|}$, if it ends at $b^\ast$.
\end{proof}

\subsection{Continuous spinors and convergence results.} \label{Sec: cont_spinors} In this section we introduce continuous counterparts of the discrete holomorphic spinor observables, which we will later prove to be scaling limits thereof. For a moment, let us assume that $\Omega$ is a bounded finitely connected domain whose boundary components are single points $\gamma_1=\{w_1\},\dots,\gamma_s=\{w_s\}$ and \emph{smooth} curves $\gamma_0,\gamma_{s+1},\dots,\gamma_k$. Given a double cover
$\Cvr:\dOm\to\Omega$ and a point $a\in\partial\wt{\Omega}\backslash\{w_1,\dots,w_s\}$, we denote by $f^\Omega_\Cvr(a,\cdot):\dOm\to\C$  (or just $f_\Cvr$ for shortness) an analytic function which does not vanish identically and satisfies the following properties:
\def\reflistA{\reflist{a}}
\def\reflistB{\reflist{b}}
\def\reflistC{\reflist{c}}
\def\reflistD{\reflist{d}}
\def\reflistE{\reflist{e}}
\begin{list}{(\reflist{\alph{Listcounter}})}
{\usecounter{Listcounter} \setcounter{Listcounter}{0} \labelsep 6pt \itemindent 0pt}
\label{ListA}
\item $f_{\Cvr}(a,z)\equiv -f_\Cvr(a,z^*)$ everywhere in $\wt{\Omega}$;
\label{ListB}
\item 
$f_\Cvr(a,\cdot)$ is continuous up to $\partial \wt{\Omega}$ except, possibly, at the single-point boundary components and at $a$, and
satisfies Riemann boundary conditions
$$
f_\Cvr(a,z)\sqrt{in_z}\in \R, \quad z\in\partial \wt{\Omega}\setminus \{a,w_1,\dots,w_s\},
$$
where $n_z$ denotes the \emph{outer} normal to $\Omega$ at $z$;
\item 
\label{ListC}
for each single-point boundary component $\{w_j\}$ the following is fulfilled:\\
if $\Cvr$ branches around $w_j$, then there exists a \emph{real} constant $c_j$ such that
$$
f_\Cvr(a,z) = \frac{\sqrt{i}\,c_j}{\sqrt{z-w_j}} + O(1) \quad \text{as}\ \ z\to w_j;
$$
otherwise $f_\Cvr$ is bounded near $w_j$, and thus has a removable singularity there;
\item
\label{ListD}
in a vicinity of the point $a$, one has
$$
f_\Cvr(a,z)= \frac{\sqrt{in_a}\,c^{a}}{z-a} +O(1) \quad \text{as}\ \ z\to a
$$
for a \emph{real} constant $c^{a}$.
\end{list}

The properties (\reflistB)--(\reflistD) should be thought of as natural continuous analogues of those satisfied by $F_\Cvr(a,\cdot)$ on the discrete level. Namely, (\reflistB) corresponds to the boundary condition (\ref{bc_0}); (\reflistC) turns out to be the correct formulation of this boundary condition for microscopic holes; and (\reflistD) states that $f_\Cvr$ has the simplest possible behaviour near $a$, which roughly resembles the fact that $F_\Cvr(a,\cdot)$ fails to satisfy (\ref{bc_0}) at one boundary edge only.

\begin{rem}
Lemma~\ref{lemma: uniqueness_bis} below shows that properties (\reflistA)--(\reflistD) define the function $f_\Cvr$ uniquely up to multiplication by a real constant; moreover, $c^a\neq 0$ unless $f_\Cvr$ vanishes identically. Sometimes it is convenient to fix this constant so that
\begin{equation}
 \label{norm_h_a}
c^a=1.
\end{equation}
However, below we also work with non-smooth domains, for which $c^a$ is not well-defined; therefore, we prefer to keep the multiplicative normalization of $f_\Cvr$ unfixed.
\end{rem}

The boundary value problem (\reflistA)--(\reflistD) is not easy to analyse directly. However, the following trick relates it to a much simpler Dirichlet-type problem: given a spinor $f_\Cvr$, denote
$$
h_\Cvr(z):=\Im\int^{z} (f_\Cvr(\zeta))^2d\zeta,\quad z\in\Omega.
$$
Note that the function $(f_\Cvr(\zeta))^2=(f_\Cvr(\zeta^\ast))^2$ is analytic in $\Omega$, so $h_\Cvr$ is locally well-defined and harmonic.

\newcommand\reflisth[1]{#1$\vphantom{a}^\circ_{h}$}
\def\reflistBh{\reflisth{b}}
\def\reflistCh{\reflisth{c}}
\def\reflistDh{\reflisth{d}}

\begin{lemma}
\label{lemma: h}
A holomorphic spinor $f_\Cvr$ satisfies the conditions (\reflistB)--(\reflistD) if and only if $h_\Cvr$ is a single-valued harmonic function satisfying the following properties:
\begin{list}{(\reflisth{\alph{Listcounter}})}
{\usecounter{Listcounter} \setcounter{Listcounter}{1} \labelsep 6pt \itemindent 0pt}
\item $h_\Cvr$ is continuous up to $\partial \Omega\setminus\{a,w_1,\dots w_s\}$; moreover, $h_\Cvr\equiv\const$ and $\partial_n h_\Cvr \le 0$ on all macroscopic inner boundary components $\gamma_{s+1},\dots,\gamma_k$ and the outer boundary of $\Omega$, where $\partial_n$ stands for the outer normal derivative;
\item $h_\Cvr$ is bounded from above near single-point boundary components $\{w_j\}$;
\item $h_\Cvr$ is bounded from below near the point $a$.
\end{list}
\end{lemma}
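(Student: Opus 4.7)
My plan is to prove the two implications by exploiting that, by the spinor property (\reflistA), the function $(f_\Cvr)^2$ descends to a single-valued holomorphic function on $\Omega$ (off the singular set). Consequently $h_\Cvr=\Im\int^{z}(f_\Cvr)^2\,d\zeta$ is locally a well-defined harmonic function, and global single-valuedness is equivalent to requiring $\Im\oint_{\gamma}(f_\Cvr)^2\,d\zeta=0$ for every closed cycle $\gamma\subset\Omega\setminus\{a,w_1,\ldots,w_s\}$. All the other statements are local: boundary behavior and behavior near the prescribed singularities. So the proof splits into (i) a check on the smooth boundary arcs, (ii) a residue/period computation at each singular point, and (iii) a local expansion near the singularities.

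For the \emph{forward} direction, on a smooth arc with tangent $\tau$ and outer normal $n=-i\tau$, the Riemann condition (\reflistB) $f_\Cvr\sqrt{in_z}\in\R$ reads $(f_\Cvr)^2\,\tau\in\R_{\geq0}$; a direct computation then gives $\partial_\tau h_\Cvr=\Im((f_\Cvr)^2\tau)=0$ and $\partial_n h_\Cvr=-|f_\Cvr|^2\leq 0$, yielding the outer-normal inequality and constancy of $h_\Cvr$ on each smooth component of $\pa\Omega$ and also that the integral along any such arc is real. For the singular points, (\reflistC) gives $(f_\Cvr)^2=ic_j^2(z-w_j)^{-1}+O(1)$ at a branching $w_j$, so the residue $ic_j^2$ is purely imaginary and $\Im\oint(f_\Cvr)^2\,d\zeta=0$; integrating yields $h_\Cvr=c_j^2\log|z-w_j|+O(1)$, which is bounded above. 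At a non-branching $w_j$ the function $(f_\Cvr)^2$ extends holomorphically. At $a$, (\reflistD) gives $(f_\Cvr)^2=in_a(c^a)^2(z-a)^{-2}+O(1)$, hence a double pole with zero residue, and $h_\Cvr=\Im(-in_a(c^a)^2/(z-a))+O(1)$; a short geometric argument using that $(z-a)/n_a$ stays in a half-plane (because $\Omega$ lies on one side of its boundary at $a$) shows this is bounded below inside $\Omega$. Collecting the vanishing of all imaginary periods proves single-valuedness and hence (\reflistBh)--(\reflistDh).

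For the \emph{backward} direction, given $h_\Cvr$ satisfying (\reflistBh)--(\reflistDh), construct a local holomorphic $F$ with $\Im F=h_\Cvr$ (possible by single-valuedness of $h_\Cvr$); then $g:=F'$ is a well-defined single-valued holomorphic function on $\Omega\setminus\{a,w_1,\ldots,w_s\}$, and set $f_\Cvr:=\sqrt{g}$ on $\wt\Omega$. The delicate point is to verify that the monodromy of $\sqrt{g}$ matches exactly the prescribed double cover $\Cvr$. Near a branching $w_j$, the condition $h_\Cvr=c_j^2\log|z-w_j|+O(1)$ (forced by $\partial_n h_\Cvr\leq0$, constancy of $h_\Cvr$ on $\pa\Omega$, and boundedness from above) produces a simple pole of $g$, hence odd-order branching of $\sqrt g$ at $w_j$; near a non-branching $w_j$, the bound from above combined with harmonicity forces $g$ to extend holomorphically so no branching appears. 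Reversing the calculation on smooth boundary arcs translates $\partial_\tau h_\Cvr=0$, $\partial_n h_\Cvr\leq0$ into $(f_\Cvr)^2\tau\in\R_{\geq0}$, giving (\reflistB); and the pole expansion of $h_\Cvr$ near $a$ inverts to (\reflistD).

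The main obstacle I expect is in the backward direction, namely making the extraction of $\sqrt{g}$ rigorous on the prescribed double cover: one needs to check that the branching structure dictated by the zeros/poles of $g$ and by the monodromy of $g$ around each macroscopic inner boundary component precisely agrees with the branching of $\Cvr$. This uses the uniqueness of $f_\Cvr$ up to a real multiplicative constant (Lemma~\ref{lemma: uniqueness_bis}) together with the explicit local models at each branch point described above.
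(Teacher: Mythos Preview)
Your forward direction is correct and essentially identical to the paper's argument, with a bit more care about single-valuedness via the residue computation; the paper simply observes that $h_\Cvr$ is constant along each macroscopic boundary component and single-valued near each $w_j$, which amounts to the same thing.

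In the backward direction you have misread the statement of the lemma. The lemma begins ``A holomorphic spinor $f_\Cvr$ satisfies \ldots'': the spinor $f_\Cvr$ on the prescribed cover $\Cvr$ is \emph{given}, and $h_\Cvr$ is defined from it. The task is only to show that the conditions (\reflistBh)--(\reflistDh) on $h_\Cvr$ imply (\reflistB)--(\reflistD) on this given $f_\Cvr$. There is therefore no need to reconstruct $f_\Cvr$ as $\sqrt{g}$ and no monodromy to match: you already know $f_\Cvr$ is a spinor on $\wt\Omega$, hence $(f_\Cvr)^2$ is single-valued on $\Omega$, and the local parity of $f_\Cvr$ near each $w_j$ (odd in $(z-w_j)^{1/2}$ if $\Cvr$ branches, single-valued otherwise) is already fixed. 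With this in hand, the backward direction is just the local computations you did in the forward direction, run in reverse --- exactly as the paper does. Your ``main obstacle'' evaporates, and your appeal to Lemma~\ref{lemma: uniqueness_bis} should be dropped: that lemma is proved \emph{using} the present one, so invoking it here would be circular.
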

\begin{proof}
Let $f_\Cvr$ be a holomorphic spinor 
such that (\reflistB)--(\reflistD) are fulfilled. On the smooth boundary $\partial\Omega\setminus\{a,w_1,\dots,w_s\}$, one can reformulate the boundary condition (\reflistB) as $(f_\Cvr(z))^2\cdot in_z\ge 0$, which is equivalent to say that $\partial_{in} h_\Cvr \equiv 0$ and $\partial_n h\leq 0$, as stated by (\reflistBh). Further, (\reflistC) is equivalent to say that $h(z)=c_j^2\log|z-w_j|+O(1)$ as $z\to w_j$, hence (\reflistCh) holds true. In particular, $h_\Cvr$ is single-valued in $\Omega$ as it is single-valued near each of $w_j$ and constant along each of macroscopic boundary components. Similarly, (\reflistD) can be rewritten as $h(z)= -(c^a)^2\,\Re\frac{n_a}{z-a}+O(1)$ as $z\to a$, which is equivalent to (\reflistDh).

Vice versa, if $h_\Cvr$ satisfies Dirichlet boundary conditions on smooth macroscopic boundary components, then $f_\Cvr=[\partial_yh_\Cvr+i\partial_xh_\Cvr]^{1/2}$ is continuous up to \mbox{$\pa\widetilde\Omega\setminus\{a,w_1,\dots,w_s\}$}. Then, one can easily apply the similar arguments as above to deduce (\reflistB)--(\reflistD) from (\reflistBh)--(\reflistDh).
\end{proof}

\begin{lemma}
\label{lemma: uniqueness_bis}
The holomorphic spinor $f^{\Omega}_\Cvr(a,\cdot)$ with the properties (\reflistA)--(\reflistD) above, if exists, is unique up to multiplication by a positive constant. Moreover, if $\phi:\Omega\to\Omega'$ is a conformal map, then, again up to multiplicative constants,
\begin{equation}
\label{eq: conf_inv}
f^{\Omega}_\Cvr(a,\cdot)=\left(\phi'(z)\right)^{\frac{1}{2}}f^{\Omega'}_{\Cvr'}(\phi(a),\phi(\cdot))
\end{equation}
and $h^{\Omega}_\Cvr(a,\cdot)=h^{\Omega'}_{\Cvr'}(\phi(a),\phi(\cdot))$, with $\Cvr'$ being the pushforward of $\Cvr$ by $\phi$.
\end{lemma}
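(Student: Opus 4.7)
The plan is to use Lemma~\ref{lemma: h} to reduce the boundary value problem (\reflistA)--(\reflistD) for the spinor $f_\Cvr$ to the Dirichlet-type problem (\reflistBh)--(\reflistDh) for the single-valued harmonic function $h_\Cvr=\Im\int^z (f_\Cvr(\zeta))^2 d\zeta$, for which uniqueness follows from a standard maximum principle argument.

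The core step is to show that any spinor satisfying (\reflistA)--(\reflistD) with $c^a=0$ vanishes identically. If $c^a=0$, the associated $h_\Cvr$ is bounded from both sides near $a$, constant on each macroscopic boundary component with $\partial_n h_\Cvr\le 0$ there, and near each single-point component $\{w_j\}$ is either a bounded harmonic function with a removable singularity (if $\Cvr$ does not branch at $w_j$) or behaves like $c_j^2\log|z-w_j|+O(1)\to -\infty$ (if $\Cvr$ branches). Consequently $M:=\sup_{\overline{\Omega}}h_\Cvr<\infty$, and this supremum cannot be attained at an interior point (strong maximum principle would force $h_\Cvr\equiv M$ and hence $f_\Cvr\equiv 0$), nor at a smooth boundary point of a macroscopic component (Hopf's lemma would give $\partial_n h_\Cvr>0$, contradicting (\reflistBh)), nor at any $w_j$ (excluded by the local bounds just listed). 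Hence $h_\Cvr$ is constant, so $(f_\Cvr)^2\equiv 0$ and $f_\Cvr\equiv 0$. As a byproduct, $c^a\neq 0$ unless $f_\Cvr\equiv 0$, which is precisely the remark accompanying (\ref{norm_h_a}).

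Uniqueness up to a real multiplicative constant now follows at once: if $f_1,f_2$ both satisfy (\reflistA)--(\reflistD), either both are $\equiv 0$ or we may assume $c_1^a\neq 0$, in which case $f_2-(c_2^a/c_1^a)f_1$ meets the same assumptions with vanishing residue at $a$ and hence vanishes by the core step. For conformal covariance, set $g(z):=(\phi'(z))^{1/2}f^{\Omega'}_{\Cvr'}(\phi(a),\phi(z))$ and verify directly that $g$ satisfies (\reflistA)--(\reflistD) on $\Omega$: (\reflistA) holds because $\phi$ commutes with the deck involution; (\reflistB) follows from the transformation $\sqrt{in_{\phi(z)}}=(\phi'(z)/|\phi'(z)|)^{1/2}\sqrt{in_z}$ combined with the $(\phi'(z))^{1/2}$ factor; (\reflistC) and (\reflistD) follow from first-order expansions of $\phi$ at $w_j$ and at $a$, with the real coefficients $c_j,c^a$ scaled by the positive factors $|\phi'(w_j)|^{-1/2}$ and $|\phi'(a)|^{-1/2}$ respectively. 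Then (\ref{eq: conf_inv}) follows from the uniqueness statement, and the identity for $h_\Cvr$ by integrating $(g(\zeta))^2 d\zeta=(f^{\Omega'}_{\Cvr'}(\phi(\zeta)))^2 d(\phi(\zeta))$.

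The main obstacle is the core step: ruling out a maximum of $h_\Cvr$ on any component of $\partial\Omega$. This relies essentially on the smoothness hypothesis for macroscopic components (to enable Hopf's lemma) and on the sharp dichotomy of the local behaviour at $w_j$ imposed by (\reflistC); once this rigidity is in place, the rest of the argument is formal.
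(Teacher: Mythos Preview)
Your proof is correct and follows essentially the same route as the paper: form a real linear combination that kills the singularity at $a$, pass to the harmonic function $h=\Im\int f^2$, and use the maximum principle (with $\partial_n h\le 0$ on macroscopic components and $h$ bounded above near the $w_j$) to force $h\equiv\mathrm{const}$; the paper leaves the conformal-covariance verification to the reader, whereas you spell out the transformation of the normal and the rescaling of the coefficients. Your explicit invocation of Hopf's lemma is exactly what underlies the paper's one-line claim that ``$h$ cannot attain its maximum'' on the smooth boundary.
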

\begin{proof}
Suppose that $f_1$, $f_2$ both satisfy (\reflistA)--(\reflistD). Then, one can compose a linear combination $f:=c^a_2f_1- c^a_1f_2$ with non-zero coefficients, such that $f$ satisfies \mbox{(\reflistA)--(\reflistC)} and is \emph{bounded} in a neighborhood of $a$. As above, define a harmonic function
$h(z):=\Im\int^{z} (f(\zeta))^2d\zeta$, and note that it is continous up to $\pa\Omega\setminus\{w_1,\dots,w_s\}$.

As (\reflistB) implies $\partial_n h\leq 0$ everywhere on macroscopic boundary components, the function $h$ cannot attain its maximum there. Also, (\reflistC) says that $h$ is bounded from above in a neighborhood of each $w_j$.
Then, the maximum principle gives $h\equiv\const$ and $f\equiv 0$ everywhere in $\Omega$, i.e., $f_1$ and $f_2$ are proportional to each other.

For the second claim, it is sufficient to check the properties (\reflistA)--(\reflistD) for the right-hand side of (\ref{eq: conf_inv})
and apply uniqueness, which we leave to the reader.
\end{proof}

The existence of a non-trivial solution $f_\Cvr$ to the above boundary value problem will follow from Theorem \ref{thm: mainconv_int_1}; we also refer the reader to \cite{HonPhong} for a purely analytic techniques developed for boundary problems of this kind.

The conformal covariance property (\ref{eq: conf_inv}) immediately allows one to extend the definition of $f_\Cvr^{\Omega}$ to non-smooth of unbounded domains:

\begin{definition}
If $\Omega$ is a finitely connected bounded domain, such that $\partial \Omega$ consists of smooth arcs and single points, we define $f_\Cvr^{\Omega}(a,\cdot)$  as the unique, up to a multiplicative constant, non-zero solution to the boundary value problem (\reflistA)--(\reflistD).
Otherwise we define it by (\ref{eq: conf_inv}), where $\Omega'$ is any smooth bounded domain.

Further, we choose a harmonic function $h_\Cvr^{\Omega}(a,z):=\int^z (f_\Cvr^{\Omega}(a,\zeta))^2d\zeta$ so that $h_\Cvr^{\Omega}(a,\cdot)\equiv 0$ on the boundary component of $\Omega$ containing the point $a$, thus $h_\Cvr^\Omega$ is defined up to a multiplicative constant as well.
\end{definition}

\begin{rem}
An \emph{equivalent} definition of $h_\Cvr^\Omega(a,\cdot)$ for non-smooth domains would be to impose the conditions (\reflistBh)--(\reflistDh) given in Lemma~\ref{lemma: h} and the condition that $f_\Cvr:=[\partial_yh_\Cvr+i\partial_xh_\Cvr]^{1/2}$ is a spinor on $\widetilde\Omega$. Indeed, the only condition in in Lemma~\ref{lemma: h} that relies on smoothness of $\partial\Omega$ is $\partial_n h_\Cvr\le 0$ which can be reformulated in the following conformally invariant way:
\begin{equation}
\label{eq: d_n_h_le_0}
\text{there~is~no}~ b\in\pa\Omega\setminus\{a,w_1,\dots,w_s\}~\text{such~that}~h_\Cvr(\cdot)<h_\Cvr(b)~\text{near}~b.
\end{equation}
\end{rem}

For our convergence results, we assume that discrete domains $\Od$ approximate $\Omega$ in the Carath\'eodory topology, see \cite{Pomm} or \cite[Section 3.2]{CHS}. The reader unfamiliar with that notion can think of the (stronger) Hausdorff convergence. To simplify notation, we also assume that $\Od$ has the same topology as~$\Omega$. 
The first theorem says that discrete spinors defined in Section~\ref{subsec: discr_hol_spinors} (with respect to a fixed double cover $\Cvr$ of the refining domains $\Od$) are uniformly close to their continuous counterparts in the bulk of $\Omega$.

\begin{theorem}
\label{thm: mainconv_int_1}
 Suppose that $\Od$ is a sequence of discrete domains of mesh size $\delta$ approximating (in the sense of Carath\'eodory) a continuous finitely connected domain $\Omega$, and that $a^\mesh\in \pa\dOd$ tends to some $a\in\pa \widetilde{\Omega}$ which is not a single-point boundary component. Then, there exists a sequence of normalizing factors $\beta(\delta)=\beta(\delta;\Od,a^\delta,\Cvr)$ such that
$$
\beta(\delta) F_\Cvr(a^\delta,\cdot)\to f^\Omega_\Cvr(a,\cdot),\quad \delta\to 0,
$$
uniformly on compact subsets of $\Omega$.
\end{theorem}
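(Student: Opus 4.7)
The plan is to adapt the Chelkak--Smirnov strategy for convergence of s-holomorphic functions: pass from the discrete spinor $F_\Cvr$ to a real discrete primitive $H^\delta$ (a discrete counterpart of the function $h_\Cvr$ from Lemma~\ref{lemma: h}), prove compactness of suitably renormalized primitives, extract subsequential limits, and identify them via Lemma~\ref{lemma: uniqueness_bis}.

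Step 1 (discrete primitive). Although $F_\Cvr$ is a spinor on $\dOd$, its square $(F_\Cvr)^2$ descends to a function on $\Od$, and the s-holomorphicity (\ref{shol}) allows one to define a single-valued discrete harmonic function $H^\delta$ on $\Od$ playing the role of $\Im\int^{z}(f_\Cvr(\zeta))^2\,d\zeta$. The Riemann-type boundary condition (\ref{bc_0}) forces $H^\delta$ to be locally constant along each boundary component of $\Od$, with a sign-definite normal derivative, mirroring (\reflistBh). The failure of (\ref{bc_0}) at the single half-edge $a^\delta$ creates a local defect in $H^\delta$ near $a^\delta$ which, after rescaling, matches the lower-bound condition (\reflistDh). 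Finally, near each microscopic hole around which $\Cvr$ branches, the sign $(-1)^{l(S)}$ produces a logarithmic upper bound for $H^\delta$, mirroring (\reflistCh).

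Step 2 (normalization and precompactness). Fix $\beta(\delta)$ so that, after subtracting its value on the boundary component of $a^\delta$, the total oscillation of $\beta(\delta)^2 H^\delta$ over the macroscopic boundary components of $\Od$ equals $1$. The precompactness theorems for s-holomorphic functions with uniformly bounded primitives (from \cite{CHS, CHS2}) then yield equicontinuity of $\beta(\delta) F_\Cvr(a^\delta,\cdot)$ on compact subsets of $\Omega$ away from $a$ and away from the single-point boundary components, giving subsequential convergence to some pair $(f,h)$ uniformly on such compacts.

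Step 3 (identification). The spinor property (\reflistA) passes to the limit, and the boundary/singularity conditions (\reflistB), (\reflistC), (\reflistD) are deduced from their discrete counterparts via standard discrete-to-continuous tools (discrete Poisson and Green's kernels, maximum principles, and harmonic-measure estimates adapted to Carath\'eodory convergence as in \cite[Section~3]{CHS}). Lemmas~\ref{lemma: h} and~\ref{lemma: uniqueness_bis} then identify $f$ with a positive multiple of $f_\Cvr^\Omega(a,\cdot)$; the normalization of $H^\delta$ fixes this multiple, so every subsequential limit coincides and the full sequence converges. The main obstacle is twofold: first, one must show that the chosen normalization does not produce an identically vanishing limit, i.e., that $\beta(\delta)^2 H^\delta$ oscillates on \emph{macroscopic} scales rather than being concentrated near $a^\delta$ or near a single microscopic hole; second, and more delicate, is the verification of (\reflistC) around the branching microscopic holes, where one must show that the discrete spinor converges to a $(z-w_j)^{-1/2}$ singularity with a \emph{real} coefficient. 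This step requires careful local analysis: comparing $F_\Cvr$ with explicit discrete spinor Green's functions on annuli around each such hole and exploiting the reality constraint imposed by the discrete boundary conditions there.
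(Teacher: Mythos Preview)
Your overall architecture matches the paper's, but the normalization in Step~2 has a genuine gap. You propose to fix $\beta(\delta)$ so that the oscillation of $\beta(\delta)^2 H^\delta$ over the \emph{macroscopic} boundary components equals~$1$. This fails already in the most important case of Corollary~\ref{cor: intRatio2}: a simply connected domain with $m$ punctures has a single macroscopic boundary component, on which $H^\delta\equiv 0$ by construction, so your normalizing oscillation is identically zero. Even with several macroscopic components, nothing prevents the limiting $h_\Cvr$ from taking the same constant on all of them while the nontrivial behaviour is concentrated near $a$ and the punctures. The paper's cure is to normalize instead by $(\beta^\delta_r)^{-2}:=\max_{\Omega^\delta(r)}|H^\delta|$, where $\Omega^\delta(r)$ is $\Omega^\delta$ with $r$-balls around $a$ and all single-point components removed; this quantity is always positive (Remark~\ref{rem: Hnonzero}). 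The real work is then Lemma~\ref{lemma: compareNorm}, which shows $\beta^\delta_{2r}/\beta^\delta_r$ is bounded uniformly in $\delta$: this is exactly the statement that the mass of $H^\delta$ is not concentrated near $a$ or near a puncture, and it is what rules out a trivial limit. Its proof is a sub/superharmonic path argument (Fig.~\ref{Fig: paths}), not a consequence of general precompactness.

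Separately, your plan for (\reflistC) via ``explicit discrete spinor Green's functions on annuli'' is harder than necessary and not what the paper does. Once $\beta^\delta_r$ is chosen as above, subharmonicity of $H^\delta_\bullet$ and property~(\ref{H: normal}) in Proposition~\ref{prop: propertiesH} immediately give $\beta^\delta_r H^\delta_\bullet\le 1$ inside each $B_r(w_j)$, so any subsequential limit $H$ is bounded above near $w_j$. Lemma~\ref{lemma: h} then converts this into (\reflistC), including the reality of the coefficient, with no local Green's function analysis needed. Likewise, the behaviour near $a$ comes from superharmonicity of $H^\delta_\circ$ and the boundary value $0$ there, giving $H\ge -1$ near $a$ and hence (\reflistD) via Lemma~\ref{lemma: h}.
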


\begin{proof}
See Section~\ref{Sec: proof}.
\end{proof}

When extending this convergence to the boundary, we will impose additional regularity assumptions:

\begin{definition} \label{def: reg_conv}
 We say that a sequence of discrete domains $\Od$ with marked boundary points $b^\mesh$ approximating a planar domain $\Omega$ with a marked boundary point $b$ is \emph{regular} at $b$, if
 \begin{itemize}
 \item near $b$, the boundary $\pa\Omega$ locally coincide with a horizontal or vertical line;
 \item there exist $s,t>0$, such that, for any $\delta$, $V(\Od)$ contains a discrete rectangle
\[
\textstyle R^\delta(s,t):=\{\delta\cdot (k+i(l\!+\!\frac{1}{2})):~-s\leq k\mesh\leq s, 0\leq
l\mesh\leq t\},
\]
shifted and rotated so that $b^{\delta}$ is the midpoint of its boundary side, and
$\partial \Od$ coincides with that side in the $s$-neighborhood of $b^{\delta}$.
 \end{itemize}
 \end{definition}

\begin{rem}
In fact, all our results can be directly extended to the case of a straight, but not necessarily vertical or horizontal boundary (cf. \cite[Theorem~5.6]{CHS2}). Some additional technicalities are required to prove this result in the full generality: note that $f_\Cvr^\Omega$ is not even continuous or bounded on the non-smooth boundary, so one is forced to work with ratios, as, e.g., in Theorem \ref{thm: mainconv} below.
\end{rem}

\begin{theorem}
\label{thm: mainconv} Let $\Cvr_1, \Cvr_2$ be two double covers of a bounded finitely connected domain $\Omega$ with two marked points $a,b$ on the outer boundary component, and let $\Od$ converge to $\Omega$ in the Carath\'eodory sense, $a^\mesh, b^\mesh$ be boundary points converging to $a,b$ as $\mesh\to 0$, and this convergence is regular at $a$ and $b$. Then,
\begin{equation}
\label{mainconv} \frac{F_{\Cvr_1}(a^\delta,b^\delta)F_{\Cvr_2}(a^\delta,a^\delta)}
{F_{\Cvr_1}(a^\delta,a^\delta)F_{\Cvr_2}(a^\delta,b^\delta)}\ \rightarrow\
\frac{f_{\Cvr_1}(a,b)} {f_{\Cvr_2}(a,b)}
\end{equation}
where both $f_{\Cvr_{1}}$ and $f_{\Cvr_2}$ are normalized by (\ref{norm_h_a}).
\end{theorem}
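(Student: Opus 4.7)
The plan is to combine the interior convergence of Theorem~\ref{thm: mainconv_int_1} with two boundary-regularity arguments, one at $b$ and one at~$a$. Let $\beta_i(\delta)$, $i=1,2$, be the normalizing factors from Theorem~\ref{thm: mainconv_int_1} chosen so that
$\beta_i(\delta) F_{\Cvr_i}(a^\delta,\cdot)\to f_{\Cvr_i}(a,\cdot)$
uniformly on compact subsets of $\Omega$, where $f_{\Cvr_i}$ is normalized by (\ref{norm_h_a}), i.e.\ with $c^a_i=1$. Since the expression in (\ref{mainconv}) is invariant under independent scalar rescalings of $F_{\Cvr_1}$ and $F_{\Cvr_2}$, its left-hand side equals
\[
\frac{\beta_1 F_{\Cvr_1}(a^\delta,b^\delta)}{\beta_2 F_{\Cvr_2}(a^\delta,b^\delta)}\cdot\frac{\beta_2 F_{\Cvr_2}(a^\delta,a^\delta)}{\beta_1 F_{\Cvr_1}(a^\delta,a^\delta)}.
\]
It therefore suffices to show that the first factor tends to $f_{\Cvr_1}(a,b)/f_{\Cvr_2}(a,b)$ and that the second factor tends to~$1$.

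For the first factor, I would extend the bulk convergence to~$b$. By the regularity assumption at $b$, $\pa\Od$ contains a long straight segment centered at $b^\delta$, on which $F_{\Cvr_i}(a^\delta,\cdot)\parallel\eta_{b^\delta}$ by (\ref{bc_0}). A discrete Schwarz reflection across this segment (cf.\ \cite{CHS2}) extends the s-holomorphic function $\beta_i F_{\Cvr_i}(a^\delta,\cdot)$ to a full lattice neighborhood of $b$ on which standard equicontinuity estimates for s-holomorphic families apply. Combined with Theorem~\ref{thm: mainconv_int_1} on compacts of $\Omega$, this yields $\beta_i F_{\Cvr_i}(a^\delta,b^\delta)\to f_{\Cvr_i}(a,b)$ for each $i$, giving the convergence of the first factor.

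The second factor is the main technical point. The idea is that the two observables share the \emph{same} singular behavior at the source~$a^\delta$. In a small disk around $a$ that does not encircle any inner boundary component, both covers $\Cvr_i$ trivialize, so $\beta_i F_{\Cvr_i}(a^\delta,\cdot)$ may be viewed there as a complex-valued function converging to $f_{\Cvr_i}(a,\cdot)$. By the normalization $c^a_i=1$, the two limits have the same leading singularity $\sqrt{in_a}/(z-a)$ at $a$, and differ by a function bounded near~$a$. Using the regularity at $a$ to model the discrete geometry near $a^\delta$ by a half-plane, one obtains an expansion of the form
$\beta_i F_{\Cvr_i}(a^\delta,a^\delta)=G(\delta)+R_i(\delta)$,
where $G(\delta)\sim\delta^{-1}$ is a $\Cvr_i$-independent lattice quantity coming from the universal source singularity, and $R_i(\delta)=O(1)$. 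Dividing gives $\beta_1 F_{\Cvr_1}(a^\delta,a^\delta)/\beta_2 F_{\Cvr_2}(a^\delta,a^\delta)\to 1$, as needed.

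The main obstacle is precisely this last step: rigorously extracting the universal source singularity $G(\delta)$ and controlling the $\Cvr_i$-dependent remainder $R_i(\delta)$ all the way up to the boundary edge~$a^\delta$. Concretely, this requires upgrading the interior convergence of Theorem~\ref{thm: mainconv_int_1} to a boundary estimate at $a$, using regularity of the approximation there together with a model computation for the discrete spinor near a straight boundary at its source. Once this is in place, assembling the three ingredients yields~(\ref{mainconv}).
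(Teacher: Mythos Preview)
Your two-factor decomposition is exactly the paper's; it packages the two factors as Lemma~\ref{lemma: onboundary} (the $b$-factor) and Lemma~\ref{lemma: diffCovers} (the $a$-factor), and the proof of Theorem~\ref{thm: mainconv} is then one line.

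At $b$, the paper does not argue via Schwarz reflection, and your sketch has a gap there. Reflection alone does not suffice: after reflecting you would still need uniform bounds on $\beta_i F_{\Cvr_i}$ in a full disc around $b$ to invoke equicontinuity, but Theorem~\ref{thm: mainconv_int_1} only controls compact subsets of $\Omega$, which never reach $\pa\Omega$. The paper instead works with $H^\delta=\Im\int(F^\delta)^2$: since $H^\delta$ satisfies Dirichlet boundary conditions near $b$, a sub/superharmonic comparison with the explicit model $\mathcal{H}^\delta(z)\approx\Im z$ (coming from the constant s-holomorphic function $\mathcal{F}^\delta\equiv 1$) pins down $|\beta(\delta)F^\delta(b^\delta)|^2\to (f_\Cvr(b))^2$ directly at the single boundary half-edge, without any reflection.

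At $a$, your instinct that the singularity is ``the same'' is correct, but the paper's execution is cleaner than extracting a universal $G(\delta)$. Setting $K^\delta:=\beta_1 F_{\Cvr_1}(a^\delta)/\beta_2 F_{\Cvr_2}(a^\delta)>0$, the \emph{real} linear combination $F^\delta:=K^\delta\,\beta_2 F^\delta_{\Cvr_2}-\beta_1 F^\delta_{\Cvr_1}$ is s-holomorphic and, by construction, \emph{vanishes} at $a^\delta$; hence it satisfies (\ref{bc_0}) on the \emph{entire} straight boundary segment near $a$. Its $H$-function then vanishes along that segment, and sub/superharmonicity together with (\ref{HmRectangles}) give $H^\delta(u)=O(\Im u)$ uniformly, so the limit of $F^\delta$ is bounded near $a$. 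Since that limit equals $kf_{\Cvr_2}-f_{\Cvr_1}$ for any subsequential limit $k$ of $K^\delta$, boundedness of the pole at $a$ forces $kc^a_2=c^a_1$. This sidesteps any explicit model of the discrete source; note also that $F_{\Cvr_i}(a^\delta,a^\delta)$ is a global partition function (Proposition~\ref{prop: obs_corr}), so your proposed $G(\delta)$ could not be a purely local lattice quantity in the way you suggest.
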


\begin{rem}
Formally, above we should have used different notation $a_1^\delta, a_2^\delta$ etc, to denote points lying on different double covers, with $\Cvr_1(a_1^\delta)=\Cvr_2(a_2^\delta)$. We prefer to keep a simpler notation for the shortness.
\end{rem}

\begin{proof}
See Section~\ref{Sec: proof}.
\end{proof}

In the next corollary, 
let $\Cvr$ be a fixed double cover of $\Omega$, and $\gamma_1,\dots, \gamma_m$ be those inner
components of $\pa\Omega$ for which $\Cvr$ branches around $\gamma_j$. Denote by $\gamma^\mesh_j$ the corresponding components of $\partial\Od$.

\begin{corollary}
\label{cor: ratio} Under the conditions of Theorem~\ref{thm: mainconv}, as $\mesh\to 0$, one has
\begin{equation}
\label{ratio_exp_conv} \frac{\E_{a^\mesh
b^\mesh}[\, \sigma(\gamma^\delta_1)\dots \sigma(\gamma^\delta_m)\,]}
{\E_+[\,\sigma(\gamma^\delta_1)\dots \sigma(\gamma^\delta_m)\,]}\
\rightarrow\ \frac{f^{\Omega}_{\Cvr}(a,b)}
{f^{\Omega}_0(a,b)}=:\vartheta_{ab}^{\Omega}(\gamma_1,\dots,\gamma_m)\,,
\end{equation}
where both $f^\Omega_\Cvr$ and $f^{\Omega}_0$ are normalized by (\ref{norm_h_a}). The limit
$\vartheta_{ab}^{\Omega}(\gamma_1,\dots,\gamma_m)$ is a conformal invariant of $(\Omega;a,b)$.
\end{corollary}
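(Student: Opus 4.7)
The plan is to combine Proposition~\ref{prop: obs_corr} (which identifies boundary values of spinor observables with Dobrushin partition functions weighted by spin products) with Theorem~\ref{thm: mainconv} applied to $(\Cvr_1,\Cvr_2)=(\Cvr,0)$, where $0$ denotes the trivial cover; conformal invariance of the limit will then follow from Lemma~\ref{lemma: uniqueness_bis}.

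The first step is purely algebraic. I would substitute the four formulas of Proposition~\ref{prop: obs_corr} into the left-hand side of~(\ref{mainconv}): $F_\Cvr(a^\delta,b^\delta)$ and $F_0(a^\delta,b^\delta)$ factor as $\pm\eta_{b^\delta}\rZ_{ab}$ times $\E_{ab}[\sigma(\gamma_1^\delta)\dots\sigma(\gamma_m^\delta)]$ and $1$ respectively, while $F_\Cvr(a^\delta,a^\delta)$ and $F_0(a^\delta,a^\delta)$ factor as $i\eta_{a^\delta}\rZ_+$ times $\E_+[\sigma(\gamma_1^\delta)\dots\sigma(\gamma_m^\delta)]$ and $1$ respectively. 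The prefactors $\eta_{b^\delta}$, $i\eta_{a^\delta}$, $\rZ_{ab}$, and $\rZ_+$ cancel pairwise in the ratio, so the left-hand side of~(\ref{mainconv}) equals the ratio on the left of~(\ref{ratio_exp_conv}) identically for every $\delta$; Theorem~\ref{thm: mainconv} then delivers convergence to $f_\Cvr^\Omega(a,b)/f_0^\Omega(a,b)$. The only subtlety is the $\pm$ ambiguity of Proposition~\ref{prop: obs_corr}, which depends on the chosen sheets of $a^\delta,b^\delta$ on each cover: following Remark~\ref{rem: bcinversion} and lifting $b^\delta$ on each of $\dOd$ and the trivial double cover so that the counterclockwise arc $(\Cvr(a^\delta)\Cvr(b^\delta))$ lifts to a path from $a^\delta$ to $b^\delta$, the $\pm$ sign reduces to $-\eta_{a^\delta}e^{-\frac{i}{2}\wind_{ab}}$ independently of the cover, which makes the cancellation automatic.

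For the conformal-invariance claim, I would apply~(\ref{eq: conf_inv}) separately to numerator and denominator. Under any conformal map $\phi\colon\Omega\to\Omega'$, each spinor obeys $f^\Omega_{\Cvr}(a,\cdot)=C_\phi\,(\phi'(\cdot))^{1/2}f^{\Omega'}_{\Cvr'}(\phi(a),\phi(\cdot))$, and once both sides are normalized by $c^a=1$ the constant $C_\phi$ is determined by matching the coefficient of the simple pole at $a$; hence $C_\phi$ depends only on $\phi$ and on $a$ and is the \emph{same} for $\Cvr$ and for the trivial cover. Therefore $C_\phi$ and the factor $(\phi'(b))^{1/2}$ both cancel in the ratio, proving that $\vartheta$ depends only on the conformal class of $(\Omega;a,b;\gamma_1,\dots,\gamma_m)$. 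The main (and really only) obstacle in the argument is the $\pm$-sign bookkeeping of the first step; beyond that, the corollary is a direct two-step deduction from the preceding results.
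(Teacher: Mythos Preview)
Your proposal is correct and follows essentially the same route as the paper: rewrite the ratio of spin expectations via Proposition~\ref{prop: obs_corr} as the four-term ratio $\dfrac{F_\Cvr(a^\delta,b^\delta)F_0(a^\delta,a^\delta)}{F_\Cvr(a^\delta,a^\delta)F_0(a^\delta,b^\delta)}$, apply Theorem~\ref{thm: mainconv} with $(\Cvr_1,\Cvr_2)=(\Cvr,0)$, and deduce conformal invariance from~(\ref{eq: conf_inv}) by noting that the normalization constants for the two covers transform identically. Your treatment of the $\pm$ sign via Remark~\ref{rem: bcinversion} is in fact more explicit than the paper's, which simply absorbs the signs into the cancellation.
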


\begin{proof}
Denote $\sigma(\Gamma):=\sigma(\gamma_1^\mesh)\sigma(\gamma_2^\mesh)\dots\sigma(\gamma_m^\mesh)$. Due
to Proposition~\ref{prop: obs_corr}, one has
\[
\frac{\E_{a^\mesh b^\mesh}[\sigma(\Gamma)]} {\E_+[\sigma(\Gamma)]} =\frac{\rZ_{a^\mesh b^\mesh}\E_{a^\mesh
b^\mesh}[\sigma(\Gamma)]}{\rZ_+\E_+[\sigma(\Gamma)]}\cdot \frac{\rZ_+}{\rZ_{a^\mesh b^\mesh}}
=\frac{F_\Cvr(a^\mesh,b^\mesh)}{F_\Cvr(a^\mesh,a^\mesh)}\cdot
\frac{F_0(a^\mesh,a^\mesh)}{F_0(a^\mesh,b^\mesh)}\,.
\]
Thus, (\ref{ratio_exp_conv}) immediately follows from Theorem~\ref{thm: mainconv}. The limit is conformally invariant due to the same conformal covariance property (\ref{eq: conf_inv}) of both $f_\Cvr^\Omega$ and $f_0^\Omega$ (observe also that the coefficients $c^a_\Cvr,c^a_0$ for both $f_\Cvr$ and $f_0$ change by the same factor $|\phi'(a)|^{-1/2}$ when applying (\ref{eq: conf_inv})).
\end{proof}

\begin{rem}
\label{rem: ratio_ab_ad} (i) Corollary~\ref{cor: ratio_many} gives a generalization of this result for the case of $2n\!+\!2$ marked boundary points and ``$+/-/\dots/+/-$'' boundary conditions.

\smallskip

\noindent (ii) Let a third point $d$ be marked on the outer boundary of $\Omega$ and
the convergence of $\Omega^\mesh$ to $\Omega$ is regular at $d$ as well. Then,
\begin{equation}
\label{ratio_ab_ad} \frac{\rZ_{a^\mesh b^\mesh}}{\rZ_{a^\mesh d^\mesh}}=
\frac{|F_0(a^\mesh,b^\mesh)|}{|F_0(a^\mesh,d^\mesh)|} \ \rightarrow\
\frac{|f_0(a,b)|}{|f_0(a,d)|}\,,
\end{equation}
and this limit is a conformal covariant of the multiply connected domain $(\Omega;a,b,d)$
(namely, it is multiplied by the factor $|\phi'(b)|^{1/2}|\phi'(d)|^{-1/2}$ when applying a
conformal map $\phi$). For simply connected $\Omega$'s, this is given by
\cite[Corollary~5.7]{CHS2}, and we give a proof for multiply connected domains in the end of
Section~\ref{Sec: proof}.
\end{rem}

\setcounter{equation}{0}
\section{Proof of the main convergence theorems}
\label{Sec: proof}

In this section we prove Theorems \ref{thm: mainconv_int_1} and \ref{thm: mainconv} following the scheme
developed in \cite{CHS2} for  simply connected domains. First of all, in order to transform
the boundary conditions (\ref{bc_0}) to the Dirichlet ones, we consider a discrete integral
$H^\mesh:=\Im \int (F^\mesh_\Cvr(z))^2d^\mesh z$. Further, we observe that, under a proper normalization, the functions $H^\mesh$ and $F^\mesh_\Cvr$ have non-trivial subsequential limits on compact subsets of $\Omega$ as $\delta\to 0$. We then show that any such subsequential limit is a solution to the boundary value problem (\reflistA)--(\reflistD), and Lemma \ref{lemma: uniqueness_bis} guarantees that all those limits are the same, concluding the proof of Theorem \ref{thm: mainconv_int_1}. Finally, we treat the behaviour of $F^\mesh_\Cvr$ near the boundary points $b$ and $a$ to prove Theorem \ref{thm: mainconv}.

For technical purposes, we extend our domain slightly: denote by $\pa \mathcal{F}(\Od)$ and $\pa \mathcal{V}(\Od)$
the subsets of faces and vertices that are adjacent but do not belong to $\mathcal{F}(\Od)$ and
$\mathcal{V}(\Od)$, respectively. More precisely, $\pa \mathcal{V}(\Od)$ can be identified with the set
$\mathcal{E}_{\mathrm{bd}}(\Od)$ of boundary half-edges and should be formally considered as a set of
pairs $\{(v;e):e=(v_{\mathrm{int}}v), v\not\in \mathcal{V}(\Od), v_{\mathrm{int}}\in \mathcal{V}(\Od)\}$ (e.g., see
\cite[Section 2.1]{CHS}), and $\pa \mathcal{F}(\Od)$ should be treated in the same way. We set
\[
\overline{\mathcal{V}}(\Od):=\mathcal{V}(\Od)\cup \pa \mathcal{V}(\Od)\quad\text{and}\quad \overline{\mathcal{F}}(\Od):=\mathcal{F}(\Od)\cup \pa
\mathcal{F}(\Od).
\]

We work with the discrete spinor $F(z)=F_\Cvr(a;z)$ defined on a double cover $\dOd$ of a
discrete multiply connected domain $\Od$, which we do not include in the notation unless
needed. Recall that it is s-holomorphic (i.e., satisfies (\ref{shol})) and obeys the boundary conditions
(\ref{bc_0})
at all boundary half-edges $e$, except for one edge $a$ on the boundary. We denote the corresponding vertex of $\pa \mathcal{V}(\Od)$ by $v_a$. Observe also that $F_\Cvr$ is not identically zero, since the positivity of spin correlations and Proposition~\ref{prop: obs_corr} yield $(i\eta_a)^{-1} F_\Cvr(a;a)>0$. These are the only properties of $F_\Cvr(a;\cdot)$ that we will use in this section.

Recall that, in the continuous case, it was proved to be useful to transform the boundary value problem (\reflistA)--(\reflistD) into a Dirichlet-type one by integrating $f_{\Cvr}^2$. The extension of this construction to the discrete setting is delicate, because the square of discrete analytic function need not be discrete analytic. Fortunately, the tools to treate this issue have already been developed in \cite{Smirnov10, CHS2}. Proposition~\ref{prop: propertiesH} below summarizes these tools. Namely, properties (1)--(3) thereof show that one can define the discrete analog of $\Im \int f_{\Cvr}^2$ as a pair of functions $H_\bullet$
and $H_\circ$, one of which is subharmonic and another superharmonic; properties (4)--(6) handle the boundary conditions, and properties (7),(8) show that these two functions cannot be too far from each other. This allows one to work with a pair $H_{\bullet},H_{\circ}$ as if it was a single harmonic function. Essentially, our analysis is based on a priori bounds for $H$ derived from simple harmonic measure estimates combined with the uniqueness of solution to the boundary value problem (\reflistA)--(\reflistD).

We define two functions $H_\bullet$
and $H_\circ$ on $\overline{\mathcal{V}}(\Od)$ and $\overline{\mathcal{F}}(\Od)$, respectively, by the following
rule: if $c\in \Corn(\Od)$ and $e\in \mathcal{E}(\Od)$, $v\in \overline{\mathcal{V}}(\Od)$, $f\in
\overline{\mathcal{F}}(\Od)$ are all incident to $c$, then
\begin{equation}
\label{defH_basic}
H_{\bullet}(v)-H_{\circ}(f):=\sqrt{2}\delta\cdot|\Pr_{\eta_c}(F(\Cvr^{-1}(e)))|^2.
\end{equation}

Thanks to the square, this definition does not depend on the choice of the sheet, and thanks
to the basic definition (\ref{shol}) of s-holomorphicity, it does not depend on the choice of
$e$ between the two edges (or boundary half-edges) adjacent to $c$.

\begin{proposition}
\label{prop: propertiesH} The functions $H_\bullet$ and $H_\circ$ obey the following
properties:

\begin{list}{(\arabic{Listcounter})} {\usecounter{Listcounter}
\setcounter{Listcounter}{0} \labelsep 6pt \itemindent -\parindent}

\item they are well-defined up to an additive constant;
\label{H: welldef}
\item \label{defH}
if $e\in \mathcal{E}(\Od)$ is incident to $f,f'\in \overline{\mathcal{F}}(\Od)$ and $v,v'\in \overline{\mathcal{V}}(\Od)$,
then:
\begin{align*}
H_\bullet(v)-H_\bullet(v')&=\Im[ (F(\Cvr^{-1}(e))^2(v-v')]; \\
H_\circ(f)-H_\circ(f')&=\Im[ (F(\Cvr^{-1}(e))^2(f-f')];
\end{align*}
\item  \label{H: subharm}
$\Delta H_{\circ} (v)\leq 0$ and $\Delta H_{\bullet}(f)\geq 0$ for all $v\in \mathcal{V}(\Od)$, $f\in
\mathcal{F}(\Od)$, where $\Delta$ stands for the discrete Laplacian
\[
(\Delta H)(x):=\frac{1}{4\mesh^2}\sum_{x_k\sim x} (H(x_k)\!-\!H(x));
\]
\item \label{H: normal}
if $v\in \pa \mathcal{V}(\Od)\setminus\{v_a\}$, then $H_{\bullet}(v_{\mathrm{int}})-H_{\bullet}(v)\geq 0$;
\item \label{H: boundary}
$H_\circ\equiv \text{const}=:C_j$ along each component of $\pa \mathcal{F}(\Od)$ (then we fix an
additive constant in the definition of $H$ so that $H_\circ\equiv 0$ on the boundary component that contains $a$);
\item \label{H: modify}
one can modify the discrete Laplacian at all vertices $v_{\mathrm{int}}\in \mathcal{V}(\Od)$ incident to
$v\in \pa \mathcal{V}(\Od)\setminus\{v_a\}$ and set values of $H_{\bullet}$ on $\partial \mathcal{V}(\Od)\setminus\{v_a\}$ to be equal to the corresponding $C_j$, so that
(\ref{H: subharm}) and (\ref{H: normal}) will still hold (moreover, this modification do not destroy any estimates or convergence results for discrete harmonic functions defined on $\mathcal{V}(\Od)$);
\item \label{H: harn1}
if $v_1,v_2,v_3,v_4\in \mathcal{V}(\Od)$ and $f_1,f_2,f_3,f_4\in \overline{\mathcal{F}}(\Od)$ are adjacent to some
inner face $f\in \mathcal{F}(\Od)$ and $m:=\min H_{\circ}(f_j)$, then, for all $j=1,2,3,4$,
\[
H_{\bullet}(v_j)-H_{\circ}(f)\leq \const\cdot(H_\circ(f)-m)
\]
with some universal constant;
\item \label{H: harn2}
if $f_1,f_2,f_3,f_4\in \overline{\mathcal{F}}(\Od)$ and $v_1,v_2,v_3,v_4\in\overline{\mathcal{V}}(\Od)$ are
adjacent to some inner vertex $v\in \mathcal{V}(\Od)$ and $M=\max H_{\bullet}(v_j)$, then, for all $j=1,2,3,4$,
\[
H_{\bullet}(v)-H_{\circ}(f_j)\leq \const\cdot (M-H_\bullet(v))
\]
with some universal constant.
\end{list}

\end{proposition}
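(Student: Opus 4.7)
The plan is to follow the strategy from the simply connected case treated in \cite{Smirnov10} and \cite[Section~3]{CHS2}, noting that the spinor structure of $F_\Cvr$ introduces only one genuinely new issue, namely global single-valuedness. The crucial preliminary observation is that, by the antisymmetry $F_\Cvr(a,z^\ast)\equiv-F_\Cvr(a,z)$, both $|\Pr_{\eta_c}(F_\Cvr(\Cvr^{-1}(e)))|^2$ and the complex square $F_\Cvr(\Cvr^{-1}(e))^2$ are independent of the choice of sheet on $\dOd$, so the quantities entering (\ref{defH_basic}) and property (\ref{defH}) are intrinsically defined on $\Od$. Consequently, the whole construction formally reduces to the simply connected setting edge by edge, and only the global picture needs new attention.

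The local properties (\ref{defH}), (\ref{H: subharm}), (\ref{H: harn1}), (\ref{H: harn2}) then follow verbatim from the computations in \cite[Section~3]{CHS2} and \cite[Section~3]{Smirnov10}: formula~(\ref{defH}) is a direct consequence of the s-holomorphicity (\ref{shol}) applied at the four corners bordering an edge; sub/superharmonicity~(\ref{H: subharm}) is obtained by summing~(\ref{defH_basic}) around a vertex or face and using the identities characterizing the s-holomorphic structure; and the Harnack-type inequalities~(\ref{H: harn1}),~(\ref{H: harn2}) are elementary given the explicit form of the projections $\Pr_{\eta_c}$. The boundary properties~(\ref{H: normal}),~(\ref{H: boundary}) are deduced from the Riemann condition (\ref{bc_0}): at a boundary half-edge $b\ne a$ it forces $F_\Cvr(a,b)=\alpha\eta_b$ with $\alpha\in\R$, whence $F_\Cvr(a,b)^2(f-f')\in\R$ whenever $f-f'$ is tangent to $\pa\Od$, and $\Im[F_\Cvr(a,b)^2(v_{\mathrm{int}}-v)]=\alpha^2\delta\ge 0$. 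Property~(\ref{H: modify}) is the standard observation that setting $H_\bullet$ to equal $C_j$ on $\pa\mathcal{V}(\Od)\setminus\{v_a\}$ can only decrease $H_\bullet$ at these boundary vertices, so the modified Laplacian retains the correct sign at incident inner vertices.

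The main, and essentially only, genuinely new point is global single-valuedness~(\ref{H: welldef}). On a multiply connected $\Od$, the local prescriptions a priori yield multi-valued functions on faces and vertices, so one must rule out monodromy around each inner hole. The proposed resolution is to first build $H_\circ$, $H_\bullet$ on the universal cover of $\Od$ via~(\ref{defH}), establish~(\ref{H: boundary}) there (that argument is purely local and so lifts to the cover), and then observe that the monodromy of $H_\circ$ around any inner boundary component $\gamma_j$ equals the total increment of $H_\circ$ along a discrete loop homotopic to $\gamma_j$ and pushed onto $\pa\Od$; by~(\ref{H: boundary}) this increment vanishes, so $H_\circ$ descends to $\Od$. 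The monodromy of $H_\bullet$ around the same loop vanishes simultaneously, since $H_\bullet-H_\circ$ is determined edge by edge by $F_\Cvr^2$, which is itself single-valued on~$\Od$. The subtlety here is purely a matter of bookkeeping: one must arrange the logical order so that (\ref{H: boundary}) is proved before (\ref{H: welldef}) is claimed globally. No new computation beyond the simply connected case is required.
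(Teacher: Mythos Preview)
Your proposal is correct and matches the paper's own proof essentially line for line: both observe that properties (\ref{defH}), (\ref{H: subharm}), (\ref{H: harn1}), (\ref{H: harn2}) are local consequences of s-holomorphicity, that (\ref{H: normal}), (\ref{H: boundary}) follow from the boundary condition~(\ref{bc_0}), that (\ref{H: modify}) is a local property cited from \cite{CHS2,DHNolin}, and that the only new content is global single-valuedness~(\ref{H: welldef}), which both you and the paper establish by first proving~(\ref{H: boundary}) and then using constancy of $H_\circ$ along each inner boundary component to kill the monodromy around holes. Your universal-cover phrasing is slightly more explicit than the paper's, but the argument is the same.
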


\begin{proof}
All these properties are known in the simply connected case (e.g., see \cite[Section~3]{CHS2}).
Since (\ref{defH}), (\ref{H: subharm}), (\ref{H: harn1}) and (\ref{H: harn2}) are local
consequences of s-holomorphicity (\ref{shol}), they extend immediately to the multiply
connected setup. Properties (\ref{H: normal}) and (\ref{H: boundary}) follow from the boundary
condition (\ref{bc_0}) using (\ref{defH}). The property (\ref{H: modify}) is also a local
property of s-holomorphic function satisfying the boundary condition (\ref{bc_0}), see
\cite[Section~3.6]{CHS2} or \cite[Proof of Proposition~8]{DHNolin}. So we only need to
check (\ref{H: welldef}), i.e. that the summation of (\ref{defH_basic}) along any loop gives
zero. For homotopically trivial loops this follows from the local consistency of the
definition (\ref{defH_basic}) exactly as in the simply connected case, and extends to
loops running around holes by (\ref{H: boundary}).
\end{proof}


\begin{rem}
\label{remark: homogeneous} Note that we have an immediate corollary of properties (\ref{H:
welldef})--(\ref{H: normal}): if $F$ is an s-holomorphic spinor satisfying the boundary
condition (\ref{bc_0}) \emph{everywhere} on $\pa \Od$ including the point $a$, then $F\equiv
0$. Indeed, (\ref{H: subharm}) implies that the corresponding function $H_\bullet$ attains its
maximal value on the boundary, and then $H_\bullet\equiv \text{const}$ due to the property
(\ref{H: normal}) which now holds true everywhere on $\pa \Od$. Moreover, similar arguments
show that a solution to the discrete boundary value problem (\ref{shol}), (\ref{bc_0}) is
unique up to a multiplicative constant (the proof mimics \cite[Remark~5.1]{CHS2}).
\end{rem}

In what follows, we denote by $\text{hm}^{\delta}_{\Od}(v,\gamma)$ the \emph{discrete
harmonic measure} of a set $\gamma\subset\overline{\mathcal{V}}(\Od)$ in the discrete domain $\Od$ viewed from a
vertex $v$. Recall that $\text{hm}^{\delta}_{\Od}(v,\gamma)$ is given by the
probability of the event that the simple random walk starting at $v$ hits $\gamma$ before
$\pa\mathcal{V}(\Od)\setminus \gamma$. We use the same notation
$\text{hm}^{\delta}_{\Od}(f,\gamma)$ for the discrete harmonic measure of a set $\gamma\subset\overline{\mathcal{F}}(\Od)$ viewed from a face $f$.
Essentially, we will use only the following elementary properties of
$\text{hm}^\mesh$, which also are fulfilled for the discrete Laplacian modified near the
boundary as it is mentioned in Proposition~\ref{prop: propertiesH}, property~(\ref{H: modify})
(see \cite[Section~3.6]{CHS2} or \cite[Proof of Proposition~8]{DHNolin}):
\begin{itemize}
\item
\emph{weak Beurling-type estimate:} there exist absolute (i.e., independent of $\mesh$, $\Od$
and $\gamma$) constants $C,p>0$ such that the following estimate holds true:
\begin{equation}
\label{Beurling} \text{hm}^{\mesh}_{\Od}(z,\pa\Od\setminus\gamma)\le C\cdot
\biggl[\frac{\dist(z;\gamma)}{\dist_{\Od}(z;\partial\Od\setminus\gamma)}\biggr]^{\!p},
\end{equation}
where $\dist_{\Od}(z,K)$ means the smallest $r>0$ such that $z$ and $K$ are connected inside $\Od\cap B_r(z)$.
\item
\emph{uniform estimates for the exit probabilities in rectangles:} if $R^\mesh=R^\mesh(2s,t)$
is a discretization of the rectangle $(-2s,2s)\times (0,t)$ with $s\ge t$,
$\gamma_0$, $\gamma_1$~denote correspondingly the bottom and the top side of $R^\mesh$, then
\begin{equation}
\label{HmRectangles} C_1\cdot\frac{\Im z}{t}\le \text{hm}^\mesh_{R^\mesh}(z,\gamma_1) \le
\text{hm}^\mesh_{R^\mesh}(z,\pa R^\mesh\setminus \gamma_0) \le C_2\cdot\frac{\Im z}{t}
\end{equation}
for all $z\in R^\mesh(s,t)$, where $C_1,C_2>0$ are some absolute constants.
\end{itemize}
Note that (\ref{Beurling}) and (\ref{HmRectangles}) hold true for arbitrary isoradial graphs (see~\cite{CHS}).

\smallskip

Now we proceed to the definition of normalizing factors $\beta(\mesh)$ used in Theorem~\ref{thm: mainconv_int_1}. Let a
sequence of discrete domains $\Od$ approximate a finitely connected planar domain $\Omega$, whose boundary $\partial\Omega$ consists of
single-point inner components $\gamma_{1}=\{w_{1}\},\dots,\gamma_s=\{w_s\}$ and continua
$\gamma_{s+1},\dots\gamma_k,\gamma_{k+1}$, where
\emph{$\gamma_{k+1}$ denotes the outer boundary of $\Omega$}.  Let $r_*>0$ be chosen sufficiently small so that, for any $r\le r_*$,
\[
\Omega(r):=\Omega\setminus [\cup_{j=1}^s {B_r(w_j)}\cup{B_{r,\Omega}(a)}]
\]
has the same topological structure as $\Omega$, where $B_{r,\Omega}(a)$ denotes the proper connected component of $B_r(a)\cap\Omega$. Further, let $\mesh_*=\mesh_*(r)>0$ be chosen small enough so that, for any $\mesh\le\mesh_*$, one has $|a^\mesh-a|\le\frac{1}{2}r$ and $\gamma^\mesh_j\subset B_{\frac{1}{2}r}(w_j)$ for all $j=1,..,s$.

\smallskip

Let $F^\mesh=F^\mesh_\Cvr(a^\mesh,\,\cdot\,)$ be the spinor observable (\ref{Observable}) in $\Od$ and
$H^\mesh_\circ,H^\mesh_\bullet$ be the corresponding discrete integrals $\Im\int^\mesh (F^\mesh(z))^2d^\mesh z$ defined by
(\ref{defH_basic}) and Proposition~\ref{prop: propertiesH}. We introduce normalizing factors $\beta^\mesh_r>0$ by
\begin{equation}
\label{NormDef} (\beta^\mesh_r)^{-1}:= \max\nolimits_{\Od(r)}\left|H^\mesh\right|,
\end{equation}
where, similarly to the continuous setup, $\Omega^\mesh(r):=\Omega^\mesh\setminus [\cup_{j=1}^s {B_r^\mesh(w_j)}\cup{B^\mesh_{r,\Omega}(a)}]$ and $B^\mesh_r(w_j),B^\mesh_{r,\Omega}(a)$ stand for discrete $r$-neighborhoods of $w_j$ and $a$ in $\Omega^\mesh$.

\begin{rem} (i) Note that the right-hand side of (\ref{NormDef}) does not vanish. Indeed, if it did, $F^\mesh(a^\mesh,\cdot)$ would also vanish identically.
\smallskip \label{rem: Hnonzero}

\noindent (ii) Since the limiting function $f_\Cvr$ has singularities at $a$ and $w_j$, it is natural to cut them off when taking a maximum to get a correct normalizing factor.
\end{rem}

This choice of $\beta^\mesh_r$ guarantees that $|\beta^\mesh_r H^\mesh|\leq 1$ in $\Od(r)$. Our next goal is to show that, for a fixed $r$, the functions $\beta^\mesh_r H^\mesh$ are uniformly bounded on \emph{any} compact subset of $\Omega$. This is achieved by the following lemma:

\begin{figure}[t]
\includegraphics[width=\textwidth]{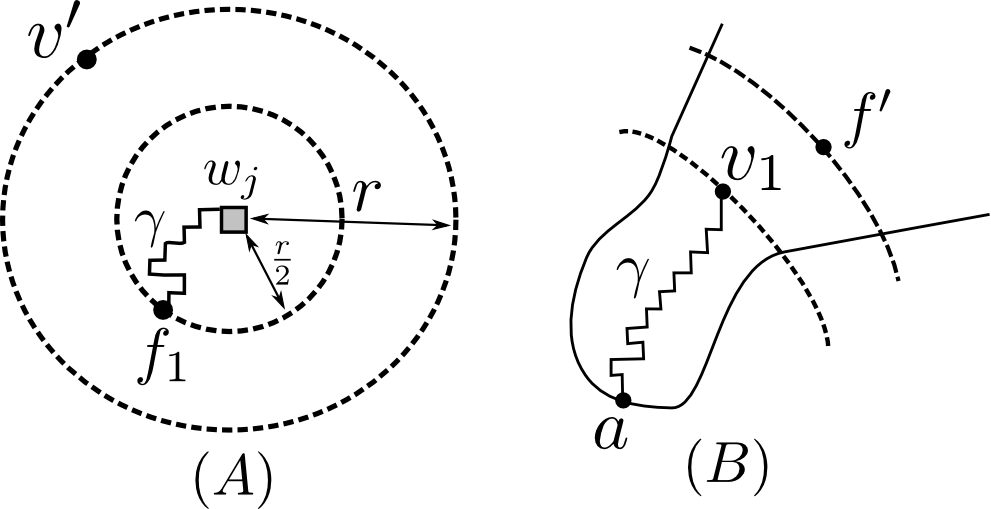}
\caption{\label{Fig: paths} (A) The proof of the estimate $\widetilde{\beta}_r^\mesh\leq \const \cdot\beta_{2r}^\mesh$ in Lemma~\ref{lemma: compareNorm}. If $-H_{\circ}$ is big at $f$, it is also big at vertices near some path $\gamma:f\rightsquigarrow w_j$. Since the harmonic measure of $\gamma$ seen from $v$ is uniformly bounded from below, $-H_\bullet(v')$ has to be big as well. (B)~Similar considerations near $a$ give the estimate $\beta_r^\mesh\leq\const\cdot \widetilde{\beta}_r^\mesh$.}

\end{figure}

\begin{lemma}
\label{lemma: compareNorm}
For any $r\le\frac{1}{2}r_*$, the ratio $\beta^\mesh_{2r}/\beta^\mesh_{r}$ is bounded uniformly in $\mesh\le\mesh_*(r)$.
\end{lemma}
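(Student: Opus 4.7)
I would prove $\max_{\Od(r)}|H^\mesh|\le C(r)\cdot M_{2r}$, where $M_{2r}:=\max_{\Od(2r)}|H^\mesh|$ and $C(r)$ is independent of $\mesh$. Since $\Od(r)\supset\Od(2r)$, it is enough to bound $|H^\mesh|$ by $C(r)\,M_{2r}$ on each connected component of $\Od(r)\setminus\Od(2r)$: these are the annular regions $A_j^\mesh:=(\Od\cap B_{2r}(w_j))\setminus B_r(w_j)$ around the microscopic holes ($j=1,\dots,s$) and a half-annulus around the marked boundary point $a^\mesh$.

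For the annulus $A_j^\mesh$, assume $|H^\mesh(f)|\ge K$ at some $f$. Following the idea sketched in Figure~\ref{Fig: paths}(A), I would proceed in two stages. First, using the superharmonicity of $H_\circ$ and subharmonicity of $H_\bullet$ (property (\ref{H: subharm})) combined with the Harnack-type inequalities (\ref{H: harn1})--(\ref{H: harn2}), I would construct a discrete path $\gamma:f\rightsquigarrow\gamma_{j_0}^\mesh$ in $\Od$ along which $|H^\mesh|$ remains of order $K$: at each step the path is extended greedily to an adjacent face or vertex where $|H^\mesh|$ is at least as large (which is possible because of the discrete sub/superharmonic mean property). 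The path terminates either at a macroscopic boundary component (where $|C_i|\le M_{2r}$ directly yields $K\le M_{2r}$), by exiting $B_{2r}(w_j)$ into $\Od(2r)$ (again giving $K\le M_{2r}$), or at a microscopic hole $\gamma_{j_0}^\mesh$ with $\gamma$ staying inside $B_{2r}(w_{j_0})$. In the last case $\gamma$ has diameter of order $r$, and the weak Beurling estimate (\ref{Beurling}) gives a uniform lower bound $\mathrm{hm}^\mesh_\Od(v,\gamma)\ge c(r)>0$ for any vertex $v\in\Od(2r)$ close to $\pa B_{2r}(w_{j_0})$. Superharmonicity of $-H_\bullet$ on $\Od\setminus\gamma$, coupled with $|H^\mesh|\le M_{2r}$ along the part of $\pa\Od$ lying in $\Od(2r)$, then forces $-H_\bullet(v)\ge c(r)\,K-C\,M_{2r}$, and since $|H^\mesh(v)|\le M_{2r}$ this yields $K\le C(r)\,M_{2r}$.

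The half-annulus near $a$ is handled analogously, with the uniform rectangle estimate (\ref{HmRectangles}) from Definition~\ref{def: reg_conv} replacing the Beurling bound, and with $v_a$ playing the role of the microscopic hole $\gamma_{j_0}^\mesh$. The normalization $H_\circ\equiv 0$ on the outer boundary component containing $a$ (from property (\ref{H: boundary})) further simplifies the boundary contributions in this case, and the modification of the discrete Laplacian described in (\ref{H: modify}) allows us to treat the Dirichlet data on the non-$v_a$ part of the outer boundary cleanly.

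I expect the main obstacle to be the greedy path-construction of the first stage: a naive iterative use of the Harnack inequalities (\ref{H: harn1})--(\ref{H: harn2}) at each of the $O(r/\mesh)$ lattice steps of $\gamma$ would accumulate an uncontrolled multiplicative constant. The resolution is precisely the greedy choice of the path, so that a single application of the Harnack estimate per step suffices and no cumulative loss occurs. A further subtlety is the potential interaction between distinct microscopic holes -- each can a priori support its own logarithmic singularity of $H^\mesh$ -- so one must control the harmonic-measure contribution from the other holes in a way that prevents the path from being ``diverted'' away from $\gamma_{j_0}^\mesh$ and from breaking the uniform lower bound on $\mathrm{hm}^\mesh_\Od(v,\gamma)$.
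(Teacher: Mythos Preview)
Your overall strategy---build a monotone descent path from the face where $|H^\mesh|$ is large toward a boundary singularity, then use a harmonic-measure comparison to propagate the bound back---is exactly the paper's approach, including the two-step split between the microscopic holes $w_j$ and the point $a$. Two aspects of your execution, however, diverge from the paper in ways worth correcting.

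\textbf{The path and the ``accumulation'' worry.} Your path alternates between faces and vertices and invokes the Harnack-type estimates (\ref{H: harn1})--(\ref{H: harn2}) at each of its $O(r/\mesh)$ steps; your proposed fix (``a single Harnack per step'') still multiplies $O(r/\mesh)$ constants and does not avoid blow-up. The paper's path lives on faces \emph{only}: since $H^\mesh_\circ$ is superharmonic, from any inner face there is always a neighbouring face with a smaller value, so one obtains a face-path $f=f_1\sim f_2\sim\dots$ with $H^\mesh_\circ(f_1)\ge H^\mesh_\circ(f_2)\ge\dots$ and \emph{no loss whatsoever}. This path can only terminate where superharmonicity fails, hence (if it does not first exit $B_{2r}(w_j)$ into $\Od(2r)$, in which case you are done) at the microscopic hole $\gamma_j^\mesh$. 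Property~(\ref{H: harn2}) is then used \emph{once}, globally, to transfer the bound $H^\mesh_\circ\le -K$ on the path $\gamma$ to $H^\mesh_\bullet\le -p_1K+(\beta^\mesh_{2r})^{-1}$ on the adjacent vertex set $\gamma'$, using the a priori bound $\max_{B_{2r}(w_j)}H^\mesh_\bullet\le(\beta^\mesh_{2r})^{-1}$ coming from subharmonicity and property~(\ref{H: normal}). This also makes your concern about interaction between distinct microscopic holes disappear: for $r\le\frac12 r_*$ the balls $B_{2r}(w_j)$ are disjoint, so the path either stays inside one of them or exits into $\Od(2r)$.

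\textbf{The half-annulus near $a$.} You invoke the rectangle estimate via the regularity assumption of Definition~\ref{def: reg_conv}, but Lemma~\ref{lemma: compareNorm} carries no such hypothesis (it feeds into Theorem~\ref{thm: mainconv_int_1}, which only assumes Carath\'eodory convergence). The paper treats this piece symmetrically to the above, swapping the roles of $H^\mesh_\bullet$ and $H^\mesh_\circ$: an ascending \emph{vertex}-path is built toward $v_a$ (the unique point where property~(\ref{H: normal}) fails), property~(\ref{H: harn1}) is applied once to pass to adjacent faces, and the harmonic-measure lower bound is obtained for a face near a fixed interior point $z\in\Omega\setminus B_{2r,\Omega}(a)$ via a general estimate valid under Carath\'eodory convergence (e.g.\ \cite[Lemma~3.14]{CHS}), not via rectangles. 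The paper also interposes the domain $\widetilde\Omega^\mesh(r):=\Od(r)\setminus B^\mesh_{2r,\Omega}(a)$ to decouple the two halves of the argument cleanly.
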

\begin{proof}
Denote ${\widetilde\Omega}^\mesh(r):=\Od (r) \setminus B^\mesh_{2r,\Omega}(a)$ and
\[
 (\widetilde{\beta}^\mesh_r)^{-1}:= \max\nolimits_{{\widetilde\Omega}^\mesh(r)}|H^\mesh|.
\]
By definition, $\Od(2r)\subset {\widetilde\Omega}^\mesh(r)\subset \Od(r)$, hence $\beta^\mesh_{r} \le \widetilde{\beta}^\mesh_r \le \beta^\mesh_{2r}$\,. We first show that $\beta^\mesh_{2r}/\widetilde{\beta}^\mesh_{r}$ is uniformly bounded. Recall that $H^\mesh_\bullet$ is subharmonic, $H^\mesh_\circ$ is superharmonic,
$H^\mesh_\bullet(v)\geq H^\mesh_\circ(f)$ for adjacent $v$ and ${f}$, and $H^\mesh\equiv 0$ on the boundary component containing $a$. Therefore,
\begin{equation}
\label{tBeta=max(max,min)}
(\widetilde{\beta}^\mesh_r)^{-1}= \max\left\{\max\nolimits_{\pa{\widetilde\Omega}^\mesh(r)}H^\mesh_\bullet\ ;\ -\min\nolimits_{\pa{\widetilde\Omega}^\mesh(r)}H^\mesh_\circ \right\}.
\end{equation}
Note that in fact
\begin{equation}
\label{eq: maxmax}
\max\nolimits_{\pa{\widetilde\Omega}^\mesh(r)}H^\mesh_\bullet \le\max\nolimits_{\pa{\Omega}^\mesh(2r)}H^\mesh_\bullet\,.
\end{equation}
Indeed, the function $H^\mesh_\bullet$ is subharmonic in $B^\mesh_{2r}(w_j)\cap \Od$, $j=1,\dots,s$, and it cannot attain its maximum on the boundary component $\gamma_j^\mesh\subset B^\mesh_{2r}(w_j)$ because of property (4) of Proposition \ref{prop: propertiesH} (and, if $\gamma_j^\mesh$ consists of one face only, then $H^\mesh_\bullet$ is subharmonic \emph{everywhere} in $B^\mesh_{2r}(w_j)$).

Thus, either $(\widetilde{\beta}^\mesh_{r})^{-1}\le ({\beta}^\mesh_{2r})^{-1}$ and then there is nothing to prove, or
\[
(\widetilde{\beta}^\mesh_r)^{-1} = -H^\mesh_\circ(f)\quad \text{for~some}~~f\in \pa B^\mesh_r(w_j).
\]
Since $H^\mesh_\circ$ is superharmonic, in this case there exists a path of consecutive neighbors $\gamma=\{f=f_1\sim f_2\sim\dots\}$ such that $-(\widetilde{\beta}^\mesh_r)^{-1}= H^\mesh_\circ(f_{1})\ge H^\mesh_\circ(f_{2})\ge \dots$. This path can only end up at $\gamma^\mesh_j$, where the superharmonicity of $H^\mesh_\circ$ fails (see Fig.~\ref{Fig: paths}A). Denote by $\gamma'$ the set of \emph{vertices} adjacent to $\gamma$. Then, the property (8) in Proposition \ref{prop: propertiesH} ensures that
\[
H^\mesh_\bullet + (\widetilde{\beta}^\mesh_r)^{-1} \le \const\cdot ((\beta^\mesh_{2r})^{-1}-H^\mesh_\bullet) \ \ \text{everywhere~on}~\gamma',
\]
where we have used the estimate $\max\nolimits_{B^\mesh_{2r}(w_j)}H^\mesh_\bullet\le (\beta^\mesh_{2r})^{-1}$ which was explained after (\ref{eq: maxmax}). Hence, for some absolute constant $p_1>0$,
\[
H^\mesh_\bullet\le - p_1(\widetilde{\beta}^\mesh_r)^{-1} + (\beta^\mesh_{2r})^{-1} \ \ \text{everywhere~on}~\gamma'.
\]

 Further, there exists a constant $p_2>0$ independent of $r$ and $\mesh$ such that, for any vertex $v\in \pa B^\mesh_{2r}(w_j)$, one has $\text{hm}(v,\gamma'):= \text{hm}^\mesh_{\Od(2r)\cup B^\mesh_{2r}(w_j)}(v,\gamma')\ge p_2$ (see Fig.~\ref{Fig: paths}A). By subharmonicity of $H^\mesh_\bullet$, this implies
\begin{align*}
H^\mesh_\bullet(v) &\leq
(-p_1(\widetilde{\beta}^\mesh_r)^{-1}\!+(\beta^\mesh_{2r})^{-1})\cdot\text{hm}(v,\gamma')+ (\beta^\mesh_{2r})^{-1}\cdot(1\!-\!\text{hm}(v,\gamma')) \cr & \leq -p_1p_2(\widetilde{\beta}^\mesh_r)^{-1}+(\beta^\mesh_{2r})^{-1}.
\end{align*}
Since $H^\mesh_\bullet(v)\ge -(\beta^\mesh_{2r})^{-1}$, we infer that $\beta^\mesh_{2r}/\widetilde{\beta}^\mesh_r\le 2(p_1p_2)^{-1}$.

\smallskip

It remains to prove that $\widetilde{\beta}^\mesh_r/\beta^\mesh_{r}$ is uniformly bounded, which is done by exactly the same argument with the roles of $H^\mesh_\bullet$ and $H^\mesh_\circ$ interchanged, inequalities reversed, and $a$ playing the role of $w_j$. Note that the uniform lower bound for the harmonic measure $\text{hm}(f,\gamma')$ still holds (this time, $\gamma'$ will be the set of \emph{faces} adjacent to a \emph{vertex} path $\gamma$ which terminates at $a$, see Fig.~\ref{Fig: paths}B), provided that $f$ is chosen, say, to be the closest face to some fixed $z\in\Omega\backslash{B_{2r,\Omega}(a)}$ (e.g., see \cite[Lemma 3.14]{CHS}).
\end{proof}

Now we are able to claim precompactness of the families $\{H^\mesh\},\{F^\mesh_\Cvr\}$ as $\mesh\to 0$.

\begin{lemma}
\label{lemma: precompactness} Fix some sufficiently small $r>0$. Then, there exists a subsequence
\mbox{$\mesh=\mesh_k\to 0$} such that the functions $\beta^\mesh_rH^{\delta}$ converge to a
harmonic function \mbox{$H:\Omega\to\R$} uniformly on compact subsets of $\Omega$.
Moreover, the functions $\beta^\mesh_r(F^{\delta})^2$ converge to $F^2:=\partial_y H+i\partial_x H$ uniformly
on compact subsets of~$\Omega$.
\end{lemma}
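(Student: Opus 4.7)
The plan is to first establish uniform boundedness of $\beta_r^\mesh H^\mesh$ on compact subsets of $\Omega$, then extract a subsequential limit by standard compactness arguments for almost-harmonic functions, and finally upgrade convergence of $H^\mesh$ to convergence of $(F^\mesh)^2$ via the discrete Cauchy-Riemann relations in property~(\ref{defH}) of Proposition~\ref{prop: propertiesH}.

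For the first step, the definition (\ref{NormDef}) gives $|\beta_r^\mesh H^\mesh|\le 1$ on $\Od(r)$. Iterating Lemma~\ref{lemma: compareNorm}, one has $\beta^\mesh_{r/2^n}\le C^n\beta^\mesh_r$ for a uniform constant $C$ and all $n$ with $r/2^n\ge\mesh_*$, hence $|\beta^\mesh_r H^\mesh|\le C^n$ on $\Od(r/2^n)$. Since any compact subset $K\Subset\Omega$ is contained in $\Omega(r/2^n)$ for $n$ large enough (depending only on $K$, $\Omega$, $r$), the functions $\beta^\mesh_r H^\mesh$ are uniformly bounded on $K$ for all sufficiently small~$\mesh$.

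Next, the pair $(H_\bullet^\mesh,H_\circ^\mesh)$ is subharmonic/superharmonic (property~(\ref{H: subharm})), and the closeness estimates (\ref{H: harn1})--(\ref{H: harn2}) together with the uniform bound just established yield standard discrete Hölder equicontinuity of $\beta^\mesh_r H^\mesh$ on compact subsets of $\Omega$ (this is the same argument as in \cite[Section~3]{CHS2} in the simply connected case, which is purely local and applies verbatim here). By the Arzel\`a--Ascoli theorem we can extract a subsequence $\mesh_k\to 0$ along which $\beta^\mesh_r H_\bullet^\mesh$ and $\beta^\mesh_r H_\circ^\mesh$ converge uniformly on compact subsets of $\Omega$ to a common limit $H:\Omega\to\R$; properties~(\ref{H: subharm}), (\ref{H: harn1}), (\ref{H: harn2}) force $H$ to be simultaneously sub- and super-harmonic in the continuous sense, hence harmonic.

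Finally, property~(\ref{defH}) expresses $\Im[(F^\mesh(\Cvr^{-1}(e)))^2\cdot (v-v')]$ as a discrete gradient of $H_\bullet^\mesh$ (and similarly for $H_\circ^\mesh$). Since $\beta^\mesh_r H^\mesh$ is discretely almost-harmonic and uniformly bounded on compacts, standard interior estimates for discrete harmonic functions (see \cite{CHS, CHS2}) give uniform convergence of the discrete gradients of $\beta^\mesh_r H^\mesh$ to $\nabla H$ on compact subsets; reading off the real and imaginary parts from (\ref{defH}) therefore yields $\beta^\mesh_r(F^\mesh)^2\to \partial_y H+i\partial_x H$ uniformly on compact subsets of~$\Omega$, as required. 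The main subtlety, handled by Proposition~\ref{prop: propertiesH}, is treating $H_\bullet^\mesh$ and $H_\circ^\mesh$ as a single almost-harmonic object; once this is done, the remainder is an application of now-standard discrete complex analysis machinery.
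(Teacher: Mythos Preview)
Your overall strategy matches the paper's: use Lemma~\ref{lemma: compareNorm} iteratively to get uniform bounds on $\beta^\mesh_r H^\mesh$ over compacts, then invoke the machinery of \cite[Section~3]{CHS2} for equicontinuity and Arzel\`a--Ascoli. The paper's proof differs from yours mainly in the order of the last two steps. It applies \cite[Theorem~3.12]{CHS2} once to obtain equicontinuity of \emph{both} $\beta^\mesh_r H^\mesh$ and $\sqrt{\beta^\mesh_r}\,F^\mesh$ simultaneously; then it extracts a limit $F$, shows $F$ is holomorphic via Morera's theorem and the discrete holomorphicity of $F^\mesh$, and finally deduces harmonicity of $H$ from $H=\Im\int F^2$. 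Your route instead first argues harmonicity of $H$ directly from the sub-/super-harmonicity of $H_\bullet^\mesh,H_\circ^\mesh$ (which is fine), and only afterwards tries to recover convergence of $(F^\mesh)^2$.

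The one point that needs tightening is your last step. You write that ``standard interior estimates for discrete harmonic functions'' give convergence of the discrete gradients of $\beta^\mesh_r H^\mesh$. But $H_\bullet^\mesh$ and $H_\circ^\mesh$ are \emph{not} discrete harmonic, only sub- and super-harmonic, and for such functions uniform convergence does not in general imply gradient convergence. The result you actually need is precisely \cite[Theorem~3.12]{CHS2}, which exploits the s-holomorphicity of $F^\mesh$ (not generic harmonic theory) to bound $|F^\mesh|^2$ in terms of the oscillation of $H^\mesh$; this is what gives equicontinuity of $F^\mesh$ and hence convergence of $(F^\mesh)^2$. Since property~(\ref{defH}) identifies the discrete gradient of $H^\mesh$ with $(F^\mesh)^2$, your statement and the paper's are equivalent---but the justification should cite the s-holomorphic estimate, not ``standard interior estimates for discrete harmonic functions.'' Once you make that substitution, your argument is correct and essentially the same as the paper's.
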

\begin{proof}
Definition of $\beta^\mesh_r$ and Lemma \ref{lemma: compareNorm} guarantee that $\beta^\mesh_rH^\mesh$ are uniformly bounded in $\Od(2^{-k}r)$ for any $k\ge 0$, and hence on all compact subsets of $\Omega$. Due to \cite[Theorem~3.12]{CHS2}, the functions $\beta^\mesh_rH^\mesh$ and the spinors
$\sqrt{\beta^\mesh_r}F^{\delta}$ are thus equicontinuous on compact subsets of $\Od$, and the Arzela-Ascoli theorem implies their subsequential convergence to a continuous
function $H$ and a spinor $F$, respectively. Morera's theorem, together with the discrete holomorphicity of
$F^{\delta}$, implies that $F$ is holomorphic. Since increments of $\beta^\mesh_rH^{\delta}$ are
given by discrete integrals $\Im\int^\mesh[\beta^\mesh_r(F^{\mesh}(z))^2d^\mesh z]$, one has $H=\Im\int
(F(z))^2dz$, so $H$ is a harmonic function defined in $\Omega$.
\end{proof}

The next step is to show that all these subsequential limits solve the correct boundary value problem. It is convenient to work with the function $H$. Recall that, due to Proposition~\ref{prop: propertiesH} (properties (\ref{H: boundary}) and
(\ref{H: modify})), $H^\mesh\equiv C_j^\mesh$ on each of macroscopic boundary components $\gamma^\mesh_{s+1},\dots,\gamma^\delta_{k+1}$. By definition (\ref{NormDef}) of the normalizing factors $\beta^\mesh_r$, we have $|\beta^\mesh_r C_j^\mesh|\le 1$. Thus, taking a subsequence once more, we can assume that \begin{equation}
\label{betaCjmesh->Cj}
\beta^\mesh_r C_j^\mesh\to c_j~\text{as}~\mesh\to 0,\quad j=s\!+\!1,\dots,k\!+\!1,
\end{equation}
for some constants $c_j$ such that $|c_j|\le 1$.

\begin{lemma}
\label{lemma: identlimits}
For any subsequential limit $H$ from Lemma \ref{lemma: precompactness} satisfying (\ref{betaCjmesh->Cj}), the conditions (\reflistBh)--(\reflistDh) from Lemma~\ref{lemma: h} hold true. Moreover, $\sup\nolimits_{\Omega(r)}|H|=1$.
\end{lemma}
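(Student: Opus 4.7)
The plan is to verify the conditions (\reflistBh)--(\reflistDh) one at a time for the limit $H$ by passing to the limit from corresponding discrete statements for the pair $(\beta^\mesh_r H^\mesh_\bullet, \beta^\mesh_r H^\mesh_\circ)$, and then to deduce $\sup_{\Omega(r)}|H|=1$. The basic inputs are the properties collected in Proposition~\ref{prop: propertiesH}, the a priori bound $|\beta^\mesh_r H^\mesh|\le 1$ on $\Omega^\mesh(r)$, and the uniform convergence on compacts from Lemma~\ref{lemma: precompactness}. Throughout, I would exploit the sandwich $H^\mesh_\bullet \ge H^\mesh_\circ$ together with sub/superharmonicity, which guarantees that both families converge to the same harmonic limit $H$.

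For the local bounds (\reflistCh) and (\reflistDh), no comparison is needed. For (\reflistCh), property~(4) rules out the maximum of the subharmonic $H^\mesh_\bullet$ in $B^\mesh_r(w_j)\cap\Omega^\mesh$ being attained on $\gamma_j^\mesh$, so it sits on $\partial B^\mesh_r(w_j) \subset \Omega^\mesh(r)$ and is thus at most $(\beta^\mesh_r)^{-1}$; rescaling and taking the limit yields $H \le 1$ near $w_j$. For (\reflistDh), $H^\mesh_\circ$ is superharmonic in $B^\mesh_{r,\Omega}(a) \cap \Omega^\mesh$; on the boundary of that region it equals $0$ on the arc of $\partial\mathcal{F}(\Omega^\mesh)$ near $a$ (property~(5) together with the chosen additive constant) and is $\ge -(\beta^\mesh_r)^{-1}$ on the circular arc inside $\Omega^\mesh$. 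The minimum principle then delivers $\beta^\mesh_r H^\mesh_\circ \ge -1$ in that ball, whence $H \ge -1$ near $a$ in the limit.

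For the macroscopic condition (\reflistBh), (\ref{betaCjmesh->Cj}) already provides the candidate boundary values $c_j$ on each macroscopic component $\gamma_j$. Continuity of $H$ up to $\gamma_j$ I would establish by a standard barrier argument combining subharmonicity of $H^\mesh_\bullet$, superharmonicity of $H^\mesh_\circ$, the sandwich $H^\mesh_\bullet \ge H^\mesh_\circ$, and the weak Beurling estimate (\ref{Beurling}): the oscillation of $\beta^\mesh_r H^\mesh$ on shrinking boundary neighborhoods tends to zero uniformly, forcing $H \equiv c_j$ on $\gamma_j$ in the limit. The sign condition (or its conformally invariant reformulation (\ref{eq: d_n_h_le_0})) then drops out of property~(4) by contradiction: after modification~(6), $\beta^\mesh_r H^\mesh_\bullet(v_{\mathrm{int}}) \ge \beta^\mesh_r C_j^\mesh \to c_j$ at every immediate inner neighbor $v_{\mathrm{int}}$ of $\gamma_j^\mesh$, and these vertices lie in any prescribed neighborhood of any fixed $b \in \gamma_j$ for $\mesh$ small enough; were $H$ strictly less than $c_j$ in some neighborhood $B_\vare(b) \cap \Omega$, uniform convergence up to $\gamma_j$ would contradict this inequality at such $v_{\mathrm{int}}$.

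The normalization $\sup_{\Omega(r)}|H|=1$ will then assemble the pieces: the upper bound is immediate from $|\beta^\mesh_r H^\mesh|\le 1$ and uniform convergence on compacts. For the matching lower bound, pick $x^\mesh \in \Omega^\mesh(r)$ attaining $|\beta^\mesh_r H^\mesh(x^\mesh)| = 1$; extract a subsequence $x^\mesh \to x \in \overline{\Omega(r)}$, and use uniform convergence---extended up to the macroscopic part of $\partial \Omega$ by the boundary continuity of the previous paragraph---to conclude $|H(x)| = 1$. The hard part will be precisely this joint boundary regularity: neither $H^\mesh_\bullet$ nor $H^\mesh_\circ$ is individually harmonic, so uniform continuity of $\beta^\mesh_r H^\mesh$ up to each macroscopic component must be extracted from the sandwich bound together with harmonic-measure estimates, mirroring the simply connected case of \cite{CHS2} with additional local care to separate the punctured points from the several macroscopic components.
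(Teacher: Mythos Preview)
Your overall strategy matches the paper's: verify (\reflistBh)--(\reflistDh) by passing discrete properties to the limit, then establish the normalization. Your arguments for (\reflistCh), (\reflistDh), and the boundary-continuity half of (\reflistBh) are essentially the paper's (it writes out the harmonic-measure sandwich (\ref{unifH})--(\ref{unifH1}) explicitly, which is your ``barrier argument''), and your normalization via subsequential limits of the maximizers is a compact rephrasing of the paper's piecewise analysis of $\partial\Omega^\mesh(r)$.

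There is, however, a genuine gap in your argument for the sign condition (\ref{eq: d_n_h_le_0}). You claim: if $H<c_j$ throughout $B_\vare(b)\cap\Omega$, then the inequality $\beta^\mesh_r H^\mesh_\bullet(v_{\mathrm{int}})\ge \beta^\mesh_r C^\mesh_j$ is contradicted by uniform convergence up to $\gamma_j$. But $v_{\mathrm{int}}$ sits at distance $O(\mesh)$ from $\gamma_j$, so by the very boundary continuity you just established, $H(v_{\mathrm{int}})\to c_j$ as $\mesh\to 0$. Thus both sides of your inequality tend to $c_j$, and the gap $c_j-H(v_{\mathrm{int}})$ is at best of the same order as the uniform-convergence error; no contradiction emerges. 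What is actually needed is that the \emph{discrete normal-derivative} inequality $H^\mesh_\bullet(v_{\mathrm{int}})-C^\mesh_j\ge 0$ survives in the limit, which requires control at scale $\mesh$ near the boundary, not merely $o(1)$ uniform convergence. The paper does not spell this out either---it invokes \cite[Remark~6.3]{CHS2} for precisely this passage. So your identification of property~(4) as the source of the sign condition is right, but the one-line contradiction does not work; you need the finer boundary argument from that reference (or an equivalent Hopf-type estimate threaded through the sub/superharmonic sandwich).
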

\begin{rem}
Below we use the equivalent reformulation (\ref{eq: d_n_h_le_0}) of the boundary condition $\partial_n H\le 0$ which does not rely on the smoothness of $\pa\Omega$.
\end{rem}
\begin{proof} \emph{Property (\reflistBh)}. Our first goal is to prove that $H$ satisfies the Dirichlet boundary conditions on all macroscopic boundary components $\gamma_j$, $j=s\!+\!1,\dots,k\!+\!1$. Fix some small $r'>0$, and recall that $|\beta^\mesh_rH^\mesh|\leq C(r')$ everywhere in $\Od(r')$, where $C(r')$ does not depend on $\mesh$. Due to superharmonicity of $H^\mesh_\circ$, for any face $f\in \Od(r')$, one has
\begin{equation}
\label{unifH} \beta^\mesh_rH^{\delta}_{\circ}(f)\geq
\beta^\mesh_rC^{\delta}_j\cdot\text{hm}^\mesh_{\Od(r')}(f,\gamma^\mesh_j) -
C(r')(1\!-\!\text{hm}^\mesh_{\Od(r')}(f,\gamma^\mesh_j))\,.
\end{equation}
Similarly, subharmonicity of $H^\mesh_\bullet$ implies that, for any vertex $v\in\Od(r')$, one has
\begin{equation}
\label{unifH1} \beta^\mesh_rH^{\delta}_\bullet(v)\leq \beta^\mesh_rC^{\delta}_j\cdot
\text{hm}^\mesh_{\Od(r')}(v,\gamma^\mesh_j) +C(r')
(1\!-\!\text{hm}^\mesh_{\Od(r')}(v,\gamma^\mesh_j))\,.
\end{equation}

Now let $f$ and $v$ approximate some point $z\in \Omega(r')$ as $\mesh\to 0$. Since both discrete harmonic measures converge to the continuous one, and \mbox{$H^\mesh_\circ(f)\le H^\mesh_\bullet(v)$} for incident $f$ and $v$, (\ref{unifH}) and (\ref{unifH1}) yield
\[
|H(z) - c_j\,\text{hm}_{\Omega(r')}(z,\gamma_j)| \leq
C(r')(1\!-\!\text{hm}_{\Omega(r')}(z,\gamma_j))\,.
\]
In particular, $H(z)\to c_j$ as $z$ tends to $\gamma_j$. The boundary condition~(\ref{H: normal}) from Proposition~\ref{prop: propertiesH} survives in the
limit and gives (\ref{eq: d_n_h_le_0}) due to \cite[Remark~6.3]{CHS2}.

\smallskip

\emph{Properties (\reflistCh) and (\reflistDh).} By subharmonicity of $H^\mesh_\bullet$, the inequality $\beta^\mesh_r H^\mesh_\bullet \leq 1$ extends from $\partial B^\mesh_r(w_j)\subset\partial\Omega^\mesh(r)$ to the whole discrete disc $B^\mesh_r(w_j)$, thus $H\le 1$ in a vicinity of $w_j$. Similarly, superharmonicity of $H^{\mesh}_\circ$ and the condition $H^\mesh_\circ \equiv 0$ on the boundary component containing $a$ imply $H\ge -1$ near $a$.

\smallskip

\emph{Normalization $\sup\nolimits_{\Omega(r)}|H|=1$.} Recall that, similarly to (\ref{tBeta=max(max,min)}), $|\beta^\mesh_rH^\mesh|$ attains its maximum $\max_{\Od(r)}|\beta^\mesh_rH^\mesh|=1$ on the boundary
\[
\partial\Od(r)\subset \pa B^\mesh_{r,\Omega}(a)\cup \pa B^\mesh_r(w_1)\cup\dots\cup \pa B^\mesh_r(w_s) \cup\gamma_{s+1}^\mesh\cup\dots\cup \gamma_{k+1}^\mesh.
\]
Note that (\ref{betaCjmesh->Cj}) and the property (\reflistBh) yield
\[
\max\nolimits_{\gamma_{s+1}^\mesh\cup\dots\cup \gamma_{k+1}^\mesh}|\beta^\mesh_rH^\mesh|\to \max\nolimits_{\gamma_{s+1}\cup\dots\cup \gamma_{k+1}}|H|.
\]
Moreover, as $\beta^\mesh_rH^\mesh$ uniformly converge to $H$ on all compact subsets of $\Omega$, one has
\[
\max\nolimits_{\pa B^\mesh_r(w_1)\cup\dots\cup \pa B^\mesh_r(w_s)}|\beta^\mesh_rH^\mesh|\to \max\nolimits_{\pa B_r(w_1)\cup\dots\cup \pa B_r(w_s)}|H|.
\]
Finally, the convergence of $\beta^\mesh_r H^\mesh$ to $H$ inside of $\Omega$ also imply
\[
\max\nolimits_{\pa B^\mesh_{r,\Omega}(a)}|\beta^\mesh_rH^\mesh|\to \max\nolimits_{\pa B_{r,\Omega}(a)}|H|,
\]
since the estimates (\ref{unifH}),(\ref{unifH1}) guarantee that $|\beta^\mesh_rH^\mesh|$ are uniformly small near the boundary component of $\Omega$ containing $a$.
\end{proof}

\begin{proof}[{\bf Proof of Theorem \ref{thm: mainconv_int_1}}]
We set $\beta(\delta):=\sqrt{\beta^\delta_r}$ for a small \emph{fixed} $r>0$. By Lemma~\ref{lemma: precompactness}, the functions $(\beta(\mesh))^2 H^\mesh$ and $\beta(\mesh)F^\mesh$ have subsequential limits $H$ and~$F$, respectively. Lemmas~\ref{lemma: identlimits},~\ref{lemma: h} and~\ref{lemma: uniqueness_bis} guarantee that all these possible limits $H$ are the same. Moreover, this unique limit is nontrivial due to the normalization condition in Lemma~\ref{lemma: identlimits}, and thus coincides with $h_\Cvr^\Omega$ normalized so that $\sup_{\Omega(r)}|h_\Cvr^\Omega|=1$. Since $H(z)=\Im \int^z(F(\zeta))^2d\zeta$, we conclude that $F=f_\Cvr^\Omega$.
\end{proof}

The next lemma shows that the convergence of $\beta(\mesh)F^\mesh$ holds true at the
boundary point $b$ (in fact, it follows from our proof that this convergence is uniform up to straight parts of the boundary $\pa\Omega$). Note that a similar result has been obtained in \cite[Theorem~5.6]{CHS2} under milder assumptions. For convenience of the reader, we give a shorter proof here using our (stronger) regularity assumptions for $\Od$ near $b$.

\begin{lemma}
\label{lemma: onboundary} Suppose that, under the conditions of Theorem \ref{thm: mainconv_int_1}, we are given a sequence of marked points $b^\delta\in\pa\Od$, $b^\delta\to b$, and the convergence $\Od\to\Omega$ is regular at $b$. Then,
$\beta(\mesh)|F^\mesh(b^\mesh)|\to |f_\Cvr^\Omega(b)|$ as $\mesh\to 0$.
\end{lemma}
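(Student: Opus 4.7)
The plan is to relate $|F^\mesh(b^\mesh)|^2$ to a discrete boundary normal derivative of $H^\mesh$ at $b^\mesh$ and then apply a convergence argument for bounded discrete harmonic functions on a domain where $b^\mesh$ becomes an interior point. By the regularity assumption, near $b^\mesh$ the boundary of $\Od$ is a straight segment (say horizontal for concreteness). After the modification of Proposition~\ref{prop: propertiesH}(\ref{H: modify}) and the normalization $H^\mesh\equiv 0$ on the outer boundary component, we have $H_\bullet^\mesh(v_b^{\mathrm{out}})=0$, where $v_b^{\mathrm{out}}\in\pa\mathcal{V}(\Od)$ is the outer vertex associated with the boundary half-edge $b^\mesh$. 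Applying property~(\ref{defH}) to this half-edge, together with the boundary condition (\ref{bc_0}) which forces $F^\mesh(b^\mesh)=\lambda\eta_{b^\mesh}$ for some real $\lambda$, a direct computation gives
\begin{equation*}
H_\bullet^\mesh(v_b^{\mathrm{in}}) \,=\, \delta\,|F^\mesh(b^\mesh)|^2,
\end{equation*}
where $v_b^{\mathrm{in}}\in\mathcal{V}(\Od)$ is the inner endpoint of $b^\mesh$. This identity is independent of whether the flat arc is horizontal or vertical, since the factors $\eta_{b^\mesh}^2$ and $v_b^{\mathrm{in}}-v_b^{\mathrm{out}}$ combine to give the same real value.

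Second, I would upgrade the convergence $(\beta(\mesh))^2 H^\mesh \to h^\Omega_\Cvr$ from compact subsets of $\Omega$ to a neighborhood of $b$. Since both functions vanish along the straight boundary piece containing $b$, they each extend by odd reflection across it: on the discrete side, the modification in Proposition~\ref{prop: propertiesH}(\ref{H: modify}) makes $(\beta(\mesh))^2 H^\mesh$ into a bounded discrete harmonic function on the reflected rectangle; on the continuous side, $h^\Omega_\Cvr$ extends real-analytically across the flat arc. Equicontinuity of bounded discrete harmonic functions (as in \cite[Theorem~3.12]{CHS2}) together with uniqueness of the continuous limit gives uniform convergence on compact subsets of the doubled rectangle, and hence also convergence of the one-step normal difference quotient:
\begin{equation*}
(\beta(\mesh))^2 H_\bullet^\mesh(v_b^{\mathrm{in}})/\delta \ \longrightarrow\ \partial_n h^\Omega_\Cvr(b),
\end{equation*}
where $\partial_n$ denotes the inner normal derivative at $b$.

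Third, the relation $(f^\Omega_\Cvr)^2 = \partial_y h^\Omega_\Cvr + i\partial_x h^\Omega_\Cvr$ together with the Riemann boundary condition at $b$ (which forces $(f^\Omega_\Cvr(b))^2$ to be a nonnegative real multiple of $\eta_b^2$) gives in each of the four possible orientations the identification $\partial_n h^\Omega_\Cvr(b)=|f^\Omega_\Cvr(b)|^2$. Chaining the three steps,
\begin{equation*}
|\beta(\mesh)F^\mesh(b^\mesh)|^2 \,=\, (\beta(\mesh))^2 H_\bullet^\mesh(v_b^{\mathrm{in}})/\delta \ \longrightarrow\ |f^\Omega_\Cvr(b)|^2,
\end{equation*}
and taking square roots yields the lemma.

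The main obstacle is the second step: Theorem~\ref{thm: mainconv_int_1} only gives convergence on compact subsets of $\Omega$, so extracting a derivative at the boundary genuinely requires the regularity hypothesis. The attached discrete rectangle $R^\mesh(s,t)$, whose dimensions $s,t$ are bounded away from zero as $\mesh\to 0$, is precisely what enables the odd-reflection trick uniformly in $\mesh$, promoting $b^\mesh$ to an interior point of an enlarged discrete domain and reducing the boundary normal derivative estimate to a standard interior gradient bound for discrete harmonic functions.
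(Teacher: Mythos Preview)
Your argument has a genuine gap in the second step. You claim that the modification from Proposition~\ref{prop: propertiesH}(\ref{H: modify}) ``makes $(\beta(\mesh))^2 H^\mesh$ into a bounded discrete harmonic function on the reflected rectangle'', but this is a misreading of that property. The modification only adjusts the discrete Laplacian near the boundary so that the \emph{subharmonicity} of $H^\mesh_\bullet$ and the boundary inequality~(\ref{H: normal}) are preserved when boundary values are reset to $C_j$; it does not make $H^\mesh_\bullet$ harmonic in the interior. Consequently the odd reflection of $H^\mesh_\bullet$ across the flat arc is subharmonic on one side and superharmonic on the other, and the standard interior gradient estimate for discrete harmonic functions does not apply to it. Without harmonicity, uniform convergence of $(\beta(\mesh))^2 H^\mesh$ up to the boundary (which does follow from (\ref{unifH}),(\ref{unifH1})) is not enough to control the one-step normal difference $H^\mesh_\bullet(v_b^{\mathrm{in}})/\mesh$.

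The paper's proof confronts exactly this difficulty. Instead of reflecting, it compares the sub-/super-harmonic pair $H^\mesh_\bullet,H^\mesh_\circ$ with the explicit discrete harmonic pair $\cH^\mesh_\bullet,\cH^\mesh_\circ$ arising from the constant s-holomorphic function $\cF^\mesh\equiv 1$ (so $\cH^\mesh\approx \Im z$). On a small rectangle $R^\mesh(2d,d)$ one has $|(\beta(\mesh))^2 H^\mesh-\mu\cH^\mesh|\le\varepsilon d$ with $\mu=|f_\Cvr(b)|^2$, and then the sub-/super-harmonicity together with the harmonic-measure bound (\ref{HmRectangles}) squeezes $H^\mesh_\bullet(v_b)$ and $H^\mesh_\circ(f_b)$ to within $\const\cdot\varepsilon\mesh$ of $\mu\cH^\mesh$. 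A short algebraic step (using the two boundary half-edges $b,b'$ and the neighbouring face $f_b$) then pins $|\beta(\mesh)F^\mesh(b)|^2$ between $\mu\pm\const\cdot\varepsilon$. Your Steps~1 and~3 are correct and match the paper; what you are missing is precisely this squeezing mechanism in Step~2. If you want to salvage the reflection idea, you would have to reflect the s-holomorphic spinor $F^\mesh$ itself (discrete Schwarz reflection across the flat boundary) rather than $H^\mesh$, and then rebuild $H^\mesh$ on the doubled rectangle from the reflected $F^\mesh$; that is a different argument from the one you wrote.
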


\begin{proof}
Below we assume that $\Omega$ and $\Od$ are shifted and rotated 
so that $b^\mesh=b=0$, $\Od$ contains a discrete rectangle $R^\delta(s,t)$ for some fixed $s,t>0$, and $\partial \Od$ locally coincides with the boundary $\pa\C^\mesh_+$ of the discrete upper half-plane $\C^\mesh_+$ (see Definition~\ref{def: reg_conv}). We also assume that the additive normalization of the functions $H^\mesh$ is chosen so that they vanish on the macroscopic boundary component containing $b$.

Recall that, on compact subsets of $\Omega$, $\beta(\mesh)F^\mesh$ and $(\beta(\mesh))^2 H^\mesh$ converge to the properly normalized functions $f_\Cvr$ and $h_\Cvr$, respectively. Moreover, the uniform estimates (\ref{unifH}),(\ref{unifH1}) guarantee that the convergence $(\beta(\mesh))^2 H^\mesh\to h_\Cvr$ remains true up to $\pa\Omega\setminus\{a,w_1,\dots,w_s\}$. In particular, the functions $(\beta(\mesh))^2 H^\mesh$ converge to $h_\Cvr$ uniformly in the fixed rectangle $R(s,t)$ around $b$.

Let
\[
\mu:=(f_\Cvr(0))^2 =\pa_y h_\Cvr(0)
\]
and
\[
\begin{cases}
\cH^\mesh_\circ(f):=\Im f,& \text{for~faces}~f\in \C^\mesh_+, \cr
\cH^\mesh_\bullet(v):=\Im\, v+\frac{\mesh}{\sqrt{2}}\,, & \text{for~vertices}~v\in \C^\mesh_+
\end{cases}
\]
(note that $\cH^\mesh$ can be defined by (\ref{defH_basic}) starting with the constant s-holomorphic function $\cF^\mesh\equiv 1$ which satisfies the boundary conditions (\ref{bc_0}) on $\pa\C^\mesh_+$). Then, for any $\vare>0$, one can find a small $d>0$ such that
\[
|(\beta(\mesh))^2 H^{\delta}-\mu \cH^{\delta}|\leq \vare d\ \ \text{everywhere~~in~~} R^\mesh:=R^\mesh(2d,d)
\]
for all sufficiently small $\mesh$'s. Since both $\cH^\mesh_\circ$ and $\cH^\mesh_\bullet$ are
discrete harmonic and satisfy the same Dirichlet boundary conditions as $H^\mesh$ near $b^\mesh$, the sub- and super-harmonicity of $H^\mesh_\bullet$ and $H^\mesh_\circ$
implies that
\begin{equation}\label{H_vs_h_circ_bullet}
\begin{array}{ll}
(\beta(\mesh))^2 H^{\delta}_\circ-\mu \cH^{\delta}_\circ  \ \geq & \!\! -\vare
d\cdot \text{hm}_{R^\mesh}(\,\cdot\, ,\partial R^\mesh\setminus \pa \C^\mesh_+), \\
(\beta(\mesh))^2 H^{\delta}_\bullet-\mu \cH^{\delta}_\bullet  \  \leq & \!\! \vare
d\cdot \text{hm}_{R^\mesh}(\,\cdot\, ,\partial R^\mesh\setminus \pa \C^\mesh_+)
\end{array}\ \ \text{everywhere~in}~R^\mesh.
\end{equation}

Let $v_b:=\frac{1}{2}i\mesh$ be the inner vertex of the boundary half-edge $b$ (see notation on Fig.~\ref{Fig: proof}A). Estimating the
discrete harmonic measure $\text{hm}_{R^\mesh}(v_b,\partial R^\mesh\setminus \pa \C^\mesh_+)$ by (\ref{HmRectangles}), one obtains
\[
\sqrt{2}\mesh \cdot |\beta(\mesh)F^\mesh(b)\cos{\textstyle\frac{\pi}{8}}|^2=
H^\mesh_\bullet(v_b) \le \mu\cH^\mesh_\bullet(v_b)+\const\cdot\vare\mesh.
\]
Since $\cH^\mesh_\bullet(v_b)=\frac{\mesh}{2}+\frac{\mesh}{\sqrt{2}}=(\sqrt{2}\cos^2\!\frac{\pi}{8})\mesh$, this gives
\[
|\beta(\mesh)F^\mesh(b)|^2 \le \mu+\const\cdot \vare.
\]
Note that this bound also holds true for the value $|\beta(\mesh)F^\mesh(b')|^2$, where $b':=\mesh$
denotes the neighboring boundary edge and $v_b':=(1+\frac{1}{2}i)\mesh$ is the corresponding
inner vertex (see notation on Fig.~\ref{Fig: proof}A). Now let $b'':=(\frac{1}{2}+\frac{1}{2}i)\mesh$ be the midpoint of the edge
$(v_bv_b')$, and $f_b:=(\frac{1}{2}+i)\mesh$ be an inner face incident to both $v_b$ and
$v_b'$. Using (\ref{H_vs_h_circ_bullet}) and estimating $\text{hm}_{R^\mesh}(f_b,\partial R^\mesh\setminus \pa \C^\mesh_+)$ by (\ref{HmRectangles}),
one obtains
\[
\Re(\beta(\mesh)F^\mesh(b''))^2\cdot \mesh= H^\mesh_\circ(f_b)\ge
\mu\mesh-\const\cdot\vare\mesh.
\]
Note that for any complex number $\xi\in\C$ one has
\[
\Re(\xi^2)\le
[2\cos^2\!{\textstyle\frac{\pi}{8}}]^{-1}\cdot(|\Pr_{e^{{-i\pi}/{8}}}(\xi)|^2+|\Pr_{e^{{i\pi}/{8}}}(w)|^2).
\]
Therefore, we arrive at the inequalities
\begin{align*}
\mu-\const\cdot\vare & \leq \Re(\beta(\mesh)F^\mesh(b''))^2 \cr & \leq
[2\cos^2\!{\textstyle\frac{\pi}{8}}]^{-1}\cdot(|\Pr_{e^{{-i\pi}/{8}}}(\beta(\mesh)F^\mesh(b''))|^2+
|\Pr_{e^{{i\pi}/{8}}}(\beta(\mesh)F^\mesh(b''))|^2) \cr & =
[2\cos^2\!{\textstyle\frac{\pi}{8}}]^{-1}\cdot(|\Pr_{e^{{-i\pi}/{8}}}(\beta(\mesh) F^\mesh(b))|^2+
|\Pr_{e^{{i\pi}/{8}}}(\beta(\mesh)F^\mesh(b'))|^2) \cr &=
{\textstyle\frac{1}{2}}(|\beta(\mesh) F^\mesh(b)|^2+|\beta(\mesh) F^\mesh(b')|^2)\le
\mu+\const\cdot\vare.
\end{align*}
Since $\vare$ can be chosen arbitrary small, this yields $\lim_{\mesh\to 0}
|\beta(\mesh)F^\mesh(b)|^2=\mu$.
\end{proof}

\begin{figure}[t]
\includegraphics[width=\textwidth]{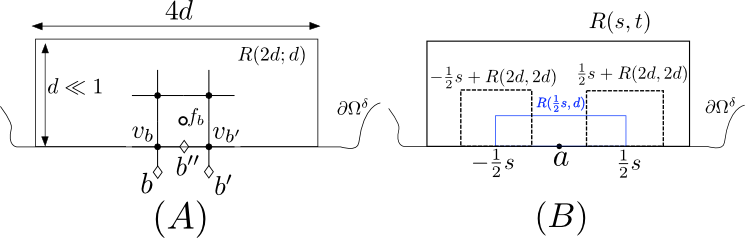}
\caption{\label{Fig: proof} (A) Notation near $b^\mesh$ used in the proof of Lemma \ref{lemma: onboundary}. (B) Notation near $a^\mesh$ used in the proof of Lemma \ref{lemma: diffCovers}.}
\end{figure}

Finally, we work out the relation of the values of discrete observables $F^\mesh$ at the points $a^\delta$ to the growth rate of their limit $f_\Cvr$ near $a$.

\begin{lemma}
\label{lemma: diffCovers}
Under the conditions of Theorem \ref{thm: mainconv_int_1}, assume also that the convergence $\Od\to\Omega$ is regular at $a$. For $j=1,2$, let $\Cvr_j$ be double covers of $\Od$, $\beta_j(\mesh)F_{\Cvr_j}^\mesh \to f_{\Cvr_j}$, and $c^{a}_j$ denote the corresponding coefficients in the expansions of $f_{\Cvr_j}$ near $a$ (see (\reflistD) in the definition of $f_{\Cvr_j}$ given in Sect.~\ref{Sec: cont_spinors}). Then,

\begin{equation}
\frac{\beta_1(\mesh)F^{\delta}_{\Cvr_1}(a^{\delta})}
{\beta_2(\mesh)F^{\delta}_{\Cvr_2}(a^{\delta})}\rightarrow
\frac{c_1^{a}}{c_2^{a}}.
\end{equation}
\end{lemma}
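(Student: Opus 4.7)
My plan is to combine the two observables so that their polar parts at $a$ cancel exactly, and to deduce the ratio from an $H$-function analysis near $a^\mesh$ analogous to Lemma~\ref{lemma: onboundary}. By regularity at $a$, I first fix local coordinates in which $\partial\Od$ is a horizontal segment near $a^\mesh=0$ and $\Od\supset R^\mesh(s,t)$ with $a^\mesh$ at the midpoint of its bottom side (Definition~\ref{def: reg_conv}). Since $a$ lies on the outer boundary, both double covers $\Cvr_j$ restrict to trivial covers over this rectangle, so after a consistent choice of sheets I may regard $F^\mesh_j:=\beta_j(\mesh)F^\mesh_{\Cvr_j}(a^\mesh,\cdot)$ as ordinary s-holomorphic functions on $R^\mesh$ obeying~(\ref{bc_0}) on the bottom side of $R^\mesh$ except at $a^\mesh$.

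The key construction is the combination
\[
G^\mesh(z):=c_2^a\,F^\mesh_1(z)-c_1^a\,F^\mesh_2(z),\qquad z\in R^\mesh,
\]
which is s-holomorphic on $R^\mesh$, satisfies (\ref{bc_0}) on the bottom side except possibly at $a^\mesh$, and by Theorem~\ref{thm: mainconv_int_1} converges in the bulk to $G:=c_2^a f_{\Cvr_1}-c_1^a f_{\Cvr_2}$. By the definition (\reflistD) of the coefficients $c_j^a$, the polar parts $c_j^a\sqrt{in_a}/(z-a)$ cancel exactly, so $G$ is holomorphic and \emph{bounded} in a neighborhood of $a$. The main technical step is to upgrade this bulk cancellation to the uniform bound $|G^\mesh(a^\mesh)|=O(1)$ as $\mesh\to 0$. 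To this end I consider the primitive $H^{G,\mesh}:=\Im\int^\mesh(G^\mesh(z))^2\, d^\mesh z$ provided by Proposition~\ref{prop: propertiesH}, normalized to vanish on the bottom side of a fixed small sub-rectangle $R^\mesh_0\subset R^\mesh$ centered at $a^\mesh$. On the three inner sides of $R^\mesh_0$ the bulk convergence $G^\mesh\to G$ combined with boundedness of $G$ near $a$ gives $|H^{G,\mesh}|=O(1)$ uniformly; then sub-/super-harmonicity of $H^{G,\mesh}_\bullet,H^{G,\mesh}_\circ$ together with the rectangle harmonic-measure estimate~(\ref{HmRectangles}) propagates this bound all the way to $a^\mesh$ and yields $H^{G,\mesh}_\bullet(v_{a^\mesh})-H^{G,\mesh}_\circ(f_{a^\mesh})=O(\mesh)$, which by~(\ref{defH_basic}) is precisely $|G^\mesh(a^\mesh)|=O(1)$. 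This is the main obstacle, as each of the two summands defining $G^\mesh$ individually blows up at rate $\mesh^{-1}$ at $a^\mesh$ and the cancellation has to be tracked uniformly up to the boundary point itself.

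Once $|G^\mesh(a^\mesh)|=O(1)$ is established, solving the identity $G^\mesh(a^\mesh)=c_2^aF^\mesh_1(a^\mesh)-c_1^aF^\mesh_2(a^\mesh)$ for the ratio gives
\[
\frac{F^\mesh_1(a^\mesh)}{F^\mesh_2(a^\mesh)}=\frac{c_1^a}{c_2^a}+\frac{O(1)}{c_2^a\,F^\mesh_2(a^\mesh)}\,,
\]
so the argument concludes once $|F^\mesh_2(a^\mesh)|\to\infty$. Applying the same cancellation with $\Cvr_2$ replaced by the trivial cover $\Cvr_0$ (for which $c_0^a\neq 0$ by Lemma~\ref{lemma: uniqueness_bis}) reduces this to the analogous divergence for $\Cvr_0$, which in turn follows from Proposition~\ref{prop: obs_corr} together with $|\beta_0(\mesh)F^\mesh_0(a^\mesh,b^\mesh)|\to|f^\Omega_0(a,b)|$ supplied by Lemma~\ref{lemma: onboundary} and the pole of $f^\Omega_0$ at~$a$; concretely these give $|\beta_0(\mesh)F^\mesh_0(a^\mesh,a^\mesh)|\sim|c_0^a|\mesh^{-1}$, which closes the argument.
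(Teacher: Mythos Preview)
Your overall strategy—take a real linear combination killing the pole and run an $H$-function argument in a rectangle around $a$—matches the paper's. But the paper makes one crucial different choice that you miss, and without it your step~3 becomes circular.

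You combine with the \emph{continuous} coefficients $c_1^a,c_2^a$, so $G^\mesh(a^\mesh)=c_2^aF_1^\mesh(a^\mesh)-c_1^aF_2^\mesh(a^\mesh)$ is a nonzero multiple of $i\eta_a$ in general. Hence the boundary condition~(\ref{bc_0}) still \emph{fails} at $a^\mesh$, and while $H^{G,\mesh}_\circ$ indeed stays $0$ across the bottom (the tangential increment vanishes since $G^\mesh(a^\mesh)^2\in\R$), the value $H^{G,\mesh}_\bullet(v_a)$ at the outer boundary vertex equals $\sqrt{2}\mesh\cos^2\!\tfrac{\pi}{8}\cdot|G^\mesh(a^\mesh)|^2$. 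This spike at $v_a$ is part of the boundary data for the subharmonic function $H^{G,\mesh}_\bullet$, so the maximum principle only gives $H^{G,\mesh}_\bullet\le\max\bigl(O(1),\,\mathrm{const}\cdot\mesh|G^\mesh(a^\mesh)|^2\bigr)$—which is exactly the quantity you are trying to bound. The sentence ``sub-/super-harmonicity\dots propagates this bound all the way to $a^\mesh$'' hides this circularity; property~(\ref{H: modify}) of Proposition~\ref{prop: propertiesH} explicitly excludes $v_a$, so you cannot set $H_\bullet(v_a)=0$ there.

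The paper's remedy is to use the \emph{discrete} ratio $K^\mesh:=\beta_1(\mesh)F_1^\mesh(a^\mesh)/\beta_2(\mesh)F_2^\mesh(a^\mesh)$ as the coefficient, forcing $F^\mesh(a^\mesh)=0$ exactly. Then (\ref{bc_0}) holds on the \emph{entire} bottom side, the $H$-function analysis in $R^\mesh$ is clean, and one gets $F^\mesh=O(1)$ there. Passing to a subsequential limit $K^\mesh\to k$, the limit $kf_{\Cvr_2}-f_{\Cvr_1}$ is bounded near $a$, which forces $kc_2^a=c_1^a$. This also makes your step~4 unnecessary: no divergence rate for $F_j^\mesh(a^\mesh)$ is ever needed (and your justification $|\beta_0(\mesh)F_0^\mesh(a^\mesh,a^\mesh)|\sim|c_0^a|\mesh^{-1}$ does not follow from Lemma~\ref{lemma: onboundary}, which applies only at fixed $b\ne a$).
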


\begin{proof}
Below we assume that $\Omega$ and $\Od$ are shifted and rotated 
so that $a^\mesh=a=0$, $\Od$ contains a discrete rectangle $R^\delta(s,t)$, and $\partial \Od$ locally coincides with
$\pa\C^\mesh_+$. We also write $F^\mesh_{1,2}$ for $F^\mesh_{\Cvr_{1,2}}$ and $f_{1,2}$ for $f_{\Cvr_{1,2}}$. Denote
\[
K^\mesh:=\frac{\beta_1(\mesh)F^\mesh_1(a^\mesh)}{\beta_2(\mesh)F^\mesh_2(a^\mesh)}.
\]
Recall that both $F^\mesh_j(a)$ are positive multiples of some fixed complex number $i\eta_a$
(see Proposition~\ref{prop: obs_corr}), thus $K^\mesh>0$. Taking a subsequence, we may assume
that $K^\mesh\to k<+\infty$ as $\mesh\to 0$ (if $k=+\infty$, then swap $F^\mesh_1$ and
$F^\mesh_2$ and consider the inverse ratio $(K^\mesh)^{-1}$). Note that
\[
{\beta_{1,2}(\mesh)}\,F^\mesh_{1,2}(z)\mathop{\rightrightarrows}\limits_{\mesh\to 0}
f_{1,2}(z)=i c^a_{1,2}\,z^{-1}\!+O(1)\ \ \text{inside}~~R(s,t).
\]
Therefore, it is sufficient to prove that the function
\[
F^\mesh(z):=K^\mesh\cdot {\beta_2(\mesh)}\,F^\mesh_2(z)- {\beta_1(\mesh)}\,F^\mesh_1(z),
\]
defined in $R^\mesh(s,t)$, converges to a limit which remains bounded near $a=0$, since this
will immediately give $k c^a_2 -c^a_1= 0$ for any subsequential limit.

Note that, being a real linear combination of $F^\mesh_1$ and $F^\mesh_2$, the function $F^\mesh$ is
s-holomorphic in the discrete rectangle $R^\mesh(s,t)$ and satisfies the boundary condition
(\ref{bc_0}) on its bottom side, including the point $a^\mesh$, where $F^\mesh(a^\mesh)=0$.
Thus, (\ref{defH_basic}) allows one to define a discrete integral $H^\mesh:=\Im\int^\mesh(F^\mesh(z))^2d^\mesh z$ inside $R^\mesh(s,t)$ so that
$H^\mesh\equiv 0$ everywhere on the bottom side of $R^\mesh(s,t)$.

We claim that both functions $\beta_j(\mesh)F^\mesh_j$, and hence $F^\mesh$ and
$H^\mesh$, are uniformly bounded on the top, left and right sides of the smaller rectangle
$R^\mesh(\frac{1}{2}s,d)$ (see Fig.~\ref{Fig: proof}B for the notation), where $d\ll s$. On the top side, this follows from the uniform
convergence of $\beta_j(\mesh)F^\mesh_j$ to a continuous limit. In order to prove a uniform bound on the left and the right sides, note that, for $u\in \pm\frac{1}{2}s+R(2d,2d)$,
the second terms in (\ref{unifH}) and (\ref{unifH1}) can be uniformly estimated using
(\ref{HmRectangles}) in the following way:
\[
1-\text{hm}_{\Od(r')}(u,\gamma^\mesh_0)  \leq \const(s,t,r') \cdot \Im u.
\]
Therefore, $(\beta_j(\mesh))^2H^\mesh_j(u)=O(\Im u)$ in neighborhoods of the left and the right sides of
$R^\mesh(\frac{1}{2}s,d)$. Hence, $\beta_j(\mesh)F^\mesh_j=O(1)$ on
those sides due to \cite[Theorem~3.12]{CHS2}.

Thus, $H^\mesh$ is uniformly bounded on the top, left and right sides of the rectangle
$R^\mesh(\frac{1}{2}s,d)$ and vanishes on its bottom side. Using super-/sub-harmonicity of
$H^\mesh$ on faces/vertices, and uniform estimates (\ref{HmRectangles}), one easily deduces
from here that $H^\mesh(u)=O(\Im u)$ everywhere in $R^\mesh({\textstyle\frac{1}{4}}s,d)$.
Applying \cite[Theorem~3.12]{CHS2} once again, we conclude that $F^\mesh=O(1)$ in
$R^\mesh({\textstyle\frac{1}{4}}s,d)$. Hence, $k c^a_2 -c^a_1= 0$.
\end{proof}

\begin{proof}[{\bf Proof of Theorem~\ref{thm: mainconv}}]
One has
\[
\frac{F_{\Cvr_1}(b^\delta)F_{\Cvr_2} (a^\delta)}
{F_{\Cvr_1} (a^\delta)F_{\Cvr_2} (b^\delta)}=
\frac{\beta_2(\mesh)F_{\Cvr_2} (a^\delta)}{\beta_1(\mesh)F_{\Cvr_1} (a^\delta)}\cdot \frac{\beta_1(\mesh)F_{\Cvr_1} (b^\delta)}{\beta_2(\mesh)F_{\Cvr_2} (b^\delta)} \to \frac{(c^a_1)^{-1}f^\Omega_{\Cvr_1}(a,b)}{(c^a_2)^{-1}f^\Omega_{\Cvr_2}(a,b)},
\]
where we have applied Lemmas~\ref{lemma: diffCovers} and~\ref{lemma: onboundary} to the first and the second factors.
\end{proof}

\begin{proof}[{\bf Proof of Remark \ref{rem: ratio_ab_ad}(ii)}]
Let $r>0$ be chosen small enough. As above, it is sufficient to prove (\ref{ratio_ab_ad}) for any
subsequence $\mesh=\mesh_k\to 0$ such that $\beta(\mesh)F^\mesh_0$ converge to some
nontrivial continuous limit. But this immediately follows from Lemma~\ref{lemma: onboundary}
applied to both boundary points $b^\mesh\to b$ and $d^\mesh\to d$, since the properly normalized observables
$\beta(\mesh)F^\mesh_0(a^\mesh,\,\cdot\,)$ converge to $f_0^\Omega(a,\,\cdot\,)$ at both $b$ and $d$.
\end{proof}

\setcounter{equation}{0}
\section{Multiple boundary change operators}
\label{Sec: pfaffian}

In this section, we follow \cite{HThesis} to extend the definition of the spinor observables to the case of \emph{multiple} marked points on the boundary. We show that these observables (we call them \emph{multi-source} ones) are still s-holomorphic. Moreover, by analysing boundary value problems they solve, we prove \emph{recurrence relations} that eventually allow one to express all these observables in terms of the basic ones introduced in Section \ref{Sec: observables_limits}.

The reason to introduce the multi-source observables is revealed in Propositions \ref{prop: obs_corr_many} and \ref{prop: ObsZ}, where we establish their relation to spin correlations and partition functions, respectively. The latter is especially important in view of two applications. First, it allows one to prove the discrete martingale property of those observables with respect to interfaces growing in multiply connected domains, leading to a description of scaling limits thereof \cite{IzInt}. Second, in the case of $2n$ microscopic holes carrying one boundary change operator each, the Kramers-Wannier duality relates the corresponding partition function to the $2n$-points spin-spin correlations in the critical Ising model with free boundary conditions. This can be used to prove the conformal covariance of their scaling limits, which would complement the results of \cite{CHHI} and this paper.

\smallskip

Consider a domain $\Od$ with $2n\!+\!1$ marked points  $a:=a_0, a_1,\dots,a_{2n}\in\partial \Od$ and an inner edge $z$.  Each configuration $S\in\Conf_{a_0,a_1,\dots,a_{2n},z}$ can be decomposed into a collection of (mutually disjoint, non-self-intersecting) loops and $n\!+\!1$ curves connecting $a_k$'s and $z$ in some manner. In order to define the complex phase of $S$, we draw $n$ artificial arcs $\nu_1,\dots,\nu_n$ connecting $\Cvr(a_1)$ to $\Cvr(a_2)$, $\dots$, $\Cvr(a_{2n-1})$ to $\Cvr(a_{2n})$, respectively, and fix the way how they lift to the double cover $\widetilde\Omega$.
Adding these arcs to a configuration promotes it to a collection of loops and a {single curve} $\gamma$ running from $\Cvr(a_0)$ to $\Cvr(z)$. As we admit intersections of the artificial arcs with curves constituting $S$, this time the loops can be self-intersecting, see Figure \ref{Fig: multi}.
\begin{definition}
\label{def: multi_source}
 We define the complex phase $W_\Cvr(z,S)$ to be $e^{-\frac{i}{2}\wind(\gamma)}(-1)^{l(S)}s(z,\gamma)$ \emph{times} $(-1)^{I(S)}$, where $I(S)$ is the number of loops in $S\cup\nu_1\cup\dots\cup\nu_n$ that have zero winding modulo $4\pi$, and other factors are as in Definition \ref{MainDef}. Further, we define the multi-source observable by the formulae
\begin{equation}
\label{Observable_MP} F_\Cvr(a_0,a_1,\dots,a_{2n};z):=i\eta_a\ \cdot\!\!\!\!\!\sum\limits_{S\in \Conf_{a,a_1,\dots a_{2n},z}(\Od)} W_\Cvr(z,S) x^{|S|}.
\end{equation}
\end{definition}
\begin{rem}
\label{rem: artificial_arcs}

\noindent In fact, the only data we use concerning each of the artificial arcs $\nu_s$ is its winding $\wind(\nu_s)$ modulo $4\pi$
and the way how it lifts to $\widetilde\Omega$. The reader may check that altering the choice of $\{\nu_s\}$ can only result in a sign change of the observable.
\end{rem}



\begin{figure}[t]
\includegraphics[width=\textwidth]{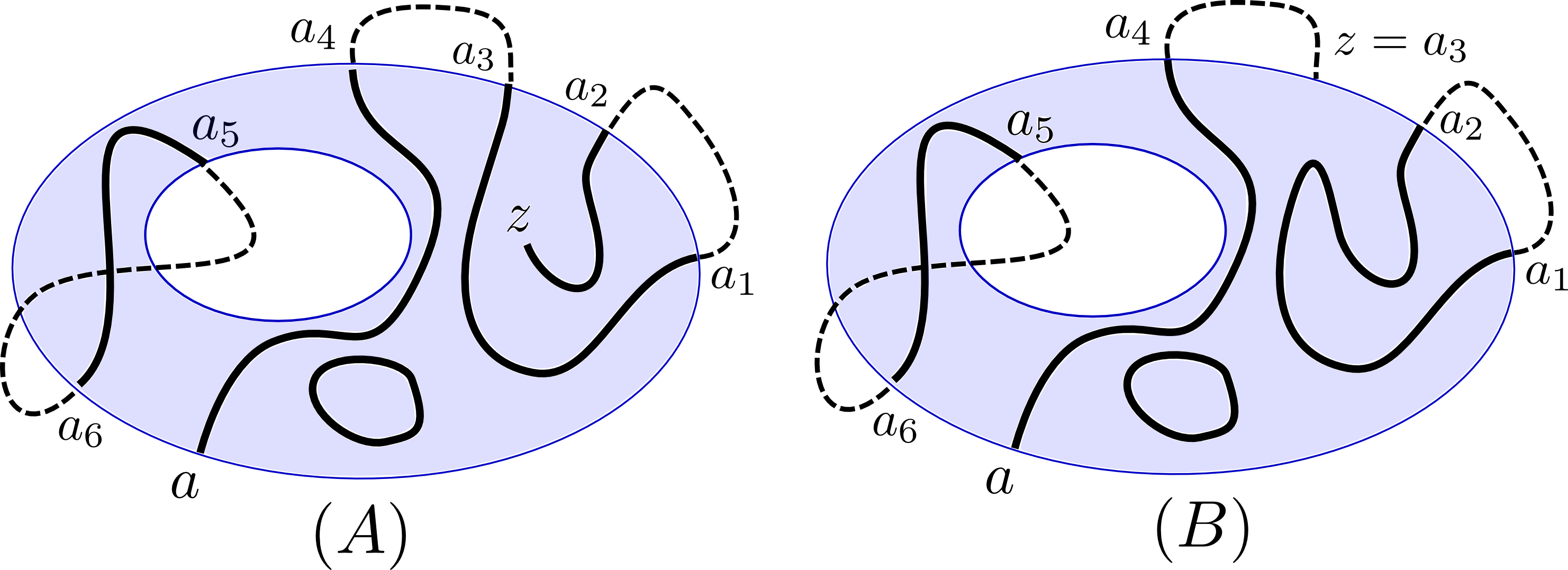}
\caption{\label{Fig: multi} (A) A doubly connected domain $\Od$ and a configuration $S\in\Conf_{a_0,a_1,\dots,a_6;z}(\Od)$. Adding artificial arcs
 (dashed lines), we promote $S$ to a collection of loops and a
simple path $\gamma$ running from $\Cvr(a_0)$ to $\Cvr(z)$. The loop containing $a_5$ and $a_6$ has winding $0$, thus contributing to $I(S)$. (B)~A configuration contributing to $f(a_0,a_1,a_2,a_3,a_4,a_5,a_6;a_3)$. It also contributes the same value (up to a complex factor $e^{-\frac{i}{2}\wind(\nu_2:a_4\to a_3)}$) to $f(a_0,a_1,a_2,a_5,a_6;a_4)$.}
\end{figure}

The following straightforward generalization of Theorem~\ref{Thm: shol} holds true:

\begin{proposition}
\label{prop: s-hol_many} The observables (\ref{Observable_MP}) are
s-holomorphic spinors satisfying the boundary condition (\ref{bc_0}) everywhere on $\partial
\dOd\setminus\{a_0,\dots,a_{2n},a_0^\ast,\dots,a_{2n}^\ast\}$.
\end{proposition}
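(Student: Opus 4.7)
The plan is to follow the blueprint of Theorem~\ref{Thm: shol}, with the only new ingredient being the need to track the extra sign $(-1)^{I(S)}$ under the basic local bijection used there. Fix a corner $c\in\Corn(\dOd)$, its vertex $v$, and the two incident edges or half-edges $z',z''$, and consider the involution
\[
\Pi:\Conf_{a_0,\dots,a_{2n},z'}(\Od)\to\Conf_{a_0,\dots,a_{2n},z''}(\Od)
\]
given by XOR with the pair of half-edges $\{\Cvr(vz'),\Cvr(vz'')\}$. From the proof of Theorem~\ref{Thm: shol} one already knows that
\[
\Pr_{\eta_c}\bigl(e^{-\frac{i}{2}\wind(\gamma_S)}(-1)^{l(S)}s(z',\gamma_S)\,x^{|S|}\bigr)=\Pr_{\eta_c}\bigl(e^{-\frac{i}{2}\wind(\gamma_{\Pi(S)})}(-1)^{l(\Pi(S))}s(z'',\gamma_{\Pi(S)})\,x^{|\Pi(S)|}\bigr),
\]
where paths and loops are taken in $S\cup\nu$ and $\Pi(S)\cup\nu$ with $\nu=\nu_1\cup\dots\cup\nu_n$ held fixed. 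For this part the artificial arcs $\nu_s$ behave no differently from the real edges already appearing in $S$ (see Remark~\ref{rem: artificial_arcs}), and the single-source argument carries over verbatim. It therefore remains to verify that $(-1)^{I(S)}=(-1)^{I(\Pi(S))}$.

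Since $\Pi$ acts as a purely local surgery at $v$ while $\nu$ is kept in place, it performs one of finitely many local moves on the loop system $S\cup\nu$: splitting a single loop into two, merging two loops into one, exchanging a portion of the path $\gamma$ with a portion of a loop, or creating/destroying a microscopic loop at $v$. In each such move, the windings of the loops involved satisfy an additivity relation of the form $\wind(L_1)+\wind(L_2)=\wind(L_1\#L_2)\pm 2\pi$, with the $\pm 2\pi$ coming from the local reconnection at $v$. Using that the winding of any closed loop is a multiple of $2\pi$, a short case check shows that the parity of the number of loops with winding $\equiv 0\pmod{4\pi}$ is preserved. This is the combinatorial lemma established by Hongler in~\cite{HThesis} for the simply connected setup; the double cover plays no role in it, because the definition of $I(S)$ only involves windings in the planar domain $\Od$, and the effects of $\Cvr$ have already been absorbed into the factors $(-1)^{l(S)}$ and $s(z,\gamma)$.

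The boundary condition at $b\in\pa\dOd\setminus\{a_0,a_0^\ast,\dots,a_{2n},a_{2n}^\ast\}$ is then verified exactly as at the end of the proof of Theorem~\ref{Thm: shol}: the winding of any curve running from $\Cvr(a_0)$ to $\Cvr(b)$ equals $\arg b-\arg a_0-\pi$ modulo $2\pi$, which pins down the complex direction of $F_\Cvr(a_0,\dots,a_{2n};b)$ to be $\eta_b\R$. The spinor identity $F_\Cvr(\,\cdot\,;z^\ast)\equiv-F_\Cvr(\,\cdot\,;z)$ is immediate from the factor $s(z,\gamma)$ in Definition~\ref{def: multi_source}, as $I(S)$, $l(S)$ and $\wind(\gamma)$ do not depend on which sheet $z$ lies on.

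The main obstacle is precisely the case analysis showing that $\Pi$ preserves $(-1)^{I(S)}$: one must patiently enumerate the possible local pictures at $v$ (two through-strands, one strand meeting itself, one strand meeting the path, etc.) together with the effect of the reconnection on each affected loop's winding mod $4\pi$. Once this is in place, no further new ideas are needed, and everything else is a direct adaptation of the single-source proof.
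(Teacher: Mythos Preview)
Your decomposition into two independent claims is the wrong split, and in fact both claims fail. You assert that the identity
\[
\Pr_{\eta_c}\bigl(e^{-\frac{i}{2}\wind(\gamma_S)}(-1)^{l(S)}s(z',\gamma_S)\,x^{|S|}\bigr)=\Pr_{\eta_c}\bigl(e^{-\frac{i}{2}\wind(\gamma_{\Pi(S)})}(-1)^{l(\Pi(S))}s(z'',\gamma_{\Pi(S)})\,x^{|\Pi(S)|}\bigr)
\]
carries over verbatim from Theorem~\ref{Thm: shol}, and that separately $(-1)^{I(S)}=(-1)^{I(\Pi(S))}$. But the single-source proof uses that every loop is \emph{simple} and hence has winding $\pm 2\pi$, so that absorbing it into (or emitting it from) $\gamma$ contributes a factor $e^{\mp i\pi}=-1$ to $e^{-\frac{i}{2}\wind(\gamma)}$. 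In the multi-source setting the loops of $S\cup\nu$ may self-intersect and have winding $0\pmod{4\pi}$; when $\Pi$ merges such a loop into $\gamma$, the contribution to $e^{-\frac{i}{2}\wind(\gamma)}$ is $+1$ rather than $-1$, so your first displayed identity is off by a sign in exactly those cases. Simultaneously, that very loop was counted in $I(S)$ and is no longer present, so $I$ changes by one and your second claim is off by the same sign. Concretely: take any configuration in which $\gamma$ passes through $v$ and $\Pi$ pinches off a figure-eight loop of total winding $0$; then $I$ jumps from $0$ to $1$.

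The paper's argument is precisely that these two sign discrepancies cancel: the purpose of the factor $(-1)^{I(S)}$ in Definition~\ref{def: multi_source} is to \emph{compensate} for the failure of the single-source winding identity when self-intersecting loops of winding $0\pmod{4\pi}$ appear, not to be invariant under $\Pi$ on its own. So the ``combinatorial lemma'' you attribute to \cite{HThesis} is not a parity-preservation statement for $I(S)$; the correct statement is that the product $e^{-\frac{i}{2}\wind(\gamma)}(-1)^{I(S)}$ transforms under $\Pi$ exactly as $e^{-\frac{i}{2}\wind(\gamma)}$ does in the simple-loop setting. Once you track the two factors together, the rest of your write-up (boundary conditions, spinor property) is fine.
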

\begin{proof}
The proof is essentially the same as the one of Theorem~\ref{Thm: shol}. The only difference is that now the bijection $\Pi$ between the sets of configurations $\Conf_{a,a_1,\dots,a_{2n},z'}$ and $\Conf_{a,a_1,\dots,a_{2n},z''}$ can create or destroy a loop which is self-intersecting. If such a loop has winding $2\pi$ modulo $4\pi$, then there is no difference with the case of a simple loop. If it has winding $0$ modulo $4\pi$, then its contribution to $e^{-\frac{i}{2}\wind(\gamma)}$ after it becomes a part of $\gamma$ is \emph{minus} that of a simple loop, which is compensated by the simultaneous change of $I(S)$ by one.
The boundary conditions follow in the same way as before.
\end{proof}

We now turn to the relation of the multi-source observables with spin correlations and partition functions. Let $2n\!+\!2$ distinct boundary points $a=a_0,a_1,\dots,a_{2n}$, \mbox{$a_{2n+1}=b$} be chosen on the \emph{outer} boundary of $\dOd$, and let $\Cvr$ be a double cover that branches around boundary components $\gamma_1,\dots,\gamma_m$ and does not branch around the others. Write also $\sigma(\Gamma):=\sigma(\gamma_1)\dots\sigma(\gamma_m)$, with $\sigma(\Gamma)=1$ if $m=0$.
\begin{proposition}
\label{prop: obs_corr_many}
We have
\begin{equation}
\label{obs_corr_many}
F_\Cvr(a,A;b)= \pm \eta_b\rZ_{a,A,b}\E_{a,A,b} [\,\sigma(\Gamma)\,],
\end{equation}
where the subscripts refer to the model with boundary change operators at all the marked points $a,a_1,\dots,a_{2n},b$.
\end{proposition}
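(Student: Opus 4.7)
The plan is to follow closely the argument of Proposition \ref{prop: obs_corr}: show that every $S \in \Conf_{a, A, b}(\Od)$ contributes $\pm \eta_b x^{|S|}$ to the left-hand side of (\ref{obs_corr_many}) and $\pm x^{|S|}$ times the value of $\sigma(\Gamma)$ in the associated spin configuration to the right-hand side, and then match the signs. The magnitude $x^{|S|}$ is immediate from the definitions, and the fact that the phase on the left is $\pm \eta_b$ follows from the same observation used in Theorem~\ref{Thm: shol}: the winding of any boundary-going curve from $a$ to $b$ is determined modulo $4\pi$ by the boundary tangent directions alone.

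The entire content is sign matching. I would begin by fixing a reference configuration $S_{\mathrm{ref}}$ whose interfaces coincide with the outer boundary arcs between consecutive marked points in a chosen pairing, matching how the artificial arcs $\nu_1, \ldots, \nu_n$ are drawn (say $(a, a_1), (a_2, a_3), \ldots, (a_{2n}, b)$). For this $S_{\mathrm{ref}}$, both sides of (\ref{obs_corr_many}) can be evaluated explicitly: the Ising spin pattern is determined by the alternation at the boundary change operators, so each $\sigma(\gamma_j)$ is read off directly; on the observable side, $\gamma$ augmented by $\nu_1, \ldots, \nu_n$ is essentially a boundary-following curve, and $\wind(\gamma)$, $l(S_{\mathrm{ref}})$, $s(b, \gamma)$, $I(S_{\mathrm{ref}})$ can be computed directly, yielding the asserted sign (essentially as in Remark~\ref{rem: bcinversion}).

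For a general $S$, I would connect it to $S_{\mathrm{ref}}$ by elementary moves of three types: adding or removing a loop in $S$ disjoint from the $\nu_j$; a local reconnection at a degree-$4$ vertex of $\Od$; and a reconnection that changes the pairing of the marked boundary points induced by the interfaces of $S$. For the first two, the sign analysis is essentially as in Proposition~\ref{prop: obs_corr}: adding a loop enclosing an odd number of branching components $\gamma_j$ contributes $-1$ on the left through $(-1)^{l(S)}$ and $-1$ on the right through the flip of the corresponding interior spin structure, while moving $\gamma$ across a $\gamma_j$ simultaneously flips $s(b, \gamma)$ on the left and $\sigma(\gamma_j)$ on the right.

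The delicate step, which I expect to be the main obstacle, is the third type of move: when the interfaces of $S$ pair $\{a_1, \ldots, a_{2n}\}$ differently from the $\nu_j$, the superposition $S \cup \nu_1 \cup \cdots \cup \nu_n$ produces self-intersecting loops whose winding modulo $4\pi$ governs the factor $e^{-\frac{i}{2}\wind}$. The factor $(-1)^{I(S)}$ in Definition~\ref{def: multi_source} is designed exactly to compensate, in these cases, for the discrepancy between $e^{-\frac{i}{2}\wind} = +1$ on loops of winding $0 \pmod{4\pi}$ and the $-1$ one would naively expect; one then has to check that this matches the reshuffling of boundary-arc signs, and hence of $\sigma(\Gamma)$, on the right-hand side. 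A cleaner route through this obstacle is likely to combine induction on $n$ (inserting one pair $(a_{2n-1}, a_{2n})$ at a time) with the invariance of the observable under the choice of artificial arcs stated in Remark~\ref{rem: artificial_arcs}, which lets one homotope the $\nu_j$ near the boundary and reduce the bookkeeping to a purely local combinatorial verification.
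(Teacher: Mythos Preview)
Your plan is correct and follows exactly the route the paper intends: the paper's own proof consists of the single sentence ``The proof follows from definition of $F_{\Cvr}(a,A;\cdot)$ similarly to (\ref{obs_corr}), so we leave it to the reader,'' and your proposal is precisely a fleshing-out of that sentence, correctly identifying the one genuinely new ingredient (the role of $(-1)^{I(S)}$ when the interface pairing of the $a_k$ differs from that of the artificial arcs $\nu_s$).
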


\begin{proof}
The proof follows from definition of $F_{\Cvr}(a,A;\cdot)$ similarly to (\ref{obs_corr}), so we leave it to the reader.
\end{proof}
\begin{rem}
\label{rem: signs}
Let us describe one way to fix the sign in (\ref{obs_corr_many}). Suppose that, starting at $a$ and tracing the outer boundary of $\dOd$ in the counterclockwise direction, we find the marked points $a,a_1,\dots,a_{2n},b$ in that order. 
Let the artificial arcs $\nu_s$ run outside the domain, following the boundary arcs $(a_{2s-1},a_{2s})$. Then, assuming that the arcs $(\Cvr(a_{2s-1})\Cvr(a_{2s}))$ carry ``$+$'' boundary conditions, one can replace $\pm\eta_b$ in the right-hand side of (\ref{obs_corr_many}) by $-\eta_ae^{-\frac{i}{2}\wind_{ab}}$, where $\wind_{ab}$ is the winding of the counterclockwise boundary arc $(\Cvr(a)\Cvr(b))$, cf.~Remark~\ref{rem: bcinversion}.
\end{rem}

For the next proposition, we allow the marked points $a=a_0,\dots,a_{2n+1}=b$ to be \emph{anywhere} on the boundary. Denote by $\Cvr_\rZ$ the double cover that branches around all boundary components of $\Od$ carrying an \emph{odd} number of marked points and does not branch around the others.
\begin{proposition}
\label{prop: ObsZ}
Let $\Od$ be a discrete domain. One has
$$
F_{\Cvr_\rZ}(a,A;b)= \pm {\eta_b}\cdot \rZ_{a,A,b}.
$$
\end{proposition}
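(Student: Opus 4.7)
The plan is to mirror the proof of Proposition~\ref{prop: obs_corr_many}: show that each configuration $S\in\Conf_{a,A,b}(\Od)$ contributes a term of the form $\pm\eta_b\cdot x^{|S|}$ to both sides of the identity, with a single overall sign independent of $S$. Here $\rZ_{a,A,b}$ is interpreted as the partition function with boundary change operators at $a,a_1,\dots,a_{2n},b$, realized via a lift of the $+/-$ boundary conditions to the double cover $\dOd$ on which they become consistent (equivalently, via disorder insertions at the marked points on $\Od$); this interpretation naturally carries the signs needed to match the observable.

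For each $S$, I would decompose $S\cup\nu_1\cup\dots\cup\nu_n$ into a single curve $\gamma:\Cvr(a)\rightsquigarrow\Cvr(b)$ and a collection of (possibly self-intersecting) loops, and write the contribution to the left-hand side as
\[
i\eta_a\cdot e^{-\frac{i}{2}\wind(\gamma)}\,(-1)^{l(S)}\,s(b,\gamma)\,(-1)^{I(S)}\cdot x^{|S|}.
\]
To establish that this phase is $S$-independent, it suffices to check invariance under a set of elementary moves generating $\Conf_{a,A,b}(\Od)$ from any reference configuration: (i) adding a simple trivial loop (which preserves $l(S)$, $\wind(\gamma)$, $s(b,\gamma)$, and $I(S)$); (ii) local deformations of $\gamma$ in the bulk; and (iii) pushing $\gamma$ across a branching boundary component --- this last move simultaneously changes $\wind(\gamma)$ by $\pm 2\pi$ (contributing a factor $-1$) and flips $s(b,\gamma)$, leaving the phase unchanged. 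The crucial role of the specific choice of $\Cvr_\rZ$ is that it branches around exactly those boundary components on which the marked points force an inconsistency of the $+/-$ prescription, so that the topological sign $(-1)^{l(S)}$ is precisely the sign already absorbed into the lifted partition function.

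The main obstacle is the case of adding or splicing a non-trivial loop: such an operation flips $(-1)^{l(S)}$ without changing $\gamma$, apparently altering the phase on the left-hand side. The resolution is to verify carefully that $\rZ_{a,A,b}$, viewed as a signed sum over configurations in $\Od$ coming from the spin model on $\dOd$, carries a compensating sign that records the homology class of the configuration relative to the branching components. The simplest nontrivial instance --- a doubly connected domain with $a$ on the outer boundary and $b$ on the inner one --- already exhibits this phenomenon and is the natural test case. Once sign constancy is in place, the identification of the common sign as $\pm\eta_b$ follows by evaluating both sides on a canonical reference configuration, exactly as in Remark~\ref{rem: signs}.
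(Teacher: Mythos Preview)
There is a genuine gap: your proposed resolution of the ``main obstacle'' rests on a misreading of $\rZ_{a,A,b}$. This partition function is the \emph{unsigned} sum $\sum_{S\in\Conf_{a,A,b}(\Od)}x^{|S|}$; it carries no homological sign whatsoever. Thus there is nothing on the right-hand side to compensate a flip of $(-1)^{l(S)}$. The proposition asserts something stronger than you allow for: the complex phase $W_{\Cvr_\rZ}(b,S)$ is \emph{genuinely constant} in $S$, so that every configuration contributes $\pm\eta_b x^{|S|}$ with the \emph{same} sign. Your elementary-moves approach, as written, does not establish this. In particular, your analysis of move (iii) is incorrect: pushing $\gamma$ across a boundary component does not in general change the tangent winding by $\pm 2\pi$ (winding is a local invariant of the curve, not a homological one), so the claimed cancellation with $s(b,\gamma)$ fails.

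The paper's proof supplies exactly the missing topological idea. One adjoins an extra artificial arc $\nu_0$ from $b$ back to $a$, so that $S\cup\nu_0\cup\nu_1\cup\dots\cup\nu_n$ becomes a collection of closed (possibly self-intersecting) planar curves. Choosing the arcs $\nu_0,\dots,\nu_n$ as simple mutually disjoint curves drawn in $\Od$ realizes the double cover $\Cvr_\rZ$ concretely: cutting along the arcs gives a sheet of $\dOd$, so $l(S)$ counts (mod $2$) crossings of a loop with the cuts, while the factor $(-1)^{I(S)}$ counts self-intersections via Whitney's formula. After a short bookkeeping the total phase reduces to $-e^{\frac{i}{2}\wind(\nu_0)}(-1)^{N}$, where $N$ is the total number of intersection points between \emph{distinct} loops. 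Since any two closed curves in the plane meet in an even number of points, $N$ is always even, and the phase is independent of $S$. This is the role of the specific cover $\Cvr_\rZ$: it is precisely the cover for which the cuts $\nu_s$ turn the sheet-change data into planar intersection data, making the parity argument go through.
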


\begin{proof}
 Any configuration $S\in\Conf_{a,A,b}$ contributes the same value $\pm {\eta_b}x^{|S|}$ to both sides of the equation, so it is sufficient to check that $W(b,S)$ does not depend on~$S$. It is convenient to add one more artificial arc $\nu_0$ connecting $b$ to $a$, and to think that the arcs $\nu_0,\dots,\nu_n$ are actually drawn on $\Od$ as simple, mutually non-intersecting curves, cf.~Remark~\ref{rem: artificial_arcs}.

By definition, an inner boundary component has an odd number of artificial arcs $\nu_s$ issuing therefrom if and only if $\Cvr_\rZ$ branches around that component. Hence, cutting along, say, the right-hand side of each $\nu_s$, we get a sheet of $\dOd$, and $\dOd$ can be obtained from two copies of that sheet by gluing along the cuts, in such a way that crossing any cut would change the sheet.

We now want to show that $W(b,S)=e^{-\frac{i}{2}\wind(\gamma)}(-1)^{l(S)+I(S)}s(b,\gamma)$ does not depend on $S$. Adding $\nu_0$ to $S\cup\nu_1\cup\dots\cup\nu_n$, we end up with a collection of loops; denote $\gamma_0:=\gamma\cup \nu_0$. Observe that a loop has winding $0$ modulo $2\pi$ (and hence contributes to $I(S)$) if and only if it has an odd number of self-intersections, and it contributes to $l(s)$ if and only if it intersects an odd number of cross-cuts. So, we can write $l(S)+I(S)=\sum_{\textrm{loops}\; \gamma_\alpha\neq \gamma_0} r(\gamma_\alpha)\;\mod \;2$, where $r(\gamma_\alpha)$ is the number of intersections of $\gamma_\alpha\setminus\{\nu_1,\dots,\nu_n\}$ with \emph{other} loops (self-intersections contribute to both $l(S)$ and $I(S)$).
Those intersections can only occur between $S$ and artificial arcs, and each intersection contributes at most once to the sum.

Further, observe that $e^{-\frac{i}{2}\wind(\gamma)}=e^{\frac{i}{2}\wind(\nu_0)}e^{-\frac{i}{2}\wind(\gamma_0)}$. Just as above, $-e^{-\frac{i}{2}\wind(\gamma_0)}$ counts the number of self-intersections of $\gamma_0$, and $s(b,\gamma)$ describes the number of intersections of $\gamma$ with the other cross-cuts. Therefore,
$$
W(S,b)=- e^{\frac{i}{2}\wind(\nu_0)}(-1)^{r(\gamma_0)+\sum_{\textrm{loops}\; \gamma_\alpha\neq\gamma_0} r(\gamma_\alpha)}.
$$
Since the exponent is the total number of intersections between the loops (not counting self-intersections), it is always even.
\end{proof}

In accordance with our convention ``$\text{mod}~2$'' in (\ref{defconf}), Definition~\ref{def: multi_source} also gives the values of $F_\Cvr$ at marked boundary points. Denote
$A:=\{a_1,\dots,a_{2n}\}$.

\begin{proposition}
\label{prop: Ind_formula} The identity
\begin{equation}
\label{Ind_Formula}
\begin{aligned}
F_\Cvr(a_0,A;z)= \sum\limits_{k=0}^{2n} \frac{F_\Cvr(a_0,A;a_k)} {F_\Cvr(a_{k};a_{k})}\,
{F_\Cvr(a_{k};z)}
\end{aligned}
\end{equation}
is fulfilled for any $z\in\dOd$.
\end{proposition}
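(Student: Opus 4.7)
The plan is to reduce the identity to the uniqueness principle of Remark~\ref{remark: homogeneous}. Define
\[
D(z) := F_\Cvr(a_0,A;z) - \sum_{k=0}^{2n} c_k\,F_\Cvr(a_k;z),\qquad c_k := \frac{F_\Cvr(a_0,A;a_k)}{F_\Cvr(a_k;a_k)}.
\]
I would show that $D$ is an s-holomorphic spinor on $\dOd$ obeying the Riemann-type boundary condition~(\ref{bc_0}) at \emph{every} point of $\partial\dOd$, including the marked points $a_0,\ldots,a_{2n}$; Remark~\ref{remark: homogeneous} then yields $D\equiv 0$, which is precisely (\ref{Ind_Formula}).

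The first step is to verify that each coefficient $c_k$ is real. The denominator $F_\Cvr(a_k;a_k)$ is parallel to $i\eta_{a_k}$ by formula~(\ref{obs_corr_a}). The numerator $F_\Cvr(a_0,A;a_k)$ has the same direction, with the convention for the artificial arcs $\nu_s$ fixed in Definition~\ref{def: multi_source}: a winding/complex-phase computation at the endpoint $z=a_k$, parallel to the one distinguishing (\ref{obs_corr}) from (\ref{obs_corr_a}) in the single-source case, shows that the curve $\gamma$ from $a_0$ to $a_k$ inside $S\cup\nu_1\cup\cdots\cup\nu_n$ reaches $z_{a_k}$ through the artificial arc $\nu_s$ incident to $a_k$ rather than through the half-edge $a_k$, contributing the extra $\pi$-shift in the winding that rotates the direction from $\eta_{a_k}$ to $i\eta_{a_k}$. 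Reality of the $c_k$ is essential because s-holomorphicity~(\ref{shol}) is a real-linear condition; granting it, $D$ is a real linear combination of s-holomorphic spinors (Theorem~\ref{Thm: shol} and Proposition~\ref{prop: s-hol_many}) and is itself s-holomorphic.

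Next I would verify (\ref{bc_0}) for $D$ pointwise. At any $b\in\partial\dOd\setminus\{a_0,a_0^\ast,\ldots,a_{2n},a_{2n}^\ast\}$, every summand obeys (\ref{bc_0}), and so does $D(b)$. At $b=a_j$, the choice of $c_j$ gives $c_jF_\Cvr(a_j;a_j)=F_\Cvr(a_0,A;a_j)$, cancelling the leading term and leaving
\[
D(a_j) = -\sum_{k\ne j} c_k\,F_\Cvr(a_k;a_j).
\]
Each $F_\Cvr(a_k;a_j)$ with $k\ne j$ is parallel to $\eta_{a_j}$ by Theorem~\ref{Thm: shol}, and the $c_k$ are real; hence $D(a_j)\parallel\eta_{a_j}$, so (\ref{bc_0}) holds at $a_j$ as well. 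Remark~\ref{remark: homogeneous} then forces $D\equiv 0$.

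The main obstacle is the direction claim $F_\Cvr(a_0,A;a_k)\parallel i\eta_{a_k}$ underpinning reality of the coefficients: it is the multi-source analogue of (\ref{obs_corr_a}), and its verification rests on a careful examination of the complex phase $W_\Cvr(a_k,S)$ for $S\in\Conf_{a_0,A,a_k}(\Od)$, keeping track of how the artificial arc $\nu_s$ incident to $a_k$ terminates at the endpoint $z_{a_k}$. Once this is settled, every other ingredient -- real-linearity of s-holomorphicity, cancellation of the diagonal term at $z=a_j$, and the uniqueness principle of Remark~\ref{remark: homogeneous} -- fits in immediately.
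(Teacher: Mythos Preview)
Your approach is correct and matches the paper's: both show that the difference of the two sides is an s-holomorphic spinor obeying~(\ref{bc_0}) at every boundary half-edge and then invoke Remark~\ref{remark: homogeneous}. Your explicit attention to the reality of the coefficients $c_k$ (needed because s-holomorphicity is only $\R$-linear) is a point the paper leaves implicit; note only that for $k=0$ the mechanism giving $F_\Cvr(a_0,A;a_0)\parallel i\eta_{a_0}$ is that $\gamma$ degenerates to the trivial path at $a_0$ with winding~$0$, not the artificial-arc argument you describe for $k\ge 1$.
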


\begin{proof}
Both sides of (\ref{Ind_Formula}) are discrete s-holomorphic spinors (defined on the same
double cover $\dOd$) satisfying the boundary condition (\ref{bc_0}) everywhere on $\pa\dOd$ except
the marked points $a_0,\dots,a_{2n}$. Moreover, for any $k=0,\dots,2n$, there is only one term
in the sum (\ref{Ind_Formula}) that fails to satisfy (\ref{bc_0}) at $a_k$. However, its value
at $a_k$ coincides with the left-hand side value $F_\Cvr(a_0,A;a_k)$. Hence, these two
\mbox{s-holo}morphic spinors are equal to each other due to Remark~\ref{remark: homogeneous},
since their difference satisfies the boundary condition (\ref{bc_0}) \emph{everywhere} on
$\pa\dOd$.
\end{proof}

Note that configurations contributing to $F_\Cvr(a_0,A;a_k)$ actually have $2n$ boundary points instead of $2n\!+\!2$. Thus, the right-hand side of (\ref{Ind_Formula}) can be expressed in terms of the similar observables with smaller number of marked points, and, recursively, in terms of the basic observables $F_\Cvr(a_k;\cdot)$. In order to do this in a convenient way, we need an additional notation. Recall that, for each $a_k$, we fix the complex number $\eta_k:=\eta_{a_k}$ according to (\ref{DefEtaA}). Then, we define the \emph{real} antisymmetric $(2n\!+\!1)\times (2n\!+\!1)$ matrix $G_\Cvr=[(G_\Cvr)_{j,k}]$ by setting, for $0\leq j<k\leq 2n$,
\begin{equation}
\label{eq: def_G}
(G_\Cvr)_{j,k}:= \frac{F_\Cvr(a_j;a_{k})}{iF_\Cvr(a_k;a_k)} = - \frac{F_\Cvr(a_k;a_{j})}{iF_\Cvr(a_j;a_j)}=:-(G_\Cvr)_{k,j}\,,
\end{equation}
where we have used that $(i\eta_k)^2e^{-i\wind(\gamma_{kj})}=\eta_j^2$ for any curve $\gamma_{kj}:\varpi(a_k)\to \varpi(a_j)$ running in $\Od$.
Further, let $G_\Cvr[k_1,\dots,k_s]$ denote the sub-matrix of $G_\Cvr$ obtained by \emph{removing} rows and columns with indices $k_1,\dots,k_s$.

\begin{proposition}
 \label{prop: obs_pfaff}
We have
\begin{equation}
\label{eq: obs_pfaff}
 F_\Cvr(a_0,A;z)= \pm\sum\limits_{k=0}^{2n} (-1)^{k}\,\mathrm{Pf}\,G_\Cvr[k]\cdot F_\Cvr(a_{k};z),
\end{equation}
with the sign depending on the choices made for $\eta_k$ and $\nu_s$. In particular, the sign is ``plus'' with conventions described
in Remark~\ref{rem: Pfs_sign} below.
\end{proposition}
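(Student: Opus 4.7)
The natural approach is to induct on $n$, combining Proposition~\ref{prop: Ind_formula} with the Laplace expansion of the Pfaffian. The base case $n=0$ is immediate: the sum has a single term $F_\Cvr(a_0;z)$ and the Pfaffian of the empty $0\times 0$ matrix equals $1$.

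For the inductive step, Proposition~\ref{prop: Ind_formula} reduces the claim to the scalar identities
\[
\frac{F_\Cvr(a_0,A;a_k)}{F_\Cvr(a_k;a_k)}\;=\;\pm(-1)^k\,\mathrm{Pf}\,G_\Cvr[k],\qquad k=0,1,\dots,2n.
\]
The crucial observation is that when $z=a_k$ coincides with a marked point, the mod-$2$ convention of Section~\ref{Sec: notations} collapses $\Conf_{a_0,A,a_k}$ to $\Conf_{a_0,A\setminus\{a_k\}}$, a set of configurations with only $2n$ boundary points. This same set underlies a lower-order multi-source observable $F_\Cvr(a_0,A\setminus\{a_k,a_{k'}\};a_{k'})$, where $a_{k'}$ is the arc-partner of $a_k$ in the pairing $\nu_s$; the case $k=0$ is handled similarly by using the freedom of Remark~\ref{rem: artificial_arcs} to re-pair and transfer the source from $a_0$ to some $a_{k'}$. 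Drawing all $\nu_s$ outside $\Od$, so that they contribute the same (constant) winding data and lifting sign to every configuration, the two observables have identical summands up to a global complex factor built from the $\eta$'s, $\wind(\nu_s)$, the lifting of $\nu_s$, and (in the $k=0$ case) the interplay between the loop that closes up in the first observable and the corresponding path $\gamma'$ in the second. This yields
\[
F_\Cvr(a_0,A;a_k)\;=\;\alpha_k\cdot F_\Cvr(a_0,A\setminus\{a_k,a_{k'}\};a_{k'}),
\]
with an explicit constant $\alpha_k$.

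Applying the inductive hypothesis to the right-hand side expresses it as a signed sum
$\sum_{j\ne k,k'}\pm\,\mathrm{Pf}\,G_\Cvr[\{k,k',j\}]\cdot F_\Cvr(a_j;a_{k'})$, where the Pfaffians are of the natural restriction of $G_\Cvr$ to indices outside $\{k,k'\}$. Substituting $F_\Cvr(a_j;a_{k'})=iF_\Cvr(a_{k'};a_{k'})\cdot(G_\Cvr)_{j,k'}$ from \eqref{eq: def_G}, this becomes $\sum_{j\ne k,k'}\pm\,(G_\Cvr)_{j,k'}\,\mathrm{Pf}\,G_\Cvr[\{k,k',j\}]$, which is precisely the Laplace expansion of $\mathrm{Pf}\,G_\Cvr[k]$ along the row/column indexed by $k'$. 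Hence the sum equals $\pm\,\mathrm{Pf}\,G_\Cvr[k]$, and the induction closes.

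The main obstacle is the sign bookkeeping: the phase constant $\alpha_k$, the sign conventions from the inductive Pfaffian on a permuted index set, and the combinatorial signs from the Laplace expansion must all line up. Since the whole formula is multilinear in the boundary data, the overall $\pm$ sign is determined by testing a single configuration, which also explains the sign ambiguity in the statement and is made precise by the conventions of Remark~\ref{rem: Pfs_sign}.
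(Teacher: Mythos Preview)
Your proposal is correct and follows essentially the same route as the paper: induction on $n$ via Proposition~\ref{prop: Ind_formula}, reduction to the scalar identities $F_\Cvr(a_0,A;a_k)/F_\Cvr(a_k;a_k)=\pm(-1)^k\,\mathrm{Pf}\,G_\Cvr[k]$, the arc-partner trick $F_\Cvr(a_0,A;a_k)=\alpha_k\,F_\Cvr(a_0,A[k,k'];a_{k'})$, and the Laplace expansion of the Pfaffian along the $k'$-row. The paper makes the sign bookkeeping you flag as ``the main obstacle'' explicit---computing $\alpha_k=(-1)^k\eta_k/(i\eta_{k'})$ from \eqref{eq: def_eta} and using that $\eta_k^{-1}F_\Cvr(a_k;a_k)$ is independent of $k$---and treats $k=0$ by shifting the source to $a_2$ and reapplying Proposition~\ref{prop: Ind_formula} rather than by re-pairing, but this is a minor variant of what you outline.
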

\begin{rem}
\label{rem: Pfs_sign}
 (i) The sign in the left-hand side of (\ref{eq: obs_pfaff}) depends on the choice of artificial arcs $\nu_s$, while for the sum in the right-hand side it depends on the sheets of $a_k$ and the signs of $\eta_{k}$, $k=1,\dots,2n$. We will assume that each $\nu_s$ lifts to a path from $a_{2s-1}$ to $a_{2s}$ on $\dOd$, and that the signs of $\eta_{k}$ are chosen so that
\begin{equation}
\label{eq: def_eta}
\exp\left[-\frac{i}{2}{\wind(\nu_s:a_{2s-1}\to a_{2s})}\right]=\frac{\eta_{2s}}{i\eta_{2s-1}}\,.
\end{equation}

\noindent (ii) One could also write (\ref{eq: obs_pfaff}) as the Pfaffian of a $(2n+2)\times(2n+2)$ matrix, obtained from $G_\Cvr$ by adding the last column with entries $F_\Cvr(a_k;z)$ and a corresponding row.
\end{rem}

\begin{proof}
We prove the claim by induction in $n$, starting with the trivial case $n=0$ and using (\ref{Ind_Formula}). For $k=1,\dots,2n$, let $k':=k+1$, if $k$ is odd, and $k':=k-1$, if $k$ is even. By definition, any configuration contributing to $F_\Cvr(a_0,A;a_{k})$ contains a curve running from $a_0$ to $a_{k'}$, appended with an artificial arc connecting $a_{k'}$ to $a_k$. Removing that arc, and taking into account (\ref{eq: def_eta}), we get the following identity:
$$
F_\Cvr(a_0,A;a_{k}) = (-1)^{k} \frac{\eta_k}{i\eta_{k'}} F_\Cvr(a_0,A[k,k'];a_{k'}),
$$
where $A[k,k']:=A\setminus\{a_{k},a_{k'}\}$. Thus, using the induction hypothesis and observing that $\eta_{k}^{-1}F(a_k,a_k)$ does not depend on $k$, we get (for $k=1,\dots,2n$)
\begin{align*}
 \frac{F_\Cvr(a_0,A;a_k)} {F_\Cvr(a_{k};a_{k})} & = (-1)^{k} \frac{\eta_k}{i\eta_{k'}} \sum\limits_{0\leq j\neq k,k'\leq 2n}(-1)^j \,\mathrm{Pf}\,G_\Cvr[j,k,k']\frac{F_\Cvr(a_j;a_{k'})}{F_\Cvr(a_{k};a_{k})}\\
&=(-1)^{k} \sum\limits_{0\leq j\neq k,k'\leq 2n}(-1)^j \,\mathrm{Pf}\,G_\Cvr[j,k,k']\frac{F_\Cvr(a_j;a_{k'})}{i F_\Cvr(a_{k'};a_{k'})}\\
&=(-1)^{k} \sum\limits_{0\leq j\neq k,k'\leq 2n}(-1)^{j+\1[{j>k'}]}\,\mathrm{Pf}\,G_\Cvr[j,k,k']\cdot(G_\Cvr)_{j,k'}\,.
\end{align*}
Due to the standard recursive formula for Pfaffians applied to the matrix $G_\Cvr[k]$, this can be written as
\begin{equation}
\label{eq: F0Ak/Fkk=}
\frac{F_\Cvr(a_0,A;a_k)} {F_\Cvr(a_{k};a_{k})}=  (-1)^{k}\,\mathrm{Pf}\,G_\Cvr[k],\quad k=1,\dots,2n.
\end{equation}
Similarly,
\[
F_\Cvr(a_0,A;a_0) = \frac{i\eta_0}{\eta_1}F_\Cvr(a_2,A[1,2];a_1)
= \frac{i\eta_0}{\eta_1}\sum\limits_{k=3}^{2n} \frac{F_\Cvr(a_2,A[1,2];a_k)} {F_\Cvr(a_{k};a_{k})}\,
{F_\Cvr(a_{k};a_{1})}.
\]
Applying (\ref{eq: F0Ak/Fkk=}) and using $\eta_0^{-1}F_\Cvr(a_0;a_0)=\eta_1^{-1}F_\Cvr(a_1;a_1)$, one arrives at
\[
\frac{F_\Cvr(a_0,A;a_0)}{F_\Cvr(a_0;a_0)}=
-\sum\limits_{k=3}^{2n} {(-1)^{k}\,\mathrm{Pf}\,G_\Cvr[0,1,k]}\cdot (G_\Cvr)_{k,1}= \mathrm{Pf}\,G_\Cvr[0].
\]
Plugging this and (\ref{eq: F0Ak/Fkk=}) into (\ref{Ind_Formula}), we obtain (\ref{eq: obs_pfaff}).
\end{proof}

\begin{corollary}
\label{cor: ratio_many}
With the notation as above and conventions of Remark~\ref{rem: signs}, let $\Od$ approximate $\Omega$ as $\mesh\to 0$, regularly at all marked points. Then,
\begin{equation}
\label{ratio_exp_conv_many} \frac{\E_{a_0^\mesh\dots
a_{2n+1}^\mesh}[\,\sigma(\Gamma)\,]}
{\E_+[\sigma(\Gamma)\,]}\,\rightarrow\,
\frac{\mathrm{Pf}\,[\,\zeta_{a_ja_k}^{-1}\vartheta_{a_ja_k}^{\Omega}(\gamma_1,\dots,\gamma_m)\,]\,_{0\le
j<k\le 2n+1}} {\mathrm{Pf}\,[\,\zeta_{a_ja_k}^{-1}\,]\,_{0\le j<k\le 2n+1}}\,,
\end{equation}
where the conformal invariants $\vartheta_{ab}^{\Omega}$ are given
by~(\ref{ratio_exp_conv}), and
$\zeta_{ab}=\zeta_{ab}^{\Omega}:=|f_0^\Omega(a,b)|^{-1}$.
\end{corollary}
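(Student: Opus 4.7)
My approach is to combine the discrete Pfaffian expansion of Proposition~\ref{prop: obs_pfaff} with the convergence results for ratios of observables proved above. I start from Propositions~\ref{prop: obs_corr} and~\ref{prop: obs_corr_many}, applied to both the cover $\Cvr$ and to the trivial one, which rewrite the left-hand side of~(\ref{ratio_exp_conv_many}) as
\[
\frac{\E^\mesh_{a_0\ldots a_{2n+1}}[\sigma(\Gamma)]}{\E^\mesh_+[\sigma(\Gamma)]}=\frac{F_\Cvr(a_0,A;a_{2n+1})\cdot F_0(a_0;a_0)}{F_0(a_0,A;a_{2n+1})\cdot F_\Cvr(a_0;a_0)}\,.
\]
Using Proposition~\ref{prop: obs_pfaff} with $z=a_{2n+1}$ and the $(2n+2)\times(2n+2)$ reformulation of Remark~\ref{rem: Pfs_sign}(ii), each multi-source observable equals $\pm iF(a_{2n+1};a_{2n+1})\cdot\mathrm{Pf}\,\widehat G$, where $\widehat G$ is the antisymmetric extension of $G$ with entries $(\widehat G)_{j,k}=F(a_j;a_k)/[iF(a_k;a_k)]$ for $0\le j<k\le 2n+1$. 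The $\pm$ signs for $\Cvr$ and for the trivial cover agree (they depend only on the choices of $\eta_k$ and $\nu_s$, not on $\Cvr$), and the prefactors $F(a_{2n+1};a_{2n+1})$ combine with $F_0(a_0;a_0)/F_\Cvr(a_0;a_0)$ via~(\ref{obs_corr_a}) to give a factor of $\E_+[\sigma(\Gamma)]$ that cancels the denominator; one thus arrives at the exact discrete identity
\[
\frac{\E^\mesh_{a_0\ldots a_{2n+1}}[\sigma(\Gamma)]}{\E^\mesh_+[\sigma(\Gamma)]}=\frac{\mathrm{Pf}\,\widehat G_\Cvr^\mesh}{\mathrm{Pf}\,\widehat G_0^\mesh}\,.
\]

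To take the limit $\mesh\to 0$, I examine the entries of these matrices from two complementary angles. By Proposition~\ref{prop: obs_corr},
\[
\frac{(\widehat G_\Cvr^\mesh)_{j,k}}{(\widehat G_0^\mesh)_{j,k}}=\frac{F_\Cvr(a_j;a_k)F_0(a_k;a_k)}{F_0(a_j;a_k)F_\Cvr(a_k;a_k)}=\frac{\E_{a_ja_k}^\mesh[\sigma(\Gamma)]}{\E_+^\mesh[\sigma(\Gamma)]}\,,
\]
which converges to $\vartheta^\Omega_{a_ja_k}(\gamma_1,\dots,\gamma_m)$ by Corollary~\ref{cor: ratio}. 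On the other hand, $(\widehat G_0^\mesh)_{j,k}=\pm\rZ_{a_ja_k}^\mesh/\rZ_+^\mesh$, and applying Remark~\ref{rem: ratio_ab_ad}(ii) twice --- once with common source $a_j$ and once (via the symmetry $\rZ_{a_ja_k}=\rZ_{a_ka_j}$) with common source $a_k$ --- yields
\[
\frac{\rZ_{a_ja_k}^\mesh}{\rZ_{a_{j_0}a_{k_0}}^\mesh}\,\mathop{\longrightarrow}_{\mesh\to 0}\,\frac{|f_0^\Omega(a_j,a_k)|}{|f_0^\Omega(a_{j_0},a_{k_0})|}=\frac{\zeta_{a_{j_0}a_{k_0}}}{\zeta_{a_ja_k}}
\]
for any fixed reference pair $j_0,k_0$.

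The final step uses gauge invariance of the Pfaffian ratio: conjugating both matrices by a scalar multiple of the identity $D^\mesh=d^\mesh I_{2n+2}$ multiplies each Pfaffian by the same factor $(d^\mesh)^{2n+2}$ and so leaves the ratio untouched. Choosing $(d^\mesh)^2:=\rZ_+^\mesh\,|f_0^\Omega(a_{j_0},a_{k_0})|/\rZ_{a_{j_0}a_{k_0}}^\mesh$, the previous paragraph gives $(d^\mesh)^2(\widehat G_0^\mesh)_{j,k}\to\pm\zeta_{a_ja_k}^{-1}$ and hence $(d^\mesh)^2(\widehat G_\Cvr^\mesh)_{j,k}\to\pm\zeta_{a_ja_k}^{-1}\vartheta^\Omega_{a_ja_k}$; continuity of the Pfaffian as a polynomial in the entries then produces~(\ref{ratio_exp_conv_many}). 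The one point that requires actual care rather than routine bookkeeping is the sign analysis: the per-entry $\pm$'s in $\widehat G_\bullet^\mesh$ depend on the sheet choices, on the $\eta_k$'s, and on the arcs $\nu_s$, and one has to check that they assemble into a rank-one pattern $s_{j,k}=\epsilon_j\epsilon_k$ that can be absorbed into a further gauge $d_k\mapsto\epsilon_k d_k$ (which preserves the Pfaffian ratio, since both numerator and denominator pick up the same factor $\prod_k\epsilon_k$). Under the coherent conventions fixed in Remarks~\ref{rem: bcinversion},~\ref{rem: signs}, and~\ref{rem: Pfs_sign}, this is the case, so all signs drop out and one obtains precisely the Pfaffian ratio stated in~(\ref{ratio_exp_conv_many}).
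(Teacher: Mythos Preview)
Your argument is correct and follows essentially the same route as the paper: rewrite the ratio of expectations as a ratio of multi-source observables via Propositions~\ref{prop: obs_corr} and~\ref{prop: obs_corr_many}, expand each into a $(2n{+}2)\times(2n{+}2)$ Pfaffian using Proposition~\ref{prop: obs_pfaff}, renormalize the entries by a common scalar, and pass to the limit entrywise using Corollary~\ref{cor: ratio} and Remark~\ref{rem: ratio_ab_ad}(ii). The only cosmetic differences are that the paper divides all entries by the fixed entry $(G_0)_{0,2n+1}$ rather than phrasing this as a scalar gauge, and that it applies~(\ref{ratio_ab_ad}) in one step (your explicit two-step chaining through a common source is a slightly more careful reading of what~(\ref{ratio_ab_ad}) literally provides).
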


\begin{proof}
Below we omit $\mesh$ for the shortness. Using the relations (\ref{obs_corr_many}) and (\ref{obs_corr_a}) for $\Cvr$ as in Proposition \ref{prop: obs_corr_many} and for the trivial cover, and then (\ref{eq: obs_pfaff}), we get
\begin{align*}
 \frac{\E_{a_0\dots a_{2n+1}}[\sigma(\Gamma)]} {\E_+[\sigma(\Gamma)]} & = \frac{F_{\Cvr}(a_0,A;{a_{2n+1}})\cdot F_{0}(a_{2n+1},a_{2n+1})}{F_{0}(a_0,A;{a_{2n+1}})\cdot F_{\Cvr}(a_{2n+1},a_{2n+1})}=\cr
& = \frac{\sum_{k=0}^{2n} (-1)^{k}\,\mathrm{Pf}\,G_{\Cvr}[k]\cdot {F_{\Cvr}(a_{k};a_{2n+1})}\cdot (i F_\Cvr(a_{2n+1};a_{2n+1}))^{-1}}{\sum_{k=0}^{2n} (-1)^{k}\,\mathrm{Pf}\,G_0[k]\cdot F_{0}(a_{k};a_{2n+1})\cdot (i F_0(a_{2n+1};a_{2n+1}))^{-1}} \cr
& = \frac{\sum_{k=0}^{2n} (-1)^{k}\,\mathrm{Pf}\,G_{\Cvr}[k]\cdot (G_\Cvr)_{k,2n+1}}{\sum_{k=0}^{2n} (-1)^{k}\,\mathrm{Pf}\,G_0[k]\cdot (G_0)_{k,2n+1}} = \frac{\mathrm{Pf}[\,(G_{\Cvr})_{j,k}\,]_{0\le j<k\le
2n+1}}{\mathrm{Pf}[\,(G_{0})_{j,k}\,]_{0\le j<k\le 2n+1}}\,,
\end{align*}
where $G_{\Cvr}$ and $G_{0}$ are given by (\ref{eq: def_G}) for corresponding double covers. This can be further rewritten as
\[
 \frac{\E_{a_0\dots a_{2n+1}}[\sigma(\Gamma)]} {\E_+[\sigma(\Gamma)]} =
\frac{\displaystyle\mathrm{Pf}\left[\frac{(G_\Cvr)_{j,k}}{(G_0)_{j,k}}\cdot
\frac{(G_0)_{j,k}}{(G_0)_{0,2n+1}}\right]_{0\le j<k\le 2n+1}}
{\displaystyle\mathrm{Pf}\left[\frac{(G_0)_{j,k}}{(G_0)_{0,2n+1}}\right]_{0\le
j<k\le 2n+1}}\,.
\]
Note that definition (\ref{eq: def_G}) and convergence (\ref{mainconv}) yield
\[
\frac{(G_\Cvr)_{j,k}}{(G_0)_{j,k}}=\frac{F_\Cvr(a_j;a_k)F_0(a_j;a_j)}{F_\Cvr(a_j;a_j)F_0(a_j;a_k)}\ \rightarrow\
\frac{f_\Cvr^\Omega(a_j,a_k)}{f_0^\Omega(a_j,a_k)}=\vartheta_{a_ja_k}^{\Omega}(\gamma_1,\dots,\gamma_m)
\]
as $\mesh\to 0$. Similarly, (\ref{eq: def_G}) and (\ref{ratio_ab_ad}) give
\[
\frac{(G_0)_{j,k}}{(G_0)_{0,2n+1}}=
\frac{|F_0(a_j;a_k)|}{|F_0(a_0;a_{2n+1})|} \rightarrow\
\frac{|f_0^\Omega(a_j,a_k)|}{|f_0^\Omega(a_0,a_{2n+1})|}=\frac{\zeta_{a_0a_{2n+1}}}{\zeta_{a_ja_k}},
\]
and the factors $\zeta_{a_0a_{2n+1}}$ cancel out in the ratio of Pfaffians.
\end{proof}

\begin{rem}

\noindent (i) One may assume that $f_0^\Omega(a,b)$ is normalized by (\ref{norm_h_a}); in this case one has a conformal covariance rule $\zeta^{\Omega}_{a,b}=\zeta^{\phi(\Omega)}_{\phi(a),\phi(b)}|\phi'(a)|^{1/2}|\phi'(b)|^{1/2}$. The linearity of Pfaffians implies that the normalization is not important, and that the ratio (\ref{ratio_exp_conv_many}) is conformally invariant.

\smallskip

\noindent (ii) Clearly, $f^{\Omega}_0$ does not change if one adds boundary singletons inside the domain. In particular, if  $\Omega=\C_+\setminus \{w_1,\dots,w_s\}$ and $a,b\in \R$, then one has $\zeta^\Omega_{a,b}=\sqrt{\pi}\,|b-a|$ (with the normalization given by (\ref{norm_h_a})).

\smallskip

\noindent (iii) The formula (\ref{ratio_exp_conv_many}) was predicted
by means of Conformal Field Theory, see \cite[equation~(17)]{BurkhardtGuim}.
\end{rem}

\setcounter{equation}{0}
\section{Explicit computations in the half-plane}
\label{Sec: explicit_half_plane}

In this section we explicitly compute the holomorphic spinors $f_{\Cvr}^{\C_+\setminus\{w_1,\dots,w_m\}}$. By (\ref{ratio_exp_conv}), this immediately gives us the quantities
\begin{equation}
\label{ThetaW1m} \vartheta(w_1,\dots,w_m):=
\vartheta_{\infty,0}^{\C_+\setminus\{w_1,\dots,w_m\}}(w_1,\dots,w_m)
\end{equation}
and, by conformal invariance of $\vartheta$'s, all the limits
\[
\lim_{\mesh\to 0}\frac{\E_{a^\mesh b^\mesh}[\,\sigma(w_1^\mesh)\dots \sigma(w_m^\mesh)\,]}
{\E_+[\,\sigma(w_1^\mesh)\dots \sigma(w_m^\mesh)\,]}= \vartheta(\phi(w_1),\dots,\phi(w_m))\,,
\]
for discrete domains $(\Omega^\mesh;a^\mesh,b^\mesh)$ approximating an arbitrary \emph{simply
connected} domain $(\Omega;a,b)$ and inner \emph{faces} $w^\mesh_j$ tending to
$w_j\in\Omega$, where $\phi:\Omega\to \C_+$ is a conformal map such that $\phi(a)=\infty$ and
$\phi(b)=0$. Due to the Pfaffian formula (\ref{ratio_exp_conv_many}), this result easily
extends to the case of $2n\!+\!2$ marked boundary points.

It is convenient to choose $a=\infty$. Specializing the boundary value problem \mbox{(\reflistA)--(\reflistD)} and  (\ref{norm_h_a}) to the case $\Omega=\C_+$, $a=0$ and using the conformal map $z\mapsto -z^{-1}$, we get the following conditions for $f:=f_{\Cvr}^{\C_+\setminus\{w_1,\dots,w_m\}}(\infty,\cdot)$:

\renewcommand\reflist[1]{#1$\vphantom{a}^\circ$}
\def\reflistAC{\reflist{a}}
\def\reflistBC{\reflist{b}}
\def\reflistCC{\reflist{c}}
\def\reflistDC{\reflist{d}}
\begin{list}{(\reflist{\alph{Listcounter}})}
{\usecounter{Listcounter} \setcounter{Listcounter}{0} \labelsep 6pt \itemindent 0pt}
\item $f$ is a spinor in $\C_+\setminus\{w_1,\dots,w_m\}$ branching around each of $w_j$;
\item $f(\zeta)\in\R$ for any $\zeta\in\R$;
\item $(f(z))^2=O(|z-w_j|^{-1})$ as $z\to w_j$ and $\res{}_{z=w_j}(f(z))^2\in i\R_+$ for all $j$;
\item $f(z)=1+O(z^{-1})$ as $z\to\infty$.
\end{list}
Note that for $m=0$ (i.e., the trivial cover), we have an obvious solution $f_0\equiv 1$.

\smallskip

In order to find $f$, we introduce an auxiliary spinor
\[
f_{w_1,\dots,w_m}(z):=\cB_{w_1}(z)\cdot \dots\cdot \cB_{w_m}(z)\,,\qquad
\cB_{w}(z):=\frac{z-\Re \, w}{[(z\!-\!\overline{w})(z\!-\!w)]^{\frac{1}{2}}}\,.
\]
Note that it satisfies (\reflistAC), (\reflistBC) and (\reflistDC). Moreover, it has real zeros at $t_j=\Re\, w_j$; thus, (\reflistAC), (\reflistBC) and (\reflistDC) also hold for the product $f_{w_1,\dots,w_m}(z)\cdot g(z)$, where $g(z)$ is any function of the form
\begin{equation}
\label{g_as_fractions} g(z)\equiv 1+\sum_{j=1}^m \frac{\lambda_j}{t_j\!-\!z},\qquad
\lambda_k\in\R
\end{equation}
(and if some $t_j$ coinside, we can add higher-order poles, so that $g(z)$ is always a linear combination of $m$ linearly independent functions). We are looking for parameters $\lambda_j$ such that $f_{w_1,\dots,w_m}(z)\cdot g(z)$ satisfy (\reflistC) as well. Denote
\[
R_k:= [-2i\res\nolimits_{z=w_k}(f_{w_1,\dots,w_m}(z))^2]^{\frac{1}{2}}.
\]
Then, the condition (\reflistC) for $f_{w_1,\dots,w_m}(z)\cdot g(z)$ can be restated as
\begin{equation}
\label{system} \Im \left[R_k\cdot g(w_k)\right]=\Im R_k + \sum_{j=1}^m
\lambda_j\cdot\Im\frac{R_k}{t_j\!-\!w_k}=0\quad \text{for~all~}k=1,\dots,m\,.
\end{equation}

Note that (\ref{system}) is an $m\times m$ linear system in $\lambda_j$. We argue that this system is always non-degenerate.
Indeed, if $\lambda_j^0$ is a solution to the corresponding \emph{homogeneous} system, then $f(z):=[\sum_{j=1}^m \lambda_j^0/(t_j-z)]\cdot
f_{w_1,\dots,w_m}(z)$ is a spinor satisfying \mbox{(\reflistAC)--(\reflistCC)} and such that
$f(z)=O(z^{-1})$ as $z\to\infty$. But it follows from the proof of Lemma~\ref{lemma: uniqueness_bis} that any such spinor is identically zero (after mapping to a bounded domain, the condition at infinity yields boundedness near $a$). Thus, $\lambda_j^0\equiv 0$.

Taking into account that one has explicitly
\[
R_k=({\Im\,w_k})^{\!\frac{1}{2}}\cdot\prod_{j\ne k}\cB_{w_j}(w_k)=
({\Im\,w_k})^{\!\frac{1}{2}}\cdot\prod_{j\ne k}
\frac{(w_k\!-\!\Re\,w_j)}{[(w_k\!-\!\overline{w}_j)(w_k\!-\!w_j)]^{\frac{1}{2}}}\,.
\]
and solving the linear system (\ref{system}), one obtains $\lambda_j$ as ratios of certain explicit
$m\times m$ determinants (depending of $w_j$'s), and then the ratio of spin correlations (\ref{ThetaW1m}) is given by
\[
\vartheta(w_1,\dots,w_m)=f(0)= g(0)\cdot \prod_{j=1}^m\cB_{w_j}(0) =\biggl[1+\sum_{j=1}^m
\frac{\lambda_j}{\Re\,w_j}\biggr]\cdot \prod_{j=1}^m\frac{\Re\,w_j}{|w_j|}\,.
\]
Here we have used conventions of Remark \ref{rem: bcinversion} to determine the sheet of the double
cover of $\C_+\setminus\{w_1,\dots,w_n\}$ (that is, signs of the square roots):
in order to get the value $f(0)$, one starts with the value
$f(\infty)=+1$ and continuously moves the boundary point to $0$ along
the \emph{counterclockwise} boundary arc $(-\infty,0)$, thus arriving to
$\cB_{w_j}(0)=(-\Re\, w_j)/(-|w_j|)$ . The other way to fix the sign is given by
taking all $w_j$ close to the boundary arc $(0;+\infty)$: this should yield a positive correlation as that arc carries ``+'' boundary conditions.


\begin{rem}
In particular, for a \emph{single} point $w\in\C$ (i.e., for the case $m=1$) one has
$t=\Re\,w$, $R=(\Im\,w)^{1/2}\in\R$, thus $\lambda=0$ and
\[
\vartheta(w)=
\frac{\Re\,w}{|w|}=\cos\,[\pi\text{hm}_{\C_+}(w,\R_-)]\,.
\]
Since harmonic measure is conformally invariant, this implies
\[
\frac{\E_{a^\mesh b^\mesh}[\,\sigma(w^\mesh)\,]} {\E_+[\,\sigma(w^\mesh)\,]} \to
\cos\,[\pi\text{hm}_{\Omega}(w,(ab))]\,,
\]
if $(\Od;a^\mesh,b^\mesh)$ approximate $(\Omega;a,b)$ and faces $w^\mesh$ tend to an inner
point $w\in\Omega$.
\end{rem}


\begin{thebibliography}{DHN11}

\bibitem[BD10]{BeffaraDuminil10}
Vincent Beffara and Hugo Duminil{-C}opin.
\newblock The critical temperature of the {I}sing model on the square lattice,
  an easy way.
\newblock {\em arXiv:1010.0526}, 2010.

\bibitem[BG93]{BurkhardtGuim}
T.~Burkhardt and I.~Guim.
\newblock Conformal theory of the two-dimensional {I}sing model with
  homogeneous boundary conditions and with disordred boundary fields.
\newblock {\em Phys. Rev. B}, 47(21):14306--14311, 1993.

\bibitem[CHI12]{CHHI}
Dmitry Chelkak, Cl\'ement Hongler, and Konstantin Izyurov.
\newblock Conformal invariance of spin correlations in the planar Ising model,
\newblock {\em arXiv:1202.2838}, 2012.

\bibitem[CS11]{CHS}
Dmitry Chelkak and Stanislav Smirnov.
\newblock Discrete complex analysis on isoradial graphs.
\newblock {\em Advances in Mathematics}, 228(3), 1590--1630, 2011.

\bibitem[CS12]{CHS2}
Dmitry Chelkak and Stanislav Smirnov.
\newblock Universality in the 2{D} {I}sing model and conformal invariance of
  fermionic observables.
\newblock {\em Inv. Math}, 189(3): 515-580, 2012.

\bibitem[DHN11]{DHNolin}
Hugo Duminil{-C}opin, Cl\'ement Hongler, and Pierre Nolin.
\newblock Connection probabilities and {RSW}-type bounds for the
  two-dimensional {FK} {I}sing model.
\newblock {\em Communications on Pure and Applied Mathematics}, 64(9): 1165-1198, 2011.

\bibitem[Hon10]{HThesis}
Cl\'ement Hongler.
\newblock Conformal invariance of {I}sing model correlations.
\newblock {\em Ph.D. thesis}, 2010.

\bibitem[HP12]{HonPhong}
Cl\'ement Hongler, Duong H. Phong,
\newblock Hardy spaces and boundary conditions from the Ising model,
\newblock {\em Mathematische Zeitschrift}, 2012.

\bibitem[HS10]{HonSmi}
Cl\'ement Hongler and Stanislav Smirnov.
\newblock The energy density in the planar {I}sing model.
\newblock {\em arXiv:1008.2645v2}, 2010.

\bibitem[Izy11]{IzInt}
Konstantin Izyurov.
\newblock Holomorphic spinor observables and interfaces in the critical Ising mode
\newblock {\em Ph.D. thesis}, 2011.

\bibitem[KC71]{KadanoffCeva}
Leo~P. Kadanoff and Horacio Ceva.
\newblock Determination of an operator algebra for the two-dimensional {I}sing
  model.
\newblock {\em Phys. Rev. B (3)}, 3:3918--3939, 1971.

\bibitem[Ken10]{Ken10}
Richard Kenyon.
\newblock Spanning forests and the vector bundle laplacian.
\newblock {\em Ann. Prob.} Volume 39(5), 1983-2017,  2011.

\bibitem[Ken11]{Ken11}
Richard Kenyon.
\newblock Conformal invariance of loops in the double-dimer model.
\newblock {\em arXiv:1105.4158}, 2011.

\bibitem[Mer01]{Mercat01}
Christian Mercat.
\newblock Discrete {R}iemann surfaces and the {I}sing model.
\newblock {\em Comm. Math. Phys.}, 218(1):177--216, 2001.

\bibitem[Pal07]{Palmer07}
John Palmer.
\newblock {\em Planar {I}sing correlations}, volume~49 of {\em Progress in
  Mathematical Physics}.
\newblock Birkh\"auser Boston Inc., Boston, MA, 2007.

\bibitem[Pom92]{Pomm}
Christian Pommerenke.
\newblock {\em Boundary behaviour of conformal maps},
\newblock Springer-Ferlag, 1992.

\bibitem[RC06]{RivaCardy06}
V.~Riva and J.~Cardy.
\newblock Holomorphic parafermions in the {P}otts model and stochastic
  {L}oewner evolution.
\newblock {\em J. Stat. Mech. Theory Exp.}, (12):P12001, 19 pp. (electronic),
  2006.

\bibitem[Smi06]{Smirnov06}
Stanislav Smirnov.
\newblock Towards conformal invariance of 2{D} lattice models.
\newblock In {\em International {C}ongress of {M}athematicians. {V}ol. {II}},
  pages 1421--1451. Eur. Math. Soc., Z\"urich, 2006.

\bibitem[Smi10]{Smirnov10}
Stanislav Smirnov.
\newblock Conformal invariance in random cluster models. {I}. {H}olmorphic
  fermions in the {I}sing model.
\newblock {\em Ann. Math. (2)}, 172:1435--1467, 2010.

\end{thebibliography}

\def\cprime{$'$}

\end{document}